\newtheorem{theorem}{Theorem}
\newtheorem{claim}[theorem]{Claim}
\newtheorem{lemma}[theorem]{Lemma}
\newtheorem{proposition}[theorem]{Proposition}
\newtheorem{corollary}[theorem]{Corollary}
\def\vect#1{\mathbf{#1}}
\title{The growth rate over trees of any family of set defined by a monadic second order formula is semi-computable}
\author{Matthieu Rosenfeld\footnote{Aix Marseille Université, Université de Toulon, CNRS, LIS, Marseille, France}}
\begin{document}
\maketitle
\begin{abstract}
Monadic second order logic can be used to express many classical notions of sets of vertices of a graph as for instance: dominating sets, induced matchings, perfect codes, independent sets or irredundant sets.
Bounds on the number of sets of any such family of sets are interesting from a combinatorial point of view and have algorithmic applications.  Many such bounds on different families of sets over different classes of graphs are already provided in the literature. 
In particular, Rote recently showed that the number of minimal dominating sets in trees of order $n$ is at most $95^{\frac{n}{13}}$ and that this bound is asymptotically sharp up to a multiplicative constant.
We build on his work to show that what he did for minimal dominating sets can be done for any family of sets definable by a monadic second order formula. 

We first show that, for any monadic second order formula over graphs that characterizes a given kind of subset of its vertices, the maximal number of such sets in a tree can be expressed as the \textit{growth rate of a bilinear system}. This mostly relies on well known links between monadic second order logic over trees and tree automata and basic tree automata manipulations. 
Then we show that this ``growth rate'' of a bilinear system can be approximated from above.
We then use our implementation of this result to provide bounds (some sharp and some almost sharp) on the number of independent dominating sets, total perfect dominating sets, induced matchings, maximal induced matchings, minimal perfect dominating sets,  perfect codes and maximal irredundant sets on trees.
We also solve a question from D. Y. Kang et al. regarding $r$-matchings and improve a bound from Górska and Skupień on the number of maximal matchings on trees.
Remark that this approach is easily generalizable to graphs of bounded tree width or clique width (or any similar class of graphs where tree automata are meaningful).
\end{abstract}

\section{Introduction}
Monadic second order logic can be used to express many classical notions of sets of vertices of a graph as for instance: dominating sets, induced matchings, perfect codes, independent sets or irredundant sets.
Bounds on the number of such subsets are interesting from a combinatorial point of view and have algorithmic applications. 
Lower bounds on the number of such sets have direct implications on the enumeration complexity, but the range of algorithmic applications is much wider than that.
For instance, the celebrated upper-bound by Moon and Moser of $3^{\frac{n}{3}}$
on the number of maximal independent sets in a graph of order $n$ 
was used by Lawler to give a graph coloring algorithm in 
time $O^*((1+3^{\frac{1}{3}})^n)$ which was the fastest coloring algorithm for 25 years \cite{Moon1965,Lawler1976ANO}.
Eppstein improved the algorithm running time in 2003 by using an upper-bound on the number of small maximal independant sets \cite{eppstein} (this result as since been improved a few times).

One easily verifies that the result of Moon and Moser is sharp since a collection of triangle has $3^{\frac{n}{3}}$ maximal independent sets. The same question on other subsets (in particular variations of dominating sets) received a lot of attention, but we do not have many sharp bounds. For instance, it was showed in \cite{mindomsetgraphs} that there are at most $1.7159^n$ minimal dominating sets in a graph of order $n$, but no graph is known with more that $1.5704^n$ minimal dominating sets. On the other hand, it is easier to obtain sharp bounds when restricting to some smaller classes of graphs \cite{COUTURIER2015634, indepDom}. Trees is a natural class of graphs for this kind of questions.
Recently Rote showed that the maximal number of minimal dominating sets of trees of order $n$ grows in  $\theta(95^{\frac{n}{13}})$ \cite{r-mnmds-19,arxivRote}.
Given a family of graphs $\mathcal{G}$ and a function $\mathcal{F}$ that maps every graph to a family of subsets of its vertices, we call the growth rate of $\mathcal{F}$ over $\mathbb{G}$ the quantity given by $$\limsup_{n\rightarrow\infty}\max_{G\in\mathcal{G},|G|=n} |\mathcal{F}(G)|.$$
The result from Rote implies that the growth rate of the number of minimal dominating sets over trees is $95^{\frac{1}{13}}$.
On the other hand, Golovach et Al. computed the growth rate of some families of sets over paths \cite{sigmarhopath}.
They used an automatic approach to compute the growth rate of different (maximal or minimal) $(\sigma,\rho)$-dominating sets over paths. They also found a recurrence relation and computed the growth rate of maximal irredundant sets over paths.
Here, we generalize and automatize the approach of Rote  and we show that the growth rate over trees of any family of set that can be defined in monadic second order logic is semi-computable. That is, there exists an algorithm that takes as input the monadic second order formula describing the formula and that outputs a decreasing sequence of upper-bounds of the growth rate that converges toward the actual value of the growth rate. For some families, we are able to use the algorithm to precisely compute the growth rate.

We start by introducing some definitions and notations in Section \ref{secdef}.
Then, we show in Section \ref{MSOandtreeautomaton} that given an MSO formula over graphs with a free second order variable there is a tree automaton that recognizes exactly the pairs $(T,S)$ such that $T$ is a tree and $S$ is a subset of $T$ that satisfies the formula.
In Section \ref{msotolinearalg}, we show that the number of terms accepted by the tree automaton has a bilinear inductive expression. This bilinear inductive expression can be easily computed from the tree automaton.
Up to this point everything can be considered as folklore, but in the rest of Section \ref{msotolinearalg} we show that the growth rate of this quantity is semi-computable.
The computation of this quantity relies on finding convex polytopes that are fixed point of the bilinear map. This idea was used by Rote  \cite{r-mnmds-19,arxivRote} and we show that in general it can  be used to approximate the growth rate of a bilinear system from above.
Moreover, examination of the vertices of the polytope provide extrem examples of trees and it allowed us to find constructions that are out of reach of a simple exhaustive search.

In Section \ref{results}, we use an implementation of the decision procedure to provide some bounds. In particular, we compute sharp bounds for the number of independent dominating sets, total perfect dominating sets, $r$-matching for some values of $r$, minimal perfect dominating sets and perfect codes. 
Rote asked whether one could use his approach to compute a sharp upper bound on the number of maximal irredundant sets \cite{r-mnmds-19,arxivRote}. We were not able to do so but we show that the growth rate of the maximal number of maximal irredundant sets is between $\frac{14}{9}\approx1.555556$ and $48^{\frac{1}{9}}\approx 1.53746$. We also answer a question from D. Y. Kang et al. asking for which $r\in\{3,4,5,7,9\}$ is the maximal number of $r$-matchings over trees reached by paths \cite{inducedmatching}. The authors of \cite{GORSKA20071367} gave bounds on the maximal number of maximal matchings of trees and we tighten the bounds that they provided. We also obtain good bounds for the growth rate of the number of maximal induced matchings in trees (the gap between the upper and the lower bound is less than $10^{-25}$).  

We end our paper with a discussion regarding the possible generalizations. 
In particular, we mention the fact that the approach is easily generalizable to graphs of tree-width (or clique-width) at most $k$ for any constant $k$ and that monadic second order logic can be replaced by counting monadic second order logic.

\section{Definitions and notations}\label{secdef}
For $X\in\{\mathbb{Z},\mathbb{Q},\mathbb{R}\}$, we denote by $X_{\ge0}$ (resp. $X_{>0}$) the subset of all the non-negative (resp. positive) elements of $X$.
The \emph{order} of a graph $G$ denoted by $|G|$ is the number of vertices of $G$.

\subsection{Monadic second order Logic, definable sets and growth rate}

Monadic second-order logic is a restriction of second-order logic where the second-order quantification is restricted to quantification over sets.
We restrict ourselves to $MSO_1$, that is second order quantifiers can only be used over sets of vertices but not over sets of edges. The syntax of $MSO_1$ is given by
$$\phi:= E(x,y) | x=y | X=Y | x\in X | \lnot \phi | \phi_1 \land \phi_2|  \phi_1 \lor \phi_2 | \phi_1 \Rightarrow \phi_2 
|\forall x.\phi|\forall X.\phi | \exists x.\phi |\exists X.\phi,$$ where lower-case letters are first order variables and upper-case letters are second order variables (with the exception of $E$ that denotes the adjacency relation). 
The semantic is defined as expected by interpreting $E(x,y)$ as the adjacency relation. 

A graph $G$ \emph{models} a closed formula $\Psi$, if $\Psi$ is true when interpreted over $G$ which we denote by $G\models \Psi$. Similarly, given a formula $\Psi$ with one free variable $X$, a graph $G$ and a set $S$ of vertices of $G$, we say that $G,S$ \emph{model} $\Psi$ (and we write $G,S\models \Psi$) if $\Psi$ is true when interpreted over $G$ with $X$ interpreted as $S$.

Let $D$ be a function that maps any graph to a family of subsets of its vertices (we will call such a function a \emph{family of sets}).
We say that $D$ is \emph{$MSO_1$ definable} if there exists an $MSO_1$ formula $\Psi$ with one free variable of the second order,
such that for all graph $G$, $$D(G)=\{S\subseteq G: G,S\models \Psi\}.$$ 
For instance, a set $S$ is a dominating set of a graph if every vertex is in $S$ or has a neighbor in $S$. Hence dominating sets can be defined by the $MSO$ formula $$\forall x, (x\in S \lor \exists y, (y\in S \land E(x,y)))\,.$$

The \emph{growth rate} of $D$ (or of $\Psi$) over a family of graphs $\mathcal{G}$ is the quantity defined as:
$$\gamma(D)=\limsup_{l\rightarrow \infty}\max_{G\in \mathcal{G}, |G|=l} \left| D(G)\right|^\frac{1}{l}$$
or equivalently
$$\gamma(\Psi)=\limsup_{l\rightarrow \infty}\max_{G\in \mathcal{G}, |G|=l} |\left\{ S\subseteq G:G,S\models \Psi\right\}|^\frac{1}{l}\,.$$
We will show that over trees this quantity is approximable from above. 

\section{Reduction to WS2S and tree automata}\label{MSOandtreeautomaton}
In this section, we use the decidability of WS2S to show that the family of sets that satisfy any given $MSO_1$ formula is recognized by a deterministic binary-tree automaton.
In particular, only Corollary \ref{corphitoA} and the definition of deterministic binary-tree automaton are useful for the other sections.
Moreover, the content of this whole section is folklore and can be retrieved from \cite{tata} so we allow ourselves not to be absolutely formal.

First, we need to introduce WS2S (Weak Monadic second order logic with 2 successors).\footnote{The definition we give here is not exactly WS2S.
Formally, the terms of WS2S are words over $\{0,1\}$, each such words is to be understood as a position in the infinite binary-tree ($01$ would be the left right child of the left child of the root). By using a slightly different convention we avoid the task of defining a binary tree as being a set of words. For more details on WS2S see \cite{tata} for instance.}
The objects of WS2S are \emph{binary ordered trees}, i.e., rooted trees in which each node has at most one right child and at most one left child.
The first order elements are the vertices of the tree and the second order elements are sets of vertices.
The syntax of formulas of WS2S is given by:
$$\phi:=  x=y | X=Y | x\in X |x\operatorname{ch}_1y|x\operatorname{ch}_2y|\lnot \phi | \phi_1 \land \phi_2|  \phi_1 \lor \phi_2 | \phi_1 \Rightarrow \phi_2 
|\forall x.\phi|\forall X.\phi | \exists x.\phi |\exists X.\phi$$ where $\phi_1$
where $x$ and $y$ are first order variables and $X$ and $Y$ are second order variables and $\phi_1$ and $\phi_2$ are formulas
The semantic is defined as expected by interpreting $\operatorname{ch}_1$ and $\operatorname{ch}_2$ to be respectively the left and right child relations (i.e., $x\operatorname{ch}_1y$ is true if $y$ is the right left child of $x$). If a closed formula $\Psi$ is true when interpreted over a binary ordered trees $T$, we say that $T$ models $\Psi$ and we write $T\models\Psi$.

This notion will be useful as we will see later any set defined by a WS2S formula is recognizable by a deterministic binary-tree automaton. Let us first show that we can reduce any $MSO_1$ formula to an equivalent WS2S formula.
\begin{figure}
\centering
    \includegraphics{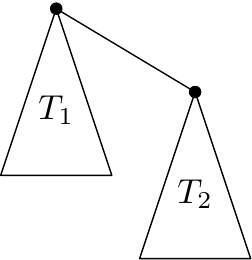}
    \caption{Illustration of J\label{illJ}}
\end{figure}
\begin{figure}
\centering
\begin{minipage}[c]{0.45\textwidth}
\centering
 \includegraphics[scale=1.5,page=2]{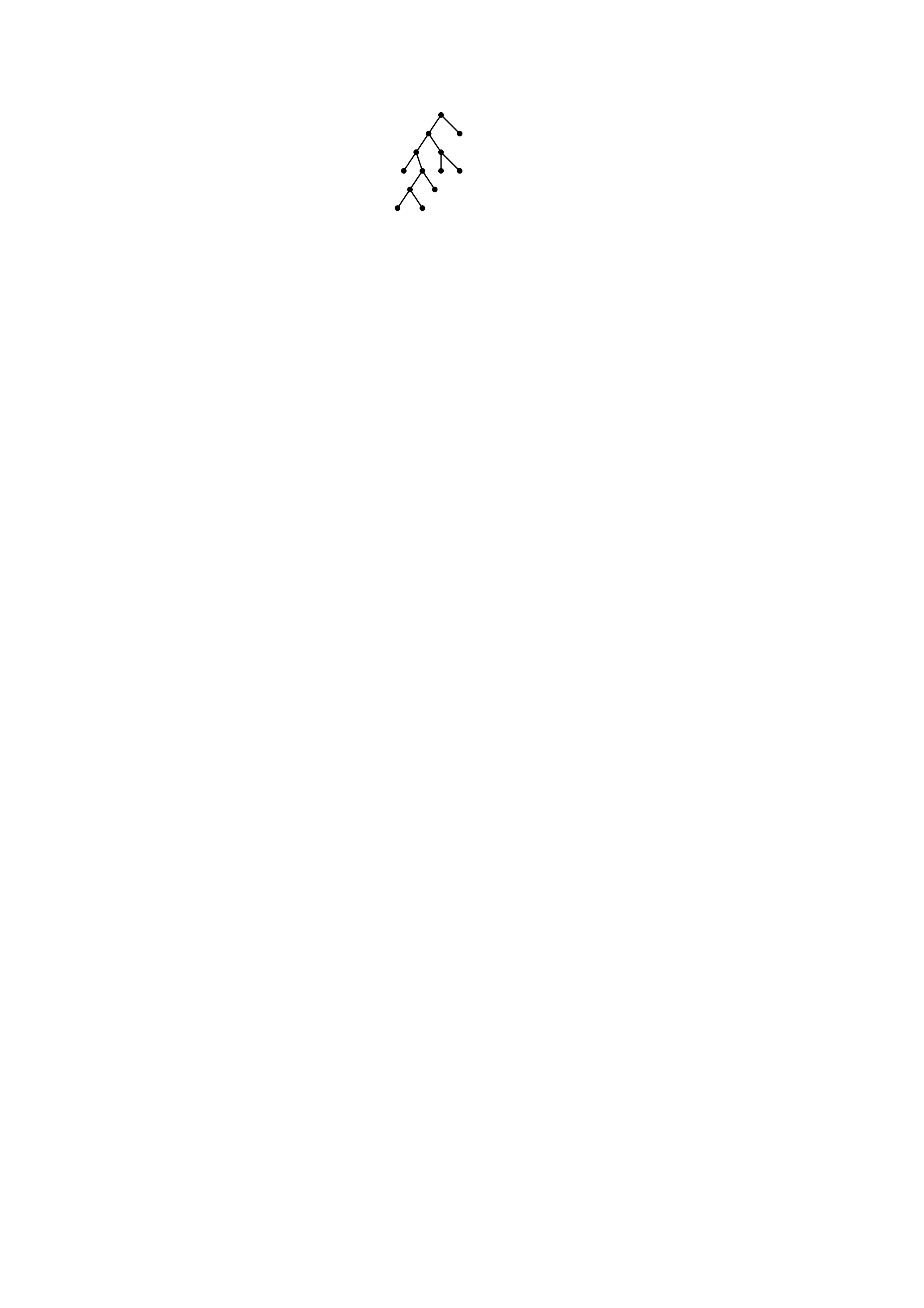}
 \subcaption{The proper binary ordered tree}
    
\end{minipage}
\begin{minipage}[c]{0.45\textwidth}
\centering
 \includegraphics[scale=3,page=11]{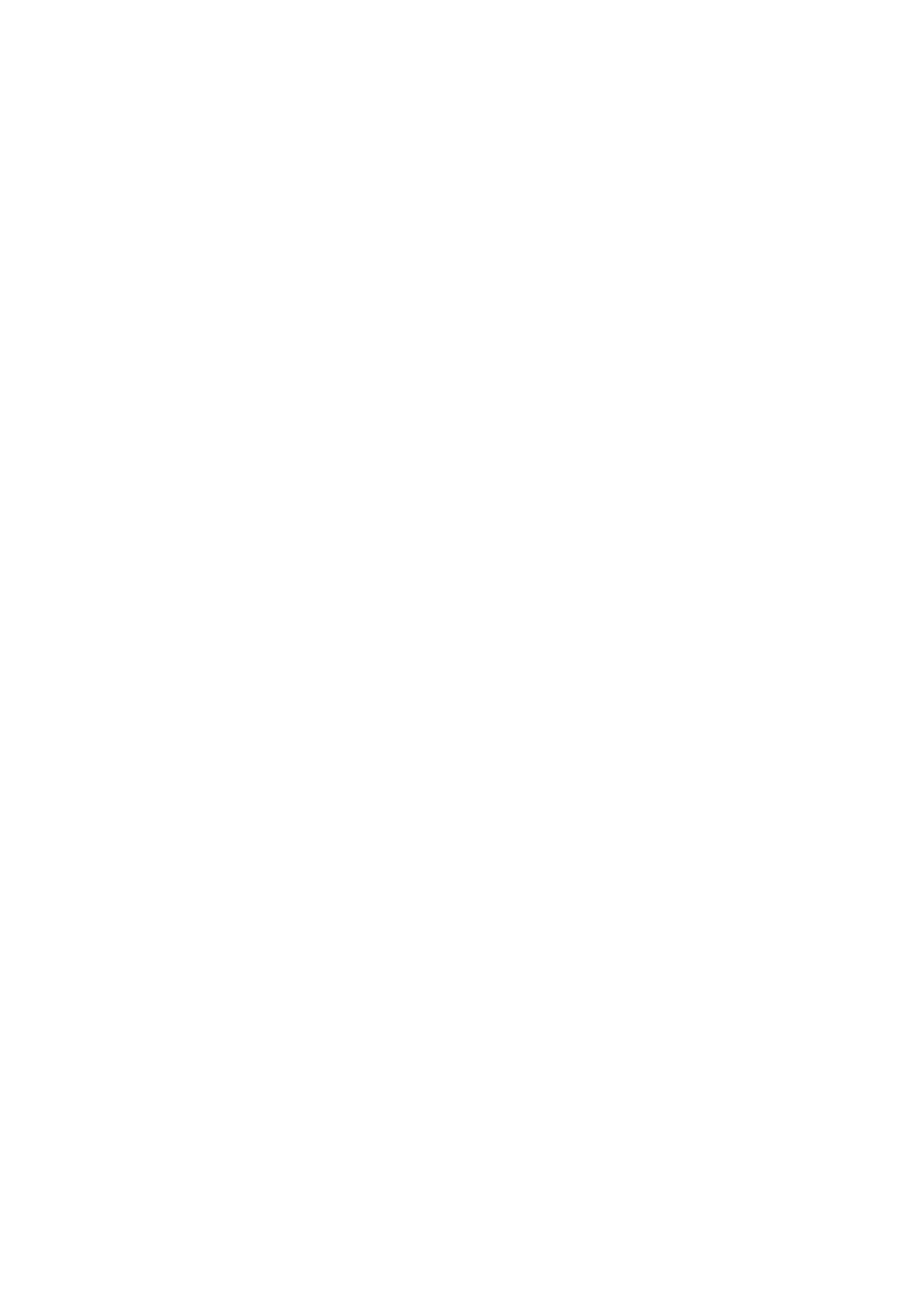}
 \subcaption{The rooted tree}
\end{minipage}
\caption{Illustration of the surjection from proper binary ordered trees to rooted trees\label{illbij}}
\end{figure}

 \newcommand\bijJ{\vartheta}
 \newcommand\basetree{\perp}

Let $J$ be the operation, illustrated in Figure \ref{illJ}, from pairs of rooted trees to rooted trees such that for all trees $T_1,T_2$, $J(T_1,T_2)$ is obtained by taking the disjoint union of $T_1$ and $T_2$ adding and edge between the roots of $T_1$ and $T_2$ and where the root of $J(T_1,T_2)$ is the root of $T_1$.
It is clear that every rooted tree can be obtained by applying $J$ multiple times starting from copies of the singleton tree. In other words, if we let $\basetree$ be the singleton tree, then the set of all rooted trees is the smallest set that contains $\basetree$ and that is closed under $J$.
Thus there is a natural surjections from terms over $\{J,\basetree\}$ to rooted trees. Moreover, the syntax tree of terms over $\{J,\basetree\}$ gives a simple bijection between terms over  $\{J,\basetree\}$ and proper binary ordered trees (a rooted binary tree is \emph{proper} if each node has either 0 or 2 children).

This gives us a natural surjection $\bijJ$ from proper binary  ordered trees to trees (Figure \ref{illbij} is an example of this bijection). 
Given a proper binary ordered tree $T$,  we also have a natural bijection 
$\bijJ_T$ that maps the leaves of $T$ to the nodes of $\bijJ(T)$. 

Given an ordered binary tree $T$ and two nodes $t_1$ and $t_2$, we say that the node $t_1$ is a \emph{right ancestor} of 
$t_2$ if $t_1=t_2$ or if the left child of $t_1$ is a right ancestor of $t_2$. The following Claim, illustrated by the colors in Fig. \ref{illbij}, is a simple consequence of the definition.
\begin{claim}\label{rightancestors}
Let $T$ be a proper binary ordered tree and  $u,v$ be two leaves of $T$. Then $\bijJ_T(u)$ and $\bijJ_T(v)$ are adjacent in $\bijJ(T)$ 
if and only if there exists $u'$ and $v'$ in $T$ that are respectively right-ancestors of $u$ and $v$ such that one of $u'$ and $v'$ is the right child of the other one in $T$.
\end{claim}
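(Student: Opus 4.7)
The plan is to prove the claim by structural induction on the proper binary ordered tree $T$. The base case, where $T$ is a single leaf, is vacuous since there is no pair of distinct leaves. For the inductive step, I write $T$ as a root with left subtree $T_1$ and right subtree $T_2$ (both proper). By construction of $\bijJ$, the rooted tree $\bijJ(T)$ is the disjoint union of $\bijJ(T_1)$ and $\bijJ(T_2)$ together with a single new edge between the root of $\bijJ(T_2)$ and the root of $\bijJ(T_1)$, and the latter is the root of $\bijJ(T)$. Moreover, for any leaf $u$ of $T_i$ one has $\bijJ_T(u) = \bijJ_{T_i}(u) \in \bijJ(T_i)$.

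The induction rests on one auxiliary observation, itself immediate by a short induction on $T_i$: for a leaf $u$ of $T_i$, the image $\bijJ_T(u)$ equals the root of $\bijJ(T_i)$ if and only if the root of $T_i$ is a right ancestor of $u$, i.e., $u$ is reached from that root by an all-left descent (which exists and is unique since $T_i$ is proper). I then case-split on the location of $u$ and $v$. If both lie in the same $T_i$, the only edge of $\bijJ(T)$ not already in $\bijJ(T_1) \cup \bijJ(T_2)$ connects the root of $\bijJ(T_i)$ to a node outside $\bijJ(T_i)$, so $\bijJ_T(u)$ and $\bijJ_T(v)$ are adjacent in $\bijJ(T)$ iff they are adjacent in $\bijJ(T_i)$, and the induction hypothesis on $T_i$ transfers directly to $T$. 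In the cross case $u \in T_1$, $v \in T_2$, the only edge between the two components in $\bijJ(T)$ is the new one, so adjacency is equivalent to $\bijJ_T(u)$ and $\bijJ_T(v)$ being the respective roots; by the auxiliary observation this is in turn equivalent to the root of $T_1$ being a right ancestor of $u$ and the root of $T_2$ being a right ancestor of $v$. Taking $u'$ to be the root of $T$ (a right ancestor of $u$ because its left child is the root of $T_1$) and $v'$ to be the root of $T_2$ (which is the right child of $u'$) then yields the desired witness pair.

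The only real point requiring attention, and hence the main obstacle, is the converse direction: one must verify that any candidate witness $(u', v')$ in $T$ really corresponds to one of the handled configurations. This follows from a single general observation: the chain of right ancestors of a leaf $w$ cannot climb past a node that is itself a right child of its parent, because each upward step along the chain uses a left-child edge. Consequently, right ancestors of a leaf of $T_2$ stay inside $T_2$, while right ancestors of a leaf of $T_1$ lie in $T_1$ or equal the root of $T$; combined with the requirement that one of $u', v'$ be the right child of the other, this immediately excludes every configuration besides those already treated, and the induction closes.
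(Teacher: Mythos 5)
Your proof is correct. The paper does not actually prove Claim~\ref{rightancestors}, declaring it ``a simple consequence of the definition,'' so there is no argument to compare against; your structural induction on $T$, with the two auxiliary observations that $\bijJ_T(u)$ is the root of $\bijJ(T_i)$ precisely when the root of $T_i$ is a right ancestor of $u$, and that the chain of right ancestors of a leaf cannot climb past a node that is a right child of its parent, is a clean and complete way to make the paper's assertion rigorous.
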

The bijection $\bijJ$ and this Claim are the main tools that allow us to translate an $MSO_1$ formula to a $WS2S$ formula in Theorem \ref{MSO_1toWS2S}. 

We will say that a set $S$ of a proper binary ordered tree is \emph{consistent} if every vertex that is not a leaf is not in $S$.
A consistent set $S$ of a proper binary ordered tree $T$ naturally induces a subset $\bijJ_T(S)$ of $\bijJ(T)$.

\begin{theorem}\label{MSO_1toWS2S}
For any $MSO_1$ formula $\Psi$ on graphs there exists a WS2S formula $\Psi_2$ that can be computed from $\Psi$ such that for all proper binary ordered tree $T$ and every subset $S$ of $T$ the following are equivalent:
\begin{itemize}
 \item $S$ is consistent and $\bijJ(T),\bijJ_T(S) \models \Psi$,
 \item $T,S \models \Psi_2$.
\end{itemize}
\end{theorem}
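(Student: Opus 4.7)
The plan is to proceed by structural induction on $\Psi$, simulating the graph $\bijJ(T)$ inside the binary ordered tree $T$. Each first-order variable of $\Psi$ (ranging over vertices of the graph) will be translated to a WS2S variable ranging only over leaves of $T$, and each second-order variable of $\Psi$ (ranging over subsets of vertices) will be translated to a WS2S second-order variable ranging over consistent sets. Concretely, I will first build three auxiliary WS2S formulas: $\operatorname{leaf}(x) := \lnot\exists y.(x\operatorname{ch}_1 y \lor x\operatorname{ch}_2 y)$; $\operatorname{cons}(X) := \forall x.(x\in X \Rightarrow \operatorname{leaf}(x))$; and a formula $\operatorname{rightanc}(x',x)$ expressing that $x'$ is a right-ancestor of $x$.

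The only non-trivial auxiliary formula is $\operatorname{rightanc}$, because the right-ancestor relation is the reflexive–transitive closure of the left-child relation and thus is not first-order definable. I will express it using the standard MSO trick of intersecting all inductive sets:
$$\operatorname{rightanc}(x',x) := \forall A.\bigl(\bigl(x\in A \land \forall y,z.((y\in A \land z\operatorname{ch}_1 y) \Rightarrow z\in A)\bigr) \Rightarrow x'\in A\bigr),$$
so that the right-ancestors of $x$ are characterised as the smallest set containing $x$ and closed under taking left-parents. Since $T$ is finite this set itself is finite, so weak second-order quantification is enough. With these building blocks available, the atomic formulas translate as: $x=y$, $X=Y$, $x\in X$ remain unchanged; and $E(x,y)$ is translated, directly from Claim \ref{rightancestors}, as
$$\exists x'.\exists y'.\bigl(\operatorname{rightanc}(x',x) \land \operatorname{rightanc}(y',y) \land (x'\operatorname{ch}_2 y' \lor y'\operatorname{ch}_2 x')\bigr).$$

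The propagation step through the connectives $\lnot, \land, \lor, \Rightarrow$ is done verbatim. Quantifiers are relativised so that first-order variables range over leaves and second-order variables over consistent sets: for instance $\exists x.\phi$ becomes $\exists x.(\operatorname{leaf}(x) \land \phi')$ and $\forall X.\phi$ becomes $\forall X.(\operatorname{cons}(X) \Rightarrow \phi')$, where $\phi'$ is the recursive translation of $\phi$. Let $\Psi_2$ be the resulting WS2S formula.

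The correctness proof is then an induction on the structure of $\Psi$, using the bijection $\bijJ_T$ between leaves of $T$ and vertices of $\bijJ(T)$ (and the induced bijection between consistent subsets of $T$ and subsets of $\bijJ(T)$) to match WS2S variable assignments with $MSO_1$ variable assignments. The base cases reduce to Claim \ref{rightancestors} (for the edge relation) and to definitional unfolding (for the other atomic symbols); the inductive cases are immediate from the relativisation of the quantifiers. The main conceptual obstacle is the translation of $E(x,y)$, since adjacency in $\bijJ(T)$ is encoded non-locally in $T$; this is exactly what Claim \ref{rightancestors} resolves, and what forces us to invoke MSO-level (rather than first-order) definability through $\operatorname{rightanc}$.
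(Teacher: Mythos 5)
Your proposal mirrors the paper's own proof very closely: you introduce the same three auxiliary formulas (a leaf predicate, a consistency predicate, and an MSO definition of the right-ancestor relation via intersection of closed sets), you translate the atomic $E(x,y)$ through Claim~\ref{rightancestors}, you relativize first-order quantifiers to leaves and second-order quantifiers to consistent sets, and you propagate Boolean connectives verbatim. Your $\operatorname{rightanc}(x',x)$ is a harmless dual variant of the paper's $RA$ (you take the smallest set containing $x$ closed under left-\emph{parents}; the paper takes the smallest set containing the candidate ancestor closed under left-\emph{children}), and your edge translation correctly uses the right-child relation $\operatorname{ch}_2$, in accordance with Claim~\ref{rightancestors}.

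There is, however, one genuine gap. Nothing in your construction constrains the \emph{free} variable $S$ of $\Psi_2$ to be consistent. Quantifier relativization only touches \emph{bound} variables, so after your translation an assignment $T,S$ with $S$ containing an internal node can still satisfy $\Psi_2$, in which case the left-hand condition of the equivalence (``$S$ is consistent and $\bijJ(T),\bijJ_T(S)\models\Psi$'') is false --- indeed $\bijJ_T(S)$ is not even defined. Thus the implication $T,S\models\Psi_2 \Rightarrow (\text{$S$ consistent} \land \bijJ(T),\bijJ_T(S)\models\Psi)$ fails. The fix is the one the paper uses: conjoin a consistency constraint on the free variable, setting $\Psi_2 := \operatorname{cons}(S)\land\phi'$ (the paper writes this constraint as $C:=\forall x,y.\,(x\operatorname{ch}_1 y \Rightarrow x\notin S)$, which over proper binary ordered trees is equivalent to your $\operatorname{cons}(S)$). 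With that conjunct added, your proof is complete and essentially identical to the paper's.
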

\begin{proof}
Let $C:= \forall x,y: x\operatorname{ch}_1y \implies x\not\in S $. This formula is satisfied if and only if $S$ is consistent. Thus $\Psi_2$ will be the conjunction of $C$ and a formula that is true if and only if  $\bijJ(T),\bijJ_T(S) \models \Psi$ under the assumption that $S$ is consistent. Let us now build the second part of this formula.
 
Let $$LCC(X):=\forall x,y: (x\in X\land x\operatorname{ch}_1y)\implies y\in X$$ and 
$$RA(x,y):= \forall A:(LCC(A)\land y\in A)\implies x\in A.$$
Then $LCC(X)$ is true if the set $X$ is closed by taking the left child and $RA(x,y)$ is true if any set that contains $y$ and is closed by taking the left child also contains $x$. 
That is $RA(x,y)$ is true if and only if $y$ is a right ancestor of $x$.
Now let
$$Edge(u,v):= \exists u',v': RA(u,u')\land RA(v,v')\land (u'\operatorname{ch}_1v'\lor v'\operatorname{ch}_1u').$$
From Claim \ref{rightancestors}, we know that for any leaves $u,v\in T$,  we have $Edge(u,v)$ is true if and only if $\bijJ_T(u)$ and $\bijJ_T(v)$ share an edge in $\bijJ(T)$. In other words, $E(\bijJ_T(u),\bijJ_T(v))$ is true if and only if $Edge(u,v)$ is true.

Let $$Leaf(x):=\forall y: \lnot ( x\operatorname{ch}_1y) \land \lnot (x\operatorname{ch}_2y)$$ and $$Leaf(X):=\forall x: x\in X\implies leaf(x)$$
be the two formulas satisfied if the variable is a leaf (resp. a set of leaves) of $T$.

Let $\Psi_2$ be the conjunction of $C$ and $\Psi$ where each quantification is replaced by a quantification restricted to leaves and by replacing every use of the relation $E(x,y)$ by the formula $Edge(x,y)$.
More formally, let $g$ be the function from $MSO_1$ formulas over graphs to $WS2S$ formulas inductively defined by:
\begin{itemize}
 \item $g(E(x,y))=Edge(x,y)$ and $g(t)=t$ for any other atomic formula,
 \item for any first or second order variable $\mu$, then $g(\forall \mu.\phi)=\forall \mu.(Leaf(\mu)\implies g(\phi)),$
 \item for any first or second order variable $\mu$, then $g(\exists \mu.\phi)=\exists \mu.(Leaf(x)\land g(\phi)),$
 \item $g(\lnot(\phi))=\lnot(g(\phi))$,
\item  $g(\phi_1 \star \phi_2)= g(\phi_1)\star g(\phi_2)$ for any binary logical connector $\star\in\{\land,\lor,\implies\}$.
\end{itemize}
Then $\Psi_2=C\land g(\Psi)$ and it easily verified by induction on $g$ that $\Psi_2$ has the desired property.
\end{proof}

A \emph{deterministic binary-tree automaton (\textbf{DTFA})} is a tuple $\mathcal{A}=(Q, \Sigma, Q_f, \delta)$, where $Q$ is a set of states, $\Sigma=\Sigma_0\cup\Sigma_2$ is an alphabet with letters of arity $0$ and $2$, $Q_f\in Q$ is a set of final states, and $\delta: \Sigma_0\cup(\Sigma_2\times Q\times Q)\rightarrow Q$ is a transition function. We let $\hat\delta $ be the function such that for any $\alpha \in \Sigma_2$ and terms $x_1,x_2$ over $\Sigma$, $\hat\delta(\alpha(x_1,x_2))=\delta(\alpha,\hat\delta(x_1),\hat\delta(x_2))$, and $\hat\delta(\beta)=\delta(\beta)$ for any $\beta\in\Sigma_0$. A term $t$ is then accepted by $\mathcal{A}$ if $\hat\delta(t)\in Q_f$.
Intuitively the automaton works on the syntax tree of the term, the state of a node is inductively deduced from the states of its children and from its symbol, and the term is accepted if the state of the root is in $Q_f$.

As we have seen before the terms over $\{J,\basetree\}$ are in natural bijection with their syntax trees and it we say that a proper binary ordered tree is accepted by a DTFA if and only if the associated term is accepted by the DTFA. We need to extend the definition to pairs $(T,S)$ where $T$ is a proper binary ordered tree and $S$ is a subset of the vertices of $T$. We let $t(T,S)$ be the term over $\{J_0,J_1,\basetree_0, \basetree_1\}$ (where $J_0$ and $J_1$ are of arity $2$ and $\basetree_0$ and $\basetree_1$ are of arity $0$) such that $T$ is isomorphic to the syntax tree of $t(T,S)$ and the index of a letter is $1$ if and only if the corresponding node of $T$ belongs to $S$.
\begin{theorem}\label{WS2Stotrees}
 For any WS2S formula $\Psi$ with a free second-order variable $X$, there exists a DTFA $\mathcal{A}$ over $\{J_0,J_1,\basetree_0, \basetree_1\}$ such that for any proper binary ordered tree $T$ and any subset $S\in T$ the following are equivalent:
 \begin{itemize}
  \item $T,S\models \Psi$ 
  \item $t(T,S)$ is accepted by  $\mathcal{A}$.
 \end{itemize}
 Moreover, $\mathcal{A}$ can be computed from $\Psi$.
\end{theorem}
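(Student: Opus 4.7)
The plan is a structural induction on $\Psi$, generalized to handle arbitrary WS2S subformulas with any number of free variables. For a subformula $\phi$ with free first-order variables $x_1,\ldots,x_k$ and free second-order variables $X_1,\ldots,X_m$, I would work over an enriched alphabet $\Sigma_\phi$ whose letters carry, in addition to the original symbols $J$ and $\basetree$, one bit per free first-order variable (marking whether that node is the assignment of $x_i$) and one bit per free second-order variable (marking membership in $X_j$). The induction hypothesis is that for every such $\phi$ there is a DTFA $\mathcal{A}_\phi$ over $\Sigma_\phi$ that accepts exactly the tagged trees whose encoded assignment satisfies $\phi$, intersected with a well-formedness automaton checking that each first-order marker occurs at exactly one node.

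For the base cases, I would hand-build small DTFAs for the atomic formulas: $x=y$ checks that the two marker bits coincide on some node; $X=Y$ checks that the two membership bits agree at every node; $x\in X$ checks that the unique node carrying $x$'s marker also carries $X$'s bit; and $x\operatorname{ch}_1 y$, $x\operatorname{ch}_2 y$ are handled by propagating states bottom-up that verify the marker for $x$ sits at the corresponding parent of the marker for $y$. Each of these can be realized with a constant number of states, once the well-formedness invariant is maintained.

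The inductive step uses the standard closure properties of recognizable tree languages: intersection by a product construction (after padding the alphabets of the two sub-automata so they coincide on their free variable sets), complementation by toggling the final states of a total deterministic automaton, and existential quantification $\exists x.\phi$ or $\exists X.\phi$ by projecting the alphabet, which drops a marker bit and produces a non-deterministic bottom-up tree automaton. This projection step is the main obstacle: I would then apply the subset construction for bottom-up tree automata to determinize back to a DTFA, at the cost of an exponential blowup in the state space. The remaining connectives and the universal quantifier are reduced to the above via the usual equivalences $\forall\mu.\phi\equiv\lnot\exists\mu.\lnot\phi$, $\phi_1\lor\phi_2\equiv\lnot(\lnot\phi_1\land\lnot\phi_2)$, and $\phi_1\Rightarrow\phi_2\equiv\lnot\phi_1\lor\phi_2$.

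Once the induction reaches the top-level formula $\Psi$, the only remaining free variable is $X$, so the resulting DTFA is over the alphabet $\{J_0,J_1,\basetree_0,\basetree_1\}$ as required. Every step of the construction is effective — the atomic automata are finite data, and the product, complementation, projection and determinization steps are all classical algorithmic constructions — so $\mathcal{A}$ is computable from $\Psi$.
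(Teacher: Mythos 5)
Your proposal is correct and is precisely the classical Thatcher--Wright/Doner construction that the paper defers to by citing \cite{tata} (Lemma 3.3.4): enrich the alphabet with marker bits for free variables, build small automata for the atomic formulas, close under product/complement for Boolean connectives, and handle quantifiers by projection followed by subset-construction determinization. The only detail worth adding to match the paper's statement exactly is to intersect with the (MSO-definable, hence recognizable) language of \emph{proper} binary ordered trees, a point the paper remarks on explicitly after the theorem.
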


We do not provide the details of the proof of this result, but it can be deduced by assembling different results from \cite{tata} (this result is almost a particular case of Lemma 3.3.4).
Remark, that the natural statement would replace proper binary ordered tree by binary ordered tree, but since being proper for a binary tree is expressible in MSO this is not a problem. Combining Theorem \ref{MSO_1toWS2S} and Theorem \ref{WS2Stotrees} gives the following result.
\begin{corollary}\label{corphitoA}
 For any $MSO_1$ formula $\Psi$ with a second-order free variable $X$ there exists a DTFA $\mathcal{A}_\Psi$ over $\{J_0,J_1,\basetree_0, \basetree_1\}$ that can be computed from $\Psi$ such that for any proper binary ordered tree $T$ and any subset $S\in T$ the following are equivalent:
 \begin{itemize}
  \item $S$ is consistent and $\bijJ(T), \bijJ_T(S)\models \Psi$,
  \item $t(T,S)$ is accepted by  $\mathcal{A}_\Psi$.
 \end{itemize}
\end{corollary}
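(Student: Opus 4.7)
The plan is to simply chain Theorem \ref{MSO_1toWS2S} and Theorem \ref{WS2Stotrees}, so the proof should be a three-line composition argument.

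First, given the $MSO_1$ formula $\Psi$ with free second-order variable $X$, I would apply Theorem \ref{MSO_1toWS2S} to obtain a WS2S formula $\Psi_2$, computable from $\Psi$. Before chaining, I need to confirm that $X$ is preserved as a free variable of $\Psi_2$: inspecting the translation $g$ in the proof of Theorem \ref{MSO_1toWS2S}, the only transformations are adding the conjunct $C$ (which mentions $S$ but does not quantify over it), restricting existing quantifiers to leaves, and replacing atomic formulas $E(x,y)$ by $Edge(x,y)$; no binding of $X$ is introduced. Hence $\Psi_2$ still has $X$ as its single free second-order variable.

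Next, I would apply Theorem \ref{WS2Stotrees} to $\Psi_2$ to obtain a DTFA $\mathcal{A}_\Psi$ over $\{J_0,J_1,\basetree_0,\basetree_1\}$ that is computable from $\Psi_2$, and therefore from $\Psi$, such that for every proper binary ordered tree $T$ and every subset $S$ of $T$, $T,S\models\Psi_2$ if and only if $t(T,S)$ is accepted by $\mathcal{A}_\Psi$.

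Finally, I would chain the two equivalences: by Theorem \ref{MSO_1toWS2S}, ``$S$ is consistent and $\bijJ(T),\bijJ_T(S)\models\Psi$'' is equivalent to ``$T,S\models\Psi_2$'', and by Theorem \ref{WS2Stotrees}, ``$T,S\models\Psi_2$'' is equivalent to ``$t(T,S)$ is accepted by $\mathcal{A}_\Psi$''. Composing these yields the statement. There is essentially no obstacle here beyond the bookkeeping check on the free variable described above; all of the real work has been carried out in the two preceding theorems.
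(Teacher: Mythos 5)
Your proof is exactly the paper's approach: the paper states Corollary~\ref{corphitoA} with only the remark ``Combining Theorem~\ref{MSO_1toWS2S} and Theorem~\ref{WS2Stotrees} gives the following result,'' and your chain of equivalences is precisely that composition. The additional sanity check that the translation $g$ preserves $X$ as the free second-order variable is a correct and worthwhile bit of bookkeeping that the paper leaves implicit.
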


\section{The maximal number of sets}\label{msotolinearalg}
\subsection{An explicit formula...}
Let $\mathcal{T}$ (resp. $\mathcal{T}_n$) be the set of trees (resp. of trees of order $n$).
Let $\mathcal{B}$ (resp. $\mathcal{B}_n$) be the set of  proper binary ordered tree (resp. that contains exactly $n$ leaves).
For any $T\in \mathcal{B}$ and any DTFA $\mathcal{A}$, let
$Accepted_{\mathcal{A}}(T)$ be the set of sets $S$ of vertices of $T$ such that $t(T,S)$ is accepted by $\mathcal{A}$.

\begin{lemma}\label{psitoA}
 For any $MSO_1$ formula $\Psi$ with a second-order free variable $X$ there exists a DTFA $\mathcal{A}_\Psi$ that can be computed from $\Psi$ such that for any integer $n$:
 $$\max_{T\in\mathcal{T}_n} \left|\{S\subseteq T: T,S\models\Psi\}\right|= \max_{T\in\mathcal{B}_n}\left|Accepted_{\mathcal{A}_\Psi}(T)\right|.$$
\end{lemma}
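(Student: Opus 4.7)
The plan is to take $\mathcal{A}_\Psi$ to be the DTFA provided by Corollary~\ref{corphitoA} and reduce the equality to two observations: first, that for each fixed $T \in \mathcal{B}_n$ the accepted sets in $Accepted_{\mathcal{A}_\Psi}(T)$ are in bijection with the $\Psi$-satisfying subsets of $\bijJ(T)$; and second, that as $T$ ranges over $\mathcal{B}_n$ the image $\bijJ(T)$ ranges surjectively over $\mathcal{T}_n$.

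For the first observation, I would invoke Corollary~\ref{corphitoA} directly: $t(T,S)$ is accepted by $\mathcal{A}_\Psi$ precisely when $S$ is a consistent subset of $T$ and $\bijJ(T), \bijJ_T(S) \models \Psi$. A consistent subset is by definition a subset of the leaves of $T$, and $\bijJ_T$ is a bijection between the leaves of $T$ and the vertices of $\bijJ(T)$, so $S \mapsto \bijJ_T(S)$ is the desired bijection, and the two sets therefore have the same cardinality.

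For the second observation, I would use the fact already recorded in the text that every rooted tree is obtained from copies of $\basetree$ by iterated application of $J$. This exhibits $\bijJ : \mathcal{B} \to \mathcal{T}$ as a surjection, and since $\bijJ_T$ bijects the leaves of $T$ with the vertices of $\bijJ(T)$, it restricts to a surjection $\mathcal{B}_n \to \mathcal{T}_n$. Combining with the first observation, both sides of the claimed equality are equal to $\max_{T' \in \mathcal{T}_n} |\{S' \subseteq T' : T', S' \models \Psi\}|$, which finishes the proof.

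The main thing to be careful about is the rooting/ordering convention, since $\mathcal{T}_n$ is normally read as a set of graph-theoretic (unrooted) trees while $\bijJ(T)$ is a rooted ordered tree, and $\Psi$ is a graph formula which ignores the root. These conventions are compatible because the count $|\{S : T', S \models \Psi\}|$ depends only on the graph structure of $T'$, so different rootings of the same underlying tree produce the same count; apart from this bookkeeping the proof is a direct combination of Corollary~\ref{corphitoA} with the definition of $\bijJ$ and is essentially routine.
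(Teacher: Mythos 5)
Your proposal is correct and follows essentially the same route as the paper's proof: it chooses $\mathcal{A}_\Psi$ to be the automaton from Corollary~\ref{corphitoA}, uses that $\bijJ$ restricts to a surjection $\mathcal{B}_n \to \mathcal{T}_n$, and uses that $\bijJ_T$ bijects consistent subsets of $T$ with subsets of $\bijJ(T)$. Your extra remark about the rooted/unrooted convention is a reasonable clarification of a point the paper leaves implicit, but the argument itself is the same.
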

\begin{proof}
Recall that $\bijJ$ is a surjection from $\mathcal{B}$ to $\mathcal{T}$. 
Since a binary ordered tree with $n$ leaves is mapped to a tree of order $n$ by $\bijJ$, 
we deduce that the restriction of $\bijJ$ to $\mathcal{B}_n$ is a surjection from $\mathcal{B}_n$ to $\mathcal{T}_n$.
Thus $\mathcal{T}_n =\{\bijJ(T):T\in\mathcal{B}_n\}$.
Moreover, for any $T\in \mathcal{B}_n$, $\bijJ_T$ is a bijection from the set of consistent subset of $T$ to the sets of $\bijJ(T)$. Thus we get the following equality 
 $$\max_{T\in\mathcal{T}_n} \left|\{S\subseteq T: T,S\models\Psi\}\right|= \max_{T\in\mathcal{B}_n}\left|\{S\subseteq T: S \text{ is consistent and }\bijJ(T), \bijJ_T(S)\models \Psi\}\right|.$$
We now get our result by letting $\mathcal{A}_\Psi$ be the automaton given by Corollary \ref{corphitoA} . 
\end{proof}

For any proper binary ordered tree $T$ and any DTFA $\mathcal{A}=(Q, \Sigma, Q_f, \delta)$, we let $T^\mathcal{A}\in \mathbb{R}_{\ge 0}^{|Q|}$ be the vector such that for any $q\in Q$,
$(T^\mathcal{A})_q$ is the number of subsets of $T$ such that $\mathcal{A}$ ends-up in the state $q$, that is: 
$$(T^\mathcal{A})_q= |\{S\subseteq T: \hat\delta(t(T,S))=q\}|.$$

If moreover, we let $\mathbf{F}^\mathcal{A}\in \mathbb{R}_{\ge 0}^{|Q|}$ be the indicator vector of $Q_f$ (i.e., $(\mathbf{F}^\mathcal{A})_q=1$ if $q\in Q$ and $(\mathbf{F}^\mathcal{A})_q=0$ otherwise), then:
\begin{equation}
|Accepted_{\mathcal{A}}(T)|=\mathbf{F}^\mathcal{A}\cdot T^\mathcal{A}.\label{nbsolsfromtreevec}
\end{equation}
For any DTFA $\mathcal{A}=(Q, \Sigma, Q_f, \delta)$, we let $\widetilde{\mathcal{A}}:\mathbb{R}_{\ge 0}^{|Q|}\times\mathbb{R}_{\ge 0}^{|Q|}\rightarrow\mathbb{R}_{\ge 0}^{|Q|}$ be the bilinear map such that for all $\mathbf{u},\mathbf{v}\in\mathbb{R}_{\ge 0}^{|Q|}$ and $q\in Q$:
$$\widetilde{\mathcal{A}}(\mathbf{u},\mathbf{v})_q=\sum_{\delta(J_0,q_1,q_2)=q}\mathbf{u}_{q_1}\cdot\mathbf{v}_{q_2}$$

Then by construction for any proper binary ordered tree $T$ whose left subtree is $T_1$ and whose right subtree is $T_2$, 
$$T^\mathcal{A}=\widetilde{\mathcal{A}}\left(T_1^\mathcal{A},T_2^\mathcal{A}\right)$$
Remark that only the leaves of $T$ can be in $S$ (since we want $S$ to be consistent), and thus we know that there is no transition with $J_1$ in our tree-automaton and this is why only $J_0$ appears in the definition of $\widetilde{\mathcal{A}}$.

Finaly, let $I^\mathcal{A}\in\mathbb{R}_{\ge 0}^{|Q|}$ be the vector such that $I^\mathcal{A}=(T_0)^A$ where $T_0$ is the tree that contains only a root with no child or in other words, for all $q\in Q$
$$(I^\mathcal{A})_q=|\{x\in \{0,1\}: \delta(x)=q\}|.$$

For any  proper binary ordered tree $T$, there is a natural associated term $t_T$ over $\{\widetilde{\mathcal{A}},I^\mathcal{A}\}$ (we replace the leaves by $I^\mathcal{A}$ and the internal nodes by $\widetilde{\mathcal{A}}$). If $\hat{t_T}$ is the result of the evaluation of $t_T$ seen as an expression then it is easy to show by induction that equation \eqref{nbsolsfromtreevec} can be rewritten
\begin{equation}
 |Accepted_{\mathcal{A}}(T)|=\mathbf{F}^\mathcal{A}\cdot \hat{t_T}.\label{nbsolsfromterm}
\end{equation}

Since the number of internal nodes of any tree $T\in\mathcal{B}_n$ is $n-1$, the number of occurrences of $\widetilde{\mathcal{A}}$ in $t_T$ is $n-1$.

Given any bilinear map $\mathbf{B}:\mathbb{R}_{\ge 0}^{n}\times\mathbb{R}_{\ge 0}^{n}\rightarrow\mathbb{R}_{\ge 0}^{n}$ and any vector $\mathbf{V_0}\in\mathbb{R}_{\ge 0}^{n}$, we let $\mathbf{B}^k(\mathbf{V_0})$ be the set of vectors obtained by any expression over $\mathbf{B}$ and $\mathbf{V_0}$ with $k-1$ occurrences of $\mathbf{B}$ (or equivalently with $k$ occurrences of $\mathbf{V_0}$). In other words, the $\mathbf{B}^k(\mathbf{V_0})$ are inductively defined by:
  \begin{itemize}
    \item $\mathbf{B}^1(\mathbf{V_0})=\{\mathbf{V_0}\}$,
    \item for all $k$, $\mathbf{B}^k(\mathbf{V_0})=\bigcup\limits_{i=1}^{k-1}\left\{\mathbf{B}(x,y): x\in\mathbf{B}^i(\mathbf{V_0}), y\in\mathbf{B}^{k-i}(\mathbf{V_0})\right\}$.
  \end{itemize}

Then by construction and by equation \eqref{nbsolsfromterm}
$$\left\{|Accepted_{\mathcal{A}}(T)|: T\in B_n\right\}=\left\{\mathbf{F}^\mathcal{A}\cdot\mathbf{v}:\mathbf{v}\in\widetilde{\mathcal{A}}^k(I^\mathcal{A})\right\}$$
The following lemma is a direct consequence of this last equation and of Lemma \ref{psitoA}.
\begin{lemma}
 For any $MSO_1$ formula $\Psi$ with a second-order free variable $X$ there exists an integer $n$, a bilinear map $\mathbf{B}:\mathbb{R}_{\ge 0}^{n}\times\mathbb{R}_{\ge 0}^{n}\rightarrow\mathbb{R}_{\ge 0}^{n}$ and  vectors $\mathbf{V_0},\mathbf{F}\in\mathbb{R}_{\ge 0}^{n}$ that can be computed from $\Psi$ such that for any integer $k$:
 $$\max_{T\in\mathcal{T}_k} \left|\{S\subseteq T: T,S\models\Psi\}\right|= \max_{\mathbf{v}\in\mathbf{B}^k(\mathbf{V_0})}\left|\mathbf{F}\cdot\mathbf{v}\right|.$$
\end{lemma}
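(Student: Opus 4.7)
My plan is to simply chain together the two identities already established in the section. First, I would invoke Corollary~\ref{corphitoA}, which produces, computably from $\Psi$, a DTFA $\mathcal{A}_\Psi = (Q, \Sigma, Q_f, \delta)$ whose accepted terms correspond to pairs $(T,S)$ with $S$ consistent and $\bijJ(T), \bijJ_T(S) \models \Psi$. Lemma~\ref{psitoA} has already packaged the first reduction I need, giving
$$\max_{T\in\mathcal{T}_k} \left|\{S\subseteq T: T,S\models\Psi\}\right|= \max_{T\in\mathcal{B}_k}\left|Accepted_{\mathcal{A}_\Psi}(T)\right|.$$

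Next I would set $n := |Q|$, $\mathbf{B} := \widetilde{\mathcal{A}_\Psi}$, $\mathbf{V_0} := I^{\mathcal{A}_\Psi}$, and $\mathbf{F} := \mathbf{F}^{\mathcal{A}_\Psi}$. All three objects are extracted explicitly from the transition function $\delta$ and the final-state indicator, so each is computable from $\Psi$ once $\mathcal{A}_\Psi$ has been produced. It remains to translate the right-hand side above into the bilinear language. For this I would appeal directly to the identity derived in the excerpt, namely
$$\left\{|Accepted_{\mathcal{A}_\Psi}(T)|: T\in \mathcal{B}_k\right\}=\left\{\mathbf{F}\cdot\mathbf{v}:\mathbf{v}\in\mathbf{B}^k(\mathbf{V_0})\right\},$$
which in turn is equation~\eqref{nbsolsfromterm} together with the observation that a proper binary ordered tree with $k$ leaves has exactly $k-1$ internal nodes, matching the recursive definition of $\mathbf{B}^k(\mathbf{V_0})$.

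Taking the maximum of both sides of the set equality and combining it with the output of Lemma~\ref{psitoA} yields the desired identity. I do not anticipate any real obstacle: the content is essentially a renaming of $(T \mapsto T^{\mathcal{A}_\Psi})$ into the inductive bilinear form, and the bijection between proper binary ordered trees in $\mathcal{B}_k$ and expression trees over $(\mathbf{B}, \mathbf{V_0})$ with $k$ leaves is tautological from the way $\mathbf{B}^k(\mathbf{V_0})$ was defined. The only mild care point is to state explicitly that the construction $\Psi \mapsto (n,\mathbf{B},\mathbf{V_0},\mathbf{F})$ is effective, which follows because Corollary~\ref{corphitoA} is effective and the passage from $\mathcal{A}_\Psi$ to $(\widetilde{\mathcal{A}_\Psi}, I^{\mathcal{A}_\Psi}, \mathbf{F}^{\mathcal{A}_\Psi})$ is a finite computation on its transition table.
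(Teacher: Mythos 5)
Your proof is correct and follows the paper's own route exactly: the paper presents this lemma as a direct consequence of the set equality $\left\{|Accepted_{\mathcal{A}_\Psi}(T)|: T\in \mathcal{B}_k\right\}=\left\{\mathbf{F}^{\mathcal{A}_\Psi}\cdot\mathbf{v}:\mathbf{v}\in\widetilde{\mathcal{A}_\Psi}^k(I^{\mathcal{A}_\Psi})\right\}$ combined with Lemma~\ref{psitoA}, which is precisely the chain you spell out. Your choices $n=|Q|$, $\mathbf{B}=\widetilde{\mathcal{A}_\Psi}$, $\mathbf{V_0}=I^{\mathcal{A}_\Psi}$, $\mathbf{F}=\mathbf{F}^{\mathcal{A}_\Psi}$, the leaf-count observation, and the note on effectivity all coincide with the paper's intended argument.
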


It implies the following corollary.
\begin{corollary}\label{formulatolinearalg}
 For any $MSO_1$ formula $\Psi$ with a second-order free variable $X$ there exist an integer $n$, a bilinear map $\mathbf{B}:\mathbb{R}_{\ge 0}^{n}\times\mathbb{R}_{\ge 0}^{n}\rightarrow\mathbb{R}_{\ge 0}^{n}$ and  vectors $\mathbf{V_0},\mathbf{F}\in\mathbb{R}_{\ge 0}^{n}$ that can be computed from $\Psi$ such that:
 $$\limsup\limits_{k\rightarrow \infty}\max_{T\in\mathcal{T}_k} \left|\{S\subseteq T: T,S\models\Psi\}\right|^{\frac{1}{k}}= \limsup\limits_{k\rightarrow \infty}\max_{\mathbf{v}\in\mathbf{B}^k(\mathbf{V_0})}\left|\mathbf{F}\cdot\mathbf{v}\right|^{\frac{1}{k}}.$$
\end{corollary}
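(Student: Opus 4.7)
The plan is to derive this corollary directly from the preceding lemma, which already supplies the exact pointwise equality at each fixed $k$. That lemma produces an integer $n$, a bilinear map $\mathbf{B} : \mathbb{R}_{\ge 0}^n \times \mathbb{R}_{\ge 0}^n \to \mathbb{R}_{\ge 0}^n$, and vectors $\mathbf{V_0}, \mathbf{F} \in \mathbb{R}_{\ge 0}^n$, all computed from $\Psi$ alone (via the DTFA $\mathcal{A}_\Psi$ of Corollary \ref{corphitoA}); crucially, these objects depend only on $\Psi$ and not on $k$. For these same objects, the lemma asserts
$$\max_{T \in \mathcal{T}_k} \left|\{S \subseteq T : T,S \models \Psi\}\right| \;=\; \max_{\mathbf{v} \in \mathbf{B}^k(\mathbf{V_0})} \left|\mathbf{F} \cdot \mathbf{v}\right|$$
for every positive integer $k$.

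From here the corollary is obtained by raising both sides to the power $\tfrac{1}{k}$ — a monotone operation that preserves the equality — and then taking $\limsup_{k \to \infty}$. Since the two numerical sequences are term-by-term equal, their limsups coincide, and the tuple $(n, \mathbf{B}, \mathbf{V_0}, \mathbf{F})$ provided by the lemma serves without modification as the tuple required by the corollary.

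There is essentially no obstacle here: the entire content of the corollary lies in the preceding lemma, and the passage from the lemma to the corollary is a purely mechanical step from finite counts to their exponential growth rates. The point of isolating this corollary as a separate statement is merely to emphasize that the quantity of interest — the growth rate of $\Psi$ over trees — has been reduced to the growth rate of a bilinear system, which is the form in which the remainder of the section will analyze and approximate it from above.
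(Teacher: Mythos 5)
Your proof is correct and matches the paper's approach exactly: the paper states that the preceding lemma "implies the following corollary," and the implication is precisely the term-by-term argument you spell out (the same $(n,\mathbf{B},\mathbf{V_0},\mathbf{F})$, take $k$-th roots, take $\limsup$). Nothing more is needed.
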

\subsection{...expressed as ``the growth rate'' of a bilinear system}
For any integer $n$, bilinear map $\mathbf{B}:\mathbb{R}^{n}\times\mathbb{R}^{n}\rightarrow\mathbb{R}^{n}$\footnote{We only need to compute the growth rate of bilinear systems with non-negative coefficients which simplifies some parts. However, whenever we the non-negativity condition does not help we try to stay as general as possible.} and vectors $\mathbf{V_0},\mathbf{F}\in\mathbb{R}^{n}$, we call the triple $(\mathbf{B},\mathbf{V_0},\mathbf{F})$ \emph{a bilinear system}. The growth rate of the bilinear system
$(\mathbf{B},\mathbf{V_0},\mathbf{F})$ denoted by $\rho(\mathbf{B},\mathbf{V_0},\mathbf{F})$ is defined as
$$ \rho(\mathbf{B},\mathbf{V_0},\mathbf{F})=\limsup\limits_{k\rightarrow \infty}\max_{\mathbf{v}\in\mathbf{B}^k(\mathbf{V_0})}\left|\mathbf{F}\cdot\mathbf{v}\right|^{\frac{1}{k}}.$$

From Corollary \ref{formulatolinearalg}, we want to compute growth rates of bilinear systems and this subsection is devoted to results regarding the computability of these growth rates.
In particular, we will show that we can approximate this quantity from above and we will give a criterion that can provide exact value under some conditions. 
Let us start with the following Lemma that can be easily deduced from the bilinearity of $\mathbf{B}$.
\begin{lemma}\label{simpleBilinearity}
Let $n$ be an integer, $\mathbf{B}:\mathbb{R}^{n}\times\mathbb{R}^{n}\rightarrow\mathbb{R}^{n}$ be a  bilinear map and $\mathbf{V_0},\mathbf{F}\in\mathbb{R}^{n}$ be vectors.
For any positive constant $\alpha\in\mathbb{R}_{>0}$, then:
$$\rho\left(\frac{\mathbf{B}}{\alpha},\mathbf{V_0},\mathbf{F}\right)=\rho\left(\mathbf{B},\frac{\mathbf{V_0}}{\alpha},\mathbf{F}\right)=\frac{\rho(\mathbf{B},\mathbf{V_0},\mathbf{F})}{\alpha}$$.
\end{lemma}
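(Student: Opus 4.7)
The plan is to unfold the definition of $\rho$ and exploit bilinearity of $\mathbf{B}$ to track how scaling factors propagate through an expression in $\mathbf{B}^k(\mathbf{V_0})$. Any such expression applies $\mathbf{B}$ exactly $k-1$ times and uses $\mathbf{V_0}$ exactly $k$ times, so a uniform scaling of either $\mathbf{B}$ or $\mathbf{V_0}$ produces a predictable global scaling of the output vector.

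First, I would prove by induction on $k$ that for every positive $\alpha$,
\[
\left(\tfrac{\mathbf{B}}{\alpha}\right)^k(\mathbf{V_0}) \;=\; \left\{\tfrac{1}{\alpha^{k-1}}\mathbf{v} : \mathbf{v}\in\mathbf{B}^k(\mathbf{V_0})\right\}
\qquad\text{and}\qquad
\mathbf{B}^k\!\left(\tfrac{\mathbf{V_0}}{\alpha}\right) \;=\; \left\{\tfrac{1}{\alpha^{k}}\mathbf{v} : \mathbf{v}\in\mathbf{B}^k(\mathbf{V_0})\right\}.
\]
The base case $k=1$ is immediate from the definition $\mathbf{B}^1(\mathbf{V_0})=\{\mathbf{V_0}\}$. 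For the inductive step, any element of $\mathbf{B}^k(\mathbf{V_0})$ is of the form $\mathbf{B}(x,y)$ with $x\in\mathbf{B}^i(\mathbf{V_0})$ and $y\in\mathbf{B}^{k-i}(\mathbf{V_0})$; substituting the inductive hypothesis into this recurrence and using the bilinearity identity $\mathbf{B}(\lambda x,\mu y)=\lambda\mu\,\mathbf{B}(x,y)$ gives the required factor, namely $\tfrac{1}{\alpha}\cdot\tfrac{1}{\alpha^{i-1}}\cdot\tfrac{1}{\alpha^{k-i-1}}=\tfrac{1}{\alpha^{k-1}}$ in the first case and $\tfrac{1}{\alpha^{i}}\cdot\tfrac{1}{\alpha^{k-i}}=\tfrac{1}{\alpha^{k}}$ in the second.

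Next, I would plug these identities into the definition of the growth rate. For the first one,
\[
\max_{\mathbf{v}\in(\mathbf{B}/\alpha)^k(\mathbf{V_0})}|\mathbf{F}\cdot\mathbf{v}|^{1/k}
= \alpha^{-(k-1)/k}\max_{\mathbf{v}\in\mathbf{B}^k(\mathbf{V_0})}|\mathbf{F}\cdot\mathbf{v}|^{1/k},
\]
and since $\alpha^{-(k-1)/k}\to \alpha^{-1}$ as $k\to\infty$, the $\limsup$ equals $\rho(\mathbf{B},\mathbf{V_0},\mathbf{F})/\alpha$. For the second, the analogous computation gives the cleaner factor $\alpha^{-1}$ already at every finite $k$, so the $\limsup$ is again $\rho(\mathbf{B},\mathbf{V_0},\mathbf{F})/\alpha$.

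There is no real obstacle here; the only point requiring any care is making sure the exponent $\tfrac{k-1}{k}$ in the $\mathbf{B}/\alpha$ case is handled correctly when passing to the limit, which is harmless since $\alpha>0$ is fixed and $\alpha^{1/k}\to 1$. Everything else is a direct computation from the definitions, so the lemma follows.
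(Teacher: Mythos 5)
Your proof is correct, and since the paper itself gives no explicit proof here (it only remarks that the lemma ``can be easily deduced from the bilinearity of $\mathbf{B}$''), your argument is precisely the intended one: induct on $k$ to show that scaling $\mathbf{B}$ by $1/\alpha$ scales every element of $\mathbf{B}^k(\mathbf{V_0})$ by $\alpha^{-(k-1)}$ while scaling $\mathbf{V_0}$ scales them by $\alpha^{-k}$, then pass to the $\limsup$ using $\alpha^{-(k-1)/k}\to\alpha^{-1}$. The one step worth stating explicitly, which you handle implicitly, is that pulling the convergent factor $\alpha^{-(k-1)/k}$ out of the $\limsup$ is legitimate because the other factor is nonnegative and the scaling sequence has a finite positive limit.
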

This lemma tells us that if we can semi-decide whether the growth rate of a bilinear system is greater (resp. smaller) than $1$, we can  semi-decide whether the growth rate of a bilinear system is greater (resp. smaller) than any fixed constant.

The next step is to show that without lose of generality we can assume some nice properties regarding the bilinear map.
Let $n$ be an integer, $\mathbf{B}:\mathbb{R}^{n}\times\mathbb{R}^{n}\rightarrow\mathbb{R}^{n}$ be a  bilinear map and $\mathbf{V_0},\mathbf{F}\in\mathbb{R}^{n}$ be vectors.
We say that $i$-th coordinate is \emph{accessible} if there is an integer $k$ and $\mathbf{v}\in\mathbf{B}^k(\mathbf{V_0})$ such that the $i$-th coordinate of $\vect{v}$ is non zero.
Let $\vect{e_i}$ be the vector whose $i$th coordinate is $1$ and the others are $0$. 
We define inductively the set of \emph{co-accessible} coordinates:
\begin{itemize}
 \item if the $i$-th coordinate of $\vect{F}$ is non zero then $i$ is co-accessible,
 \item if one of the co-accessible coordinate of $\mathbf{B}(\vect{e_i}, \vect{e_j})$ is non zero and the $j$-th coordinate is accessible then  the $i$-th coordinate is co-accessible,
 \item if one of the co-accessible coordinate of $\mathbf{B}(\vect{e_i}, \vect{e_j})$ is non zero and the $i$-th coordinate is accessible then the $j$-th coordinate is co-accessible.
\end{itemize}
The motivation of these definitions is that we can ignore coordinates that are not accessible and co-accessible since they do not influence the result.
Let $\widetilde{m}$ be the number of accessible and co-accessible coordinates.
Let $h$ be the endomorphism that maps the $i$-th coordinate to the $i$-th  accessible and co-accessible coordinate and $h^T$ be the endomorphism that maps the $i$-th accessible and co-accessible coordinate to the $i$-th coordinate. 
Let $\widetilde{\mathbf{B}}=h\circ\mathbf{B}\circ(h^T\times h^T)$, $\widetilde{\vect{v}}=h(\vect{v})$ and $\widetilde{\vect{F}}=h(\vect{F})$.
The following lemma is a direct consequence of this definition.
\begin{lemma}
 Let $n$ be an integer, $\mathbf{B}:\mathbb{R}^{n}\times\mathbb{R}^{n}\rightarrow\mathbb{R}^{n}$ be a  bilinear map and $\mathbf{V_0},\mathbf{F}\in\mathbb{R}^{n}$ be vectors.
 Then for any $k$, 
 $$\max_{\mathbf{v}\in\mathbf{B}^k(\mathbf{V_0})}\left|\mathbf{F}\cdot\mathbf{v}\right|=
 \max_{\mathbf{v}\in\widetilde{\mathbf{B}}^k(\widetilde{\mathbf{V_0}})}\left|\widetilde{\mathbf{F}}\cdot\mathbf{v}\right|$$
 and thus 
 $$\rho(\mathbf{B},\mathbf{V_0},\mathbf{F})=\rho(\widetilde{\mathbf{B}},\widetilde{\mathbf{V_0}},\widetilde{\mathbf{F}}).$$
\end{lemma}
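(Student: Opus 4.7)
The plan is to establish the displayed per-$k$ equality and then obtain the growth-rate equality as an immediate consequence of taking $\limsup$ of the $k$-th root. I would prove the per-$k$ equality by constructing, for each $k$, a value-preserving correspondence between $\mathbf{B}^k(\mathbf{V_0})$ and $\widetilde{\mathbf{B}}^k(\widetilde{\mathbf{V_0}})$ via the projection $h$. Concretely, I would show that $h$ maps $\mathbf{B}^k(\mathbf{V_0})$ onto $\widetilde{\mathbf{B}}^k(\widetilde{\mathbf{V_0}})$, and that $\mathbf{F}\cdot \mathbf{v} = \widetilde{\mathbf{F}}\cdot h(\mathbf{v})$ for every $\mathbf{v}\in\mathbf{B}^k(\mathbf{V_0})$. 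The dot-product identity is easy: non-accessible coordinates of $\mathbf{v}$ vanish by the very definition of accessibility, and the non-zero entries of $\mathbf{F}$ lie at co-accessible indices (the base clause of the definition of co-accessibility), so the sum $\sum_i \mathbf{F}_i\,\mathbf{v}_i$ effectively runs over accessible-and-co-accessible indices, which is exactly $\widetilde{\mathbf{F}}\cdot h(\mathbf{v})$.

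The main obstacle, and the heart of the argument, is to show that $h$ intertwines $\mathbf{B}$ with $\widetilde{\mathbf{B}}$ whenever the inputs are supported on accessible coordinates: if $\mathbf{u},\mathbf{w}$ are supported on accessible coordinates then $h(\mathbf{B}(\mathbf{u},\mathbf{w})) = \widetilde{\mathbf{B}}(h(\mathbf{u}),h(\mathbf{w}))$. Unfolding the definition $\widetilde{\mathbf{B}} = h\circ \mathbf{B}\circ (h^T\times h^T)$, this amounts to proving that the co-accessible part of $\mathbf{B}(\mathbf{u},\mathbf{w})$ is unchanged if one replaces $\mathbf{u}$ by $h^Th(\mathbf{u})$ and $\mathbf{w}$ by $h^Th(\mathbf{w})$. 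Since $\mathbf{u}$ vanishes outside accessible indices, the difference $\mathbf{u}-h^Th(\mathbf{u})$ is supported on indices that are accessible but not co-accessible, and similarly for $\mathbf{w}$. Expanding $\mathbf{B}(\mathbf{u},\mathbf{w})$ by bilinearity yields four terms; the three cross terms each have an argument supported on an accessible-but-not-co-accessible coordinate while the other argument is supported on an accessible coordinate. The contrapositive of the two recursive clauses defining co-accessibility states precisely that $\mathbf{B}(\vect{e_i},\vect{e_j})$ has zero co-accessible part whenever one of $i,j$ is accessible-but-not-co-accessible and the other is accessible, so those cross terms contribute nothing to the co-accessible coordinates of $\mathbf{B}(\mathbf{u},\mathbf{w})$.

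With this intertwining established, the induction on $k$ is immediate. For $k=1$, $h(\mathbf{V_0}) = \widetilde{\mathbf{V_0}}$ is the definition. For the inductive step, any $\mathbf{v}\in\mathbf{B}^k(\mathbf{V_0})$ is of the form $\mathbf{B}(\mathbf{v_1},\mathbf{v_2})$ with $\mathbf{v_1},\mathbf{v_2}$ in sets of smaller index; both are supported on accessible indices, so the intertwining plus the inductive hypothesis give $h(\mathbf{v}) = \widetilde{\mathbf{B}}(h(\mathbf{v_1}),h(\mathbf{v_2}))\in\widetilde{\mathbf{B}}^k(\widetilde{\mathbf{V_0}})$. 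Surjectivity in the other direction is obtained by evaluating the same expression tree in $(\mathbf{B},\mathbf{V_0})$ rather than in $(\widetilde{\mathbf{B}},\widetilde{\mathbf{V_0}})$ and applying the intertwining again. Taking the maximum over the common set of values $\mathbf{F}\cdot \mathbf{v} = \widetilde{\mathbf{F}}\cdot \widetilde{\mathbf{v}}$ yields the per-$k$ equality, and the growth-rate equality follows at once.
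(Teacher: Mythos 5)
Your proof is correct. The paper itself provides no argument here, merely stating that the lemma is ``a direct consequence of this definition''; your argument---that $\mathbf{F}\cdot\mathbf{v}$ depends only on accessible-and-co-accessible coordinates, that the projection $h$ intertwines $\mathbf{B}$ with $\widetilde{\mathbf{B}}$ on inputs supported on accessible coordinates (via the contrapositives of the two recursive co-accessibility clauses killing the cross terms), and an induction on expression trees for the bijective correspondence---is exactly the reasoning the authors leave implicit, and it correctly fills the gap.
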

This lemma tells us that one can assume that every coordinate is accessible and co-accessible without lose of generality.
Remark, that in general the set of accessible and co-accessible coordinates can easily be computed by a recursive algorithm. Moreover, if our bilinear map is associated to a minimal tree-automaton whose garbage state was removed then the bilinear map is already accessible and co-accessible. We are now ready to discuss the computability of the growth rate of accessible and co-accessible bilinear systems.

For any set of points $X$, we denote by $\operatorname{conv}(X)$ the convex hull of $X$.
\begin{lemma}\label{polytopefirstdirection}
Let $n$ be an integer, $\mathbf{B}:\mathbb{R}^{n}\times\mathbb{R}^{n}\rightarrow\mathbb{R}^{n}$ be a  bilinear map and $\mathbf{V_0},\mathbf{F}\in\mathbb{R}^{n}$ be vectors.
Suppose that there is a bounded set of vectors $X\subseteq\mathbb{R}^n$  such that:
\begin{itemize}
 \item $\mathbf{V_0}\in\operatorname{conv}(X)$,
 \item $\forall \vect{u},\vect{v}\in X$, we have   $B(\vect{u},\vect{v})\in\operatorname{conv}(X)$.
\end{itemize}
Then $\rho(\mathbf{B},\mathbf{V_0},\mathbf{F})\le1$.
More precisely, for all $\vect{u}\in \bigcup_{k\ge1}\mathbf{B}^k(\mathbf{V_0})$, $|\mathbf{F}\cdot \vect{u}| \le \sup_{ x\in X}|\mathbf{F}\cdot x|$.
\end{lemma}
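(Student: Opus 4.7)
The plan is to prove the stronger pointwise bound $|\mathbf{F}\cdot\vect{u}|\le \sup_{x\in X}|\mathbf{F}\cdot x|$ for every $\vect{u}\in\bigcup_{k\ge1}\mathbf{B}^k(\mathbf{V_0})$, and then deduce $\rho(\mathbf{B},\mathbf{V_0},\mathbf{F})\le1$ by taking $k$-th roots. The main step is to establish the invariant
\[
\mathbf{B}^k(\mathbf{V_0})\subseteq\operatorname{conv}(X)\quad\text{for every }k\ge1,
\]
by induction on $k$, using the two hypotheses on $X$.

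For the base case $k=1$, the set $\mathbf{B}^1(\mathbf{V_0})=\{\mathbf{V_0}\}$ is contained in $\operatorname{conv}(X)$ by the first hypothesis. For the inductive step, an element of $\mathbf{B}^k(\mathbf{V_0})$ has the form $\mathbf{B}(\vect{x},\vect{y})$ with $\vect{x}\in\mathbf{B}^i(\mathbf{V_0})$ and $\vect{y}\in\mathbf{B}^{k-i}(\mathbf{V_0})$ for some $1\le i\le k-1$, and by the induction hypothesis both $\vect{x}$ and $\vect{y}$ lie in $\operatorname{conv}(X)$. Writing $\vect{x}=\sum_p\alpha_p\vect{x_p}$ and $\vect{y}=\sum_q\beta_q\vect{y_q}$ as convex combinations of elements of $X$, bilinearity of $\mathbf{B}$ gives
\[
\mathbf{B}(\vect{x},\vect{y})=\sum_{p,q}\alpha_p\beta_q\,\mathbf{B}(\vect{x_p},\vect{y_q}).
\]
The coefficients $\alpha_p\beta_q$ are non-negative and sum to $1$, and each $\mathbf{B}(\vect{x_p},\vect{y_q})$ lies in $\operatorname{conv}(X)$ by the second hypothesis, so the above is a convex combination of elements of $\operatorname{conv}(X)$, which is itself convex. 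Hence $\mathbf{B}(\vect{x},\vect{y})\in\operatorname{conv}(X)$, completing the induction.

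Given the invariant, for any $\vect{u}\in\bigcup_{k\ge1}\mathbf{B}^k(\mathbf{V_0})$ we can write $\vect{u}=\sum_r\gamma_r\vect{z_r}$ as a convex combination of points of $X$, and linearity of $\vect{w}\mapsto\mathbf{F}\cdot\vect{w}$ together with the triangle inequality yields
\[
|\mathbf{F}\cdot\vect{u}|\;\le\;\sum_r\gamma_r|\mathbf{F}\cdot\vect{z_r}|\;\le\;\sup_{x\in X}|\mathbf{F}\cdot x|.
\]
Boundedness of $X$ ensures this supremum, call it $M$, is finite, which gives the second statement of the lemma. For the first, since $M$ does not depend on $k$, we get $\max_{\vect{v}\in\mathbf{B}^k(\mathbf{V_0})}|\mathbf{F}\cdot\vect{v}|^{1/k}\le M^{1/k}\to 1$ as $k\to\infty$, hence $\rho(\mathbf{B},\mathbf{V_0},\mathbf{F})\le 1$.

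There is no serious obstacle: the proof is a straightforward induction powered by the bilinearity of $\mathbf{B}$ and the convexity of $\operatorname{conv}(X)$. The only subtle point worth flagging is that the second hypothesis is stated for pairs in $X$ rather than in $\operatorname{conv}(X)$, which is precisely why the expansion of $\mathbf{B}(\vect{x},\vect{y})$ via bilinearity into a double sum over extreme representatives is what makes the induction go through.
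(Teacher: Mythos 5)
Your proof is correct and follows essentially the same route as the paper: an induction on $k$ establishing $\mathbf{B}^k(\mathbf{V_0})\subseteq\operatorname{conv}(X)$ via the bilinear expansion over convex-combination representatives, followed by the linear functional bound and a $k$-th root argument. The only cosmetic difference is that you cite the triangle inequality explicitly when bounding $|\mathbf{F}\cdot\vect{u}|$, whereas the paper states the inequality directly; the substance is identical.
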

\begin{proof}
The proof mostly relies on trivial manipulations of convex sets.
First, remark that 
\begin{equation}\label{convremark}
 \operatorname{conv}(\{\mathbf{B}(x,y): x,y\in X\})\subseteq\operatorname{conv}(X).
\end{equation}

Let us first show by induction on $k$ that for all $\vect{u}\in \mathbf{B}^k(\mathbf{V_0})$,
$\vect{u}\in  \operatorname{conv}(X)$.
If $\vect{u}\in\mathbf{B}^1(\mathbf{V_0})$, then by definition $\vect{u}=\mathbf{V_0}\in\operatorname{conv}(X)$.

Suppose $\vect{u}=\mathbf{B}(\vect{u_1},\vect{u_2})$ with $\vect{u_1}\in\mathbf{B}^i(\mathbf{V_0})$, $\vect{u_2}\in \mathbf{B}^j(\mathbf{V_0})$ and $i+j=n$.
 By induction hypothesis there are two functions $f_1, f_2:X\mapsto[0,1]$ such that:
 $\sum_{x\in X} f_1(x)=\sum_{x\in X} f_2(x)=1$, $\vect{u_1}=\sum_{x\in X} f_1(x) x$ and $\vect{u_2}=\sum_{x\in X} f_2(x) x$.
 By bilinearity of $\mathbf{B}$ we get:
$$  \vect{u}=\mathbf{B}(\vect{u_1},\vect{u_2})
    =\sum_{x\in X}\sum_{y\in X}f_1(x)f_2(y)\mathbf{B}(x,y)$$
  Moreover, $\sum_{x\in X}\sum_{y\in X}f_1(x)f_2(y)=1$ implies that $\vect{u}\in\operatorname{conv}(\{\mathbf{B}(x,y): x,y\in X\})$ which implies  $\vect{u}\in\operatorname{conv}(X)$ by equation \eqref{convremark}.  

Now we know that for all $k$ and all $\vect{u}\in \mathbf{B}^k(\mathbf{V_0})$,
$\vect{u}\in  \operatorname{conv}(X)$. 
It implies that there is a function $f:X\mapsto[0,1]$ such that:
 $\sum_{x\in X} f(x)=1$ and $\vect{u}=\sum_{x\in X} f(x) x$.
Thus $|\mathbf{F}\cdot\vect{u}|=|\sum_{x\in X} f(x) \mathbf{F}\cdot x| \le \sup_{ x\in X}|\mathbf{F}\cdot x|$.
Since $X$ is a bounded set this implies that there is a constant $C$ such that for all $k$, 
$$\max_{\mathbf{v}\in\mathbf{B}^k(\mathbf{V_0})}\left|\mathbf{F}\cdot\mathbf{v}\right| \le C$$
and thus
$$\rho(\mathbf{B},\mathbf{V_0},\mathbf{F})\le1$$
which concludes the proof.
\end{proof}

If one can find such a set $X$ then we deduce that $\rho(\mathbf{B},\mathbf{V_0},\mathbf{F})\le1$. On the other hand, Lemma \ref{converse} tells us that if $\rho(\mathbf{B},\mathbf{V_0},\mathbf{F})<1$ then there is such a set $X$. We first state an intermediate Lemma about the approximation of convex sets that will be central in the proof of Lemma \ref{converse}.
For any set $S\subseteq\mathbb{R}^n$ and any real $y$, we let $\frac{S}{y}=\left\{\frac{\mathbf{v}}{y}: \mathbf{v}\in S\right\}$.

\begin{lemma}\label{nestedpolytopes}
Let $S\subseteq\mathbb{R}^n$ be a bounded convex set and $\varepsilon\in\mathbb{R}_{>0}$ be a positive constant. If $0$ belongs to the interior of $S$ there exists
a finite set $X\subseteq\mathbb{Q}^n$ such that
$$\frac{S}{1+\varepsilon}\subseteq\operatorname{conv}(X)\subseteq S.$$
\end{lemma}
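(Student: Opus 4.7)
The plan is to prove the lemma by a classical inner polytope approximation of a convex body, tuned so that the approximating polytope has rational vertices and contains a slightly scaled-down copy of $\bar{S}$ with a small buffer separating it from the boundary of $S$.

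First I would set up the buffer. Denote by $\bar{S}$ the closure of $S$, which is compact and convex. Because $0\in\operatorname{int}(S)$, the standard convexity fact that the segment from an interior point to any point of the closure lies in the interior (except possibly at the far endpoint) gives, for every $\eta>0$, that $\frac{\bar{S}}{1+\eta}\subseteq\operatorname{int}(S)$ (take the interior point $0$ and endpoint $x\in\bar{S}$, and the interior combination $\frac{1}{1+\eta}x$). Since $\bar{S}$ is compact, this inclusion gives a compact subset of the open set $\operatorname{int}(S)$, in particular at positive distance from $\mathbb{R}^n\setminus S$. As $\frac{S}{1+\varepsilon}\subseteq\frac{\bar{S}}{1+\varepsilon}$, it suffices to produce $X$ with $\frac{\bar{S}}{1+\varepsilon}\subseteq\operatorname{conv}(X)\subseteq S$.

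Second, I would split the gap to give the target polytope room on both sides. Fix $\eta\in(0,\varepsilon)$, set $K:=\frac{\bar{S}}{1+\eta}$ and $\lambda:=\frac{1+\eta}{1+\varepsilon}\in(0,1)$, so that $\frac{\bar{S}}{1+\varepsilon}=\lambda K$. The problem reduces to exhibiting a rational-vertex polytope $P=\operatorname{conv}(X)$ such that $\lambda K\subseteq P\subseteq K$; since $K\subseteq\operatorname{int}(S)\subseteq S$, the right-hand inclusion of the lemma is automatic.

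Third, I would construct $P$ by a radial sphere-net inscribed in $K$. Because $0\in\operatorname{int}(K)$, the radial function $\rho_K$ is continuous and bounded below by some $r>0$ on the unit sphere. Pick a finite $\delta$-net $\{u_1,\dots,u_N\}$ of $S^{n-1}$, for a $\delta$ to be fixed later, and let $p_i:=\rho_K(u_i)\,u_i\in\partial K$. By density of $\mathbb{Q}^n$ in $\mathbb{R}^n$ and the fact that one can perturb $p_i$ slightly inward along the ray $\mathbb{R}_{\ge 0}u_i$ to enter $\operatorname{int}(K)$, one can then choose rational $q_i\in\mathbb{Q}^n\cap\operatorname{int}(K)$ arbitrarily close to $p_i$. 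Setting $X:=\{q_1,\dots,q_N\}$ and $P:=\operatorname{conv}(X)$, the inclusion $P\subseteq K$ is immediate.

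The main obstacle will be verifying $\lambda K\subseteq P$. This is the standard quantitative statement that a radial inscribed polytope Hausdorff-approximates a convex body: uniform continuity of $\rho_K$ on $S^{n-1}$ combined with the lower bound $\rho_K\ge r$ implies that, for $\delta$ sufficiently small (depending only on the ratio of the circumradius to the inradius of $K$), $\operatorname{conv}(\{p_i\})\supseteq(1-c\delta)K$ for an explicit constant $c$; replacing each $p_i$ by a sufficiently close rational $q_i$ only shrinks the polytope by an arbitrarily small further multiplicative factor. Choosing $\delta$ and the rational perturbations small enough that the cumulative contraction factor remains at least $\lambda$ then yields $\lambda K\subseteq P$, and the lemma follows. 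The quantitative estimate is the classical one for approximating a convex body by inscribed polytopes, so the real work is only in pinning down the constants.
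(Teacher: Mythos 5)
Your argument is sound in outline and does yield the lemma, but it follows a genuinely different route from the paper. You reduce to an inner approximation $\lambda K \subseteq P \subseteq K$ of the compact convex body $K = \frac{\bar S}{1+\eta}$, and then build $P$ as the convex hull of rational perturbations of the boundary points $\rho_K(u_i)u_i$ over a $\delta$-net $\{u_i\}$ of the sphere; the key step is the classical estimate that such a radially inscribed polytope satisfies $(1-c\delta)K \subseteq \operatorname{conv}(\{p_i\})$. The paper's proof instead quantifies a uniform separation between $\partial S$ and $\frac{S}{1+\varepsilon}$ directly — placing a ball of radius $\frac{\beta\varepsilon}{1+\varepsilon}$ around each boundary point of $\frac{S}{1+\varepsilon}$, via the convex hull of the inscribed ball $\mathcal B$ and the dilated point $(1+\varepsilon)x$ — and then sets $X$ to be the intersection of $S$ with a rational grid of mesh smaller than this separation divided by $\sqrt n$. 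The grid construction is more self-contained: once the separation bound is established, the inclusion $\frac{S}{1+\varepsilon} \subseteq \operatorname{conv}(X)$ follows immediately because every cell touching $\frac{S}{1+\varepsilon}$ lies entirely in $S$, with no appeal to an external approximation theorem. Your sphere-net route is equally valid, but the burden it defers — the quantitative inclusion $(1-c\delta)K \subseteq \operatorname{conv}(\{p_i\})$ with a constant controlled by the inradius/circumradius of $K$, further degraded by the rational perturbations — is precisely where the real work lies; as written it is stated but not proved, and Lipschitz (not merely uniform) continuity of $\rho_K$ is what is actually needed to obtain an explicit $c$. If you flesh out that estimate, the proof is complete; otherwise the grid argument in the paper reaches the same conclusion with less machinery.
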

\begin{proof}
Let $x\in\mathbb{R}^n$ be any point of the boundary of $\frac{S}{1+\varepsilon}$.
Since $0$ belongs to the interior of $S$ there exists $\beta\in\mathbb{R}_{>0}$ such that $\mathcal{B}$ the
closed ball of radius $\beta$ centered at $0$ is contained in $S$.
Since $(1+\varepsilon)x\in S$ then the convex hull $C$ of $(1+\varepsilon)x$ and $\mathcal{B}$ is included in $S$. Thus the minimum distance between $x$ and the boundary of $S$ is at least the distance between $x$ and the boundary of $C$. 
Since the center of $\mathcal{B}$, $x$ and $(1+\varepsilon)x$ are on the same line, the ball $\mathcal{B}'$ of center $x$ and of radius $\frac{\beta\epsilon}{1+\epsilon}$ is contained in $C$ (apply the intercept theorem to any radius of $\mathcal{B}'$ and the parallel radius of $\mathcal{B}$). Thus the distance between the boundary of $S$ and the boundary of $\frac{S}{1+\varepsilon}$ is at least $\frac{\beta\epsilon}{1+\epsilon}$.

Let $l\in\mathbb{Q}_{>0}$ be a rational such that $l<\frac{\beta\epsilon}{\sqrt{n}(1+\epsilon)}$ and let $X$ be the intersection of the grid of step $l$ with $S$, that is $$X=S\cap\{lx: x\in \mathbb{Z}^n\}.$$
Then the set $X$ is a finite set of $\mathbb{Q}^n$.
Moreover, the largest diagonal of a cell is smaller than $\frac{\beta\epsilon}{1+\epsilon}$, this implies that for any point $x\in\frac{S}{1+\varepsilon}$ the cells that contain $x$ are all fully contained in $S$ and so are the vertices of theses cells. This implies that any point from $\frac{S}{1+\varepsilon}$ is in the convex hull of $X$, which concludes the proof.
\end{proof}

\begin{lemma}\label{converse}
 Let $n$ be an integer, $\mathbf{B}:\mathbb{R}_{\ge 0}^{n}\times\mathbb{R}_{\ge 0}^{n}\rightarrow\mathbb{R}_{\ge 0}^{n}$\footnote{This is the first proof where we use the non-negativity condition. This lemma might hold even without this condition.} be a  bilinear map and $\mathbf{V_0},\mathbf{F}\in\mathbb{R}_{\ge 0}^{n}$ be vectors such that all the coordinates are accessible and co-accessible. Suppose that $\rho(\mathbf{B},\mathbf{V_0},\mathbf{F})<1$ then there is a finite set of rational vectors $X\subseteq\mathbb{Q}^n$ such that:
\begin{itemize}
 \item $\mathbf{V_0}\in\operatorname{conv}(X)$,
 \item for all $\vect{u},\vect{v}\in X$, we have $B(\vect{u},\vect{v})\in\operatorname{conv}(X)$.
\end{itemize}
\end{lemma}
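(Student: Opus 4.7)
The strategy is to produce a bounded convex set $\widetilde S\subseteq\mathbb R_{\ge 0}^n$ containing $0$ and $\mathbf V_0$ on which $\mathbf B$ acts as a strict contraction $\mathbf B(\widetilde S,\widetilde S)\subseteq\alpha\widetilde S$ for some rational $\alpha<1$, with additionally $\mathbf V_0\in\alpha\widetilde S$, and then to sandwich $\widetilde S$ from inside by a rational polytope $\operatorname{conv}(X)$ satisfying $\alpha\widetilde S\subseteq\operatorname{conv}(X)\subseteq\widetilde S$. Combining the two gives $\mathbf B(\operatorname{conv}(X),\operatorname{conv}(X))\subseteq\mathbf B(\widetilde S,\widetilde S)\subseteq\alpha\widetilde S\subseteq\operatorname{conv}(X)$---which by bilinearity of $\mathbf B$ reduces to the vertex-only condition $\mathbf B(\vect u,\vect v)\in\operatorname{conv}(X)$ for all $\vect u,\vect v\in X$---and $\mathbf V_0\in\alpha\widetilde S\subseteq\operatorname{conv}(X)$, which are the two conclusions of the lemma.

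To construct $\widetilde S$, I would pick a rational $\eta>0$ small enough that $(1+\eta)^2\rho<1$ and set $\alpha=1/(1+\eta)$. Scaling the orbit by $\widetilde R_k=(1+\eta)^{2k-1}\mathbf B^k(\mathbf V_0)$, the identity $(1+\eta)^{2i-1}(1+\eta)^{2j-1}=\alpha(1+\eta)^{2(i+j)-1}$ directly yields $\mathbf B(\widetilde R_i,\widetilde R_j)\subseteq\alpha\widetilde R_{i+j}$, while $\widetilde R_1=\{(1+\eta)\mathbf V_0\}$. Taking $\widetilde S=\operatorname{conv}\bigl(\bigcup_{k\ge 1}\widetilde R_k\cup\{0\}\bigr)$, the presence of $0$ gives $\alpha\widetilde S\subseteq\widetilde S$, and since $(1+\eta)\mathbf V_0\in\widetilde R_1\subseteq\widetilde S$ together with $\alpha(1+\eta)=1$ we get $\mathbf V_0\in\alpha\widetilde S$. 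Expanding convex combinations bilinearly lifts the generator-level contraction to $\mathbf B(\widetilde S,\widetilde S)\subseteq\alpha\widetilde S$.

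Boundedness of $\widetilde S$ is established first for $\mathbf F\cdot\widetilde S$ directly from $(1+\eta)^2\rho<1$, and then upgraded coordinate-wise via co-accessibility. For each coordinate $i$, the inductive definition of co-accessibility furnishes a fixed $\mathbf B$-expression $T_i$ of some depth $m_i$ with one free leaf and the remaining leaves filled by reachable vectors witnessing the accessibility of the indices chained through the definition. By non-negativity, the linear functional $\vect v\mapsto\mathbf F\cdot T_i(\vect v,\dots)$ has coefficient at least $c_i>0$ on the $i$-th coordinate; feeding $\vect v\in\widetilde R$ into the free slot and scaled accessible reachables from $\widetilde R$ into the others lands in $\alpha^{m_i}\widetilde R$, whose $\mathbf F$-image is bounded by $\alpha^{m_i}\sup_{\widetilde R}\mathbf F\cdot$. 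Together these force the $i$-th coordinate of $\vect v$ into a uniform interval, giving boundedness of $\widetilde R$ and hence of $\widetilde S$.

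The main obstacle is the final polytope approximation, since Lemma~\ref{nestedpolytopes} does not apply verbatim: $\widetilde S\subseteq\mathbb R_{\ge 0}^n$ has $0$ only on its boundary, not in its interior. My plan is to reprove the lemma in the slightly more general form needed here: for any bounded convex $\widetilde S\ni 0$ and any $\alpha<1$ with $\alpha\widetilde S\subseteq\widetilde S$, there exists a finite $X\subseteq\widetilde S\cap\mathbb Q^n$ containing $0$ with $\alpha\widetilde S\subseteq\operatorname{conv}(X)\subseteq\widetilde S$. The grid argument from the proof of Lemma~\ref{nestedpolytopes} transfers once the ``ball of radius $\beta$ around $0$'' is replaced by the radial gap $(1-\alpha)r_{\widetilde S}(\vect d)$ between the boundaries of $\alpha\widetilde S$ and $\widetilde S$ in each direction $\vect d$; since $\widetilde S$ is compact and $r_{\widetilde S}$ is continuous and positive on the compact set of directions of the cone generated by $\widetilde S$, this gap admits a uniform positive lower bound, so a fine enough rational grid anchored at $0$ produces the desired $X$.
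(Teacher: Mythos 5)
Your overall architecture matches the paper's: scale the orbit by $(1+\eta)^{2k-1}$ (equivalently, iterate $(1+\eta)\mathbf B$ from $(1+\eta)\mathbf V_0$), take a convex hull containing $0$, observe that bilinearity turns the generator-level contraction into $\mathbf B(\widetilde S,\widetilde S)\subseteq\tfrac{1}{1+\eta}\widetilde S$, bound the set via co-accessibility, and then look for a nested rational polytope. All of that is fine. The genuine gap is at the last step, and it is not a technicality you can patch by ``reproving Lemma~\ref{nestedpolytopes} in a slightly more general form'': the generalization you propose is \emph{false}. Take $n=2$, $\widetilde S=\operatorname{conv}\{0,(\sqrt 2,1)\}$, and any $\alpha\in(0,1)$. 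Then $\widetilde S$ is bounded, convex, contains $0$, and satisfies $\alpha\widetilde S\subseteq\widetilde S$, yet $\widetilde S\cap\mathbb Q^2=\{0\}$, so no finite rational $X\subseteq\widetilde S$ can have $\alpha\widetilde S\subseteq\operatorname{conv}(X)$. Your ``radial gap plus compactness'' argument silently assumes the grid actually meets $\widetilde S$ away from $0$, which fails whenever $\widetilde S$ is lower-dimensional and its affine hull is irrational; accessibility alone does not forbid this (e.g.\ an orbit lying on an irrational ray has every coordinate accessible but $\widetilde S$ is a segment).

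The ingredient your proof is missing is the downward closure $\operatorname{conv}_\le$ that the paper builds into $S$ from the start. Because $S=\operatorname{conv}_\le(\text{orbit})$ is downward-closed inside $\mathbb R_{\ge0}^n$ and each coordinate is accessible, $S$ contains a positive multiple $\alpha_i\vect e_i$ of every basis vector, hence $S$ is automatically full-dimensional. The paper then passes to $S'=\operatorname{conv}(S\cup\{-\mathds 1\})$, which has $0$ in its \emph{interior}, so Lemma~\ref{nestedpolytopes} applies \emph{as stated} to produce a rational $X'$ with $\tfrac{S'}{1+\varepsilon}\subseteq\operatorname{conv}(X')\subseteq S'$, and finally intersects back with $\mathbb R_{\ge0}^n$ (using $S'\cap\mathbb R_{\ge0}^n=S$, which holds precisely because $S$ is downward-closed) to get a rational polytope nested between $\tfrac{S}{1+\varepsilon}$ and $S$. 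Note also that $\mathbf B(\vect u,\vect v)\le\mathbf B(\vect u',\vect v')$ whenever $\vect u\le\vect u'$, $\vect v\le\vect v'$ by non-negativity, so the downward closure does not disturb the contraction step. If you insert $\operatorname{conv}_\le$ into your construction of $\widetilde S$ and then apply Lemma~\ref{nestedpolytopes} to $\operatorname{conv}(\widetilde S\cup\{-\mathds 1\})$ as the paper does, the rest of your plan goes through.
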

\begin{proof}
Let $n$ be an integer, $\mathbf{B}:\mathbb{R}_{\ge 0}^{n}\times\mathbb{R}_{\ge 0}^{n}\rightarrow\mathbb{R}_{\ge 0}^{n}$ be a  bilinear map and $\mathbf{V_0},\mathbf{F}\in\mathbb{R}_{\ge 0}^{n}$ be vectors such that all the coordinates are accessible and co-accessible with $\rho(\mathbf{B},\mathbf{V_0},\mathbf{F})<1$.
Let $\varepsilon\in\mathbb{R}_{>0}$ be a positive real such that 
$(1+\varepsilon)^2\rho(\mathbf{B},\mathbf{V_0},\mathbf{F})<1$. Then by Lemma \ref{simpleBilinearity}, we get $\rho((1+\varepsilon)\mathbf{B},(1+\varepsilon)\mathbf{V_0},\mathbf{F})<1$.

For any $x,y\in {\mathbb{R}}^n$, we write $x\le y$ if each coordinate of $x$  is at most equal to the corresponding coordinate of $y$.
For any $X\subseteq{\mathbb{R}_{\ge 0}}^n$, let $\operatorname{conv}_\le(X)= \{\vect{x}\in\mathbb{R}_{\ge 0}^n: \exists\vect{x'}\in \operatorname{conv}(X), \vect{x}\le\vect{x'}\}$ and 
$$S=\operatorname{conv}_\le\left(\bigcup_{k\ge0}((1+\varepsilon)\mathbf{B})^k(\mathbf{(1+\varepsilon)V_0})\right)$$

If $S$ is unbounded then so is $\bigcup_{k\ge0}((1+\varepsilon)\mathbf{B})^k(\mathbf{(1+\varepsilon)V_0})$.
Since all the coordinates are co-accessible, if that would imply that $\max_{\mathbf{v}\in((1+\varepsilon)\mathbf{B})^k((1+\varepsilon)\mathbf{V_0})}\left|\mathbf{F}\cdot\mathbf{v}\right|$ is unbounded which contradicts the fact that $\rho((1+\varepsilon)\mathbf{B},(1+\varepsilon)\mathbf{V_0},\mathbf{F})<1$. Thus all the coordinates of any element of $S$ are bounded. 

By bilinearity of $(1+\varepsilon)\mathbf{B}$ and the fact that all the coordinates of $\mathbf{B}$ are non-negative, for any $x,y\in S$, $(1+\varepsilon)\mathbf{B}(x,y)\in S$ and thus $\mathbf{B}(x,y)\in  \frac{S}{1+\varepsilon}$. Moreover, $(1+\varepsilon)\mathbf{V_0}\in S$ and thus
$\mathbf{V_0}\in  \frac{S}{1+\varepsilon}$.
We deduce, that for any set $X$ such that $\frac{S}{1+\varepsilon}\subseteq \operatorname{conv}(X)\subseteq S$, we get
\begin{itemize}
 \item $\mathbf{V_0}\in\operatorname{conv}(X)$,
 \item for all $\vect{u},\vect{v}\in X$, we have $\mathbf{B}(\vect{u},\vect{v})\in\operatorname{conv}(X)$.
\end{itemize}

In order to conclude the proof we only need to show that there exists a finite set 
$X\subseteq\mathbb{Q}^n$ such that $\frac{S}{1+\varepsilon}\subseteq \operatorname{conv}(X)\subseteq S$, but we cannot apply Lemma \ref{nestedpolytopes} immediately since $0$ is in the boundary of $S$.  

Let $S'= \operatorname{conv}(S\cup \{-\mathds{1}\})$ where $\mathds{1}$ is the vector whose all coordinates are $1$.
Since all the coordinates are accessible there exists a positive real $\alpha\in\mathbb{R}_{\ge0}$ such that for all $i\in\{1,\ldots, n\}$, the vector $\alpha \vect{e}_i$ belongs to $S$ ($\alpha \vect{e}_i$ is the vector whose $i$th coordinate is set to $\alpha$ and every other coordinate to 0).
Since $S'$ also contains $-\mathds{1}$ and is convex, $0$ is in the interior of $S'$. Thus by Lemma \ref{nestedpolytopes}, we deduce that there exists
 a finite set $X'\subseteq\mathbb{Q}^n$ such that
$$\frac{S'}{1+\varepsilon}\subseteq\operatorname{conv}(X')\subseteq S'.$$
Since $S=\operatorname{conv}_{\le}(S)$, we get $S'\cap \mathbb{R}_{\le0}=S$ and thus $$\frac{S}{1+\varepsilon}\subseteq \operatorname{conv}(X')\cap \mathbb{R}_{\le0}^n\subseteq S.$$
Moreover since $X'\subseteq\mathbb{Q}^n$ is a finite set, there exists a finite set $X\subseteq\mathbb{Q}^n$ such that $\operatorname{conv}(X)= \operatorname{conv}(X')\cap \mathbb{R}_{\le0}^n$. We finally get
$$\frac{S}{1+\varepsilon}\subseteq \operatorname{conv}(X)\subseteq S$$
which concludes the proof.
\end{proof}

As a Corollary of Lemma \ref{polytopefirstdirection} and Lemma \ref{converse} we get.

\begin{corollary}\label{polytopesaregoodenough}
 Let $n$ be an integer, $\mathbf{B}:\mathbb{R}_{\ge 0}^{n}\times\mathbb{R}_{\ge 0}^{n}\rightarrow\mathbb{R}_{\ge 0}^{n}$ be a  bilinear map and $\mathbf{V_0},\mathbf{F}\in\mathbb{R}_{\ge 0}^{n}$ be vectors such that all the coordinates are accessible and co-accessible. 
 Let $\Lambda\subseteq\mathbb{R}_{\ge 0}$ be the largest set such that for all $\lambda\in\Lambda$, there exists a finite set of rational vectors $X\subseteq\mathbb{Q}^n$ such that:
\begin{itemize}
 \item $\frac{\mathbf{V_0}}{\lambda}\in\operatorname{conv}(X)$,
 \item for all $\vect{u},\vect{v}\in X$, we have $B(\vect{u},\vect{v})\in\operatorname{conv}(X)$.
\end{itemize}
Then $\rho(\mathbf{B},\mathbf{V_0},\mathbf{F})= \inf \Lambda$.
\end{corollary}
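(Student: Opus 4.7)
The plan is to derive the corollary as a direct consequence of Lemmas \ref{polytopefirstdirection} and \ref{converse}, bridged by the scaling identity of Lemma \ref{simpleBilinearity}. The two inclusions $\rho(\mathbf{B},\mathbf{V_0},\mathbf{F})\le \inf\Lambda$ and $\inf\Lambda\le\rho(\mathbf{B},\mathbf{V_0},\mathbf{F})$ are handled separately, each invoking exactly one of the two lemmas after rescaling $\mathbf{V_0}$.

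For the upper bound, I would fix an arbitrary $\lambda\in\Lambda$ and let $X\subseteq\mathbb{Q}^n$ be a finite witnessing set, so that $\mathbf{V_0}/\lambda\in\operatorname{conv}(X)$ and $\mathbf{B}(\mathbf{u},\mathbf{v})\in\operatorname{conv}(X)$ for all $\mathbf{u},\mathbf{v}\in X$. Applying Lemma \ref{polytopefirstdirection} to the bilinear system $(\mathbf{B},\mathbf{V_0}/\lambda,\mathbf{F})$ (the set $X$ is finite hence bounded, so the hypotheses hold) yields $\rho(\mathbf{B},\mathbf{V_0}/\lambda,\mathbf{F})\le 1$. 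By Lemma \ref{simpleBilinearity}, $\rho(\mathbf{B},\mathbf{V_0}/\lambda,\mathbf{F}) = \rho(\mathbf{B},\mathbf{V_0},\mathbf{F})/\lambda$, so $\rho(\mathbf{B},\mathbf{V_0},\mathbf{F})\le \lambda$. Taking the infimum over $\lambda\in\Lambda$ yields $\rho(\mathbf{B},\mathbf{V_0},\mathbf{F})\le\inf\Lambda$.

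For the reverse inequality, I would take any $\lambda\in\mathbb{R}_{>0}$ with $\lambda>\rho(\mathbf{B},\mathbf{V_0},\mathbf{F})$. Again by Lemma \ref{simpleBilinearity}, $\rho(\mathbf{B},\mathbf{V_0}/\lambda,\mathbf{F}) = \rho(\mathbf{B},\mathbf{V_0},\mathbf{F})/\lambda < 1$. The bilinear system $(\mathbf{B},\mathbf{V_0}/\lambda,\mathbf{F})$ still has only non-negative coefficients and its coordinates are still accessible and co-accessible (the definitions depend only on the support pattern of the map and vectors, which is invariant under positive scalar multiplication of $\mathbf{V_0}$). Therefore Lemma \ref{converse} applies and produces a finite set $X\subseteq\mathbb{Q}^n$ with $\mathbf{V_0}/\lambda\in\operatorname{conv}(X)$ and $\mathbf{B}(\mathbf{u},\mathbf{v})\in\operatorname{conv}(X)$ for all $\mathbf{u},\mathbf{v}\in X$. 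Hence $\lambda\in\Lambda$, and so $\inf\Lambda\le\lambda$. Letting $\lambda$ decrease to $\rho(\mathbf{B},\mathbf{V_0},\mathbf{F})$ gives $\inf\Lambda\le\rho(\mathbf{B},\mathbf{V_0},\mathbf{F})$.

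No real obstacle is expected here, since the work has already been done in Lemmas \ref{polytopefirstdirection} and \ref{converse}; the only bookkeeping is to confirm that the accessible/co-accessible hypothesis passes from $(\mathbf{B},\mathbf{V_0},\mathbf{F})$ to $(\mathbf{B},\mathbf{V_0}/\lambda,\mathbf{F})$ (immediate) and to note that the boundary case $\lambda=\rho(\mathbf{B},\mathbf{V_0},\mathbf{F})$ need not itself lie in $\Lambda$ for the equality of infima to hold.
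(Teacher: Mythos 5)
Your proposal is correct and is essentially the argument the paper intends: the paper states the corollary as an immediate consequence of Lemmas~\ref{polytopefirstdirection} and~\ref{converse} without writing out the details, and what you have written is exactly that argument made explicit, with Lemma~\ref{simpleBilinearity} supplying the rescaling step in both directions.
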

This corollary holds if we replace the condition $\Lambda\subseteq\mathbb{R}$ by $\Lambda\subseteq \mathbb{S}$
where $\mathbb{S}$ is dense in $\mathbb{R}$ (this is particularly interesting if we take the set of rational number or the set of algebraic numbers).
We finally deduce that the growth rate of a bilinear system if approximable from above.
\begin{theorem}\label{mainTh}
 There exists an algorithm that takes as input an integer $n$, a bilinear map $\mathbf{B}:\mathbb{R}_{\ge 0}^{n}\times\mathbb{R}_{\ge 0}^{n}\rightarrow\mathbb{R}_{\ge 0}^{n}$ and two vectors $\mathbf{V_0},\mathbf{F}\in\mathbb{R}_{\ge 0}^{n}$ and that approximate from above $\rho(\mathbf{B},\mathbf{V_0},\mathbf{F})$. That is, this algorithm outputs a decreasing sequence of rational numbers $(a_i)_{i\in\mathbb{N}}$ such that
 $\lim\limits_{i\mapsto\infty}a_i= \rho(\mathbf{B},\mathbf{V_0},\mathbf{F})$.
\end{theorem}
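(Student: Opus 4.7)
The plan is to turn Corollary \ref{polytopesaregoodenough} into an algorithm by exhaustively searching for polytope certificates. As a preprocessing step, I would first restrict the bilinear system to its accessible and co-accessible coordinates: both properties are defined inductively and can be computed by a straightforward fixed-point iteration, and the lemma preceding Corollary \ref{polytopesaregoodenough} guarantees that this preserves the growth rate while ensuring the corollary applies.

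The core of the algorithm enumerates, in some effective order, all pairs $(X,\lambda)$ where $X$ is a finite subset of $\mathbb{Q}^{n}$ and $\lambda\in\mathbb{Q}_{>0}$; both sets are countable, so such an enumeration is algorithmic. For each pair, the algorithm checks whether $\mathbf{V_0}/\lambda\in\operatorname{conv}(X)$ and whether $\mathbf{B}(\mathbf{u},\mathbf{v})\in\operatorname{conv}(X)$ for every $\mathbf{u},\mathbf{v}\in X$. Both are membership problems for a rational point in the convex hull of finitely many rational points, and so are decidable by linear programming over $\mathbb{Q}$; the second check reduces to $|X|^{2}$ such instances. When both tests succeed, Lemma \ref{polytopefirstdirection} together with Lemma \ref{simpleBilinearity} certifies $\rho(\mathbf{B},\mathbf{V_0},\mathbf{F})\le\lambda$. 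The algorithm keeps track of the smallest $\lambda$ witnessed so far and emits it as the next term of $(a_i)$, repeating the current best between successful checks so that the output is an infinite non-increasing sequence of rationals.

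It then remains to argue that $a_i\to\rho(\mathbf{B},\mathbf{V_0},\mathbf{F})$. The inequality $a_i\ge\rho(\mathbf{B},\mathbf{V_0},\mathbf{F})$ is immediate from the certificate just described. For the converse, Corollary \ref{polytopesaregoodenough}, combined with the density of $\mathbb{Q}$ in $\mathbb{R}$, guarantees that for every $\varepsilon>0$ there is a rational $\lambda<\rho(\mathbf{B},\mathbf{V_0},\mathbf{F})+\varepsilon$ together with a finite rational set $X$ witnessing $\lambda\in\Lambda$; the corresponding pair $(X,\lambda)$ is reached after finitely many steps of the enumeration, forcing $a_i<\rho(\mathbf{B},\mathbf{V_0},\mathbf{F})+\varepsilon$ from then on. The main obstacle I foresee is not the decidability of each individual check, which is routine LP feasibility, but rather making sure that the witness provided by Lemma \ref{converse} can always be taken with \emph{rational} vertices --- otherwise a blind enumeration over $\mathbb{Q}^{n}$ could miss every certificate approaching $\rho(\mathbf{B},\mathbf{V_0},\mathbf{F})$. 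This is exactly what the careful grid approximation in Lemma \ref{nestedpolytopes} buys us, and it is the reason why the proof needs to pass through the enlarged convex set $S'$ containing $0$ in its interior.
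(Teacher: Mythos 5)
Your proposal is correct and follows essentially the same approach as the paper: enumerate candidate pairs $(X,\lambda)$ over $\mathbb{Q}^n$ and $\mathbb{Q}_{>0}$, certify upper bounds via Lemma~\ref{polytopefirstdirection} and Lemma~\ref{simpleBilinearity}, and invoke Lemma~\ref{converse} (via Corollary~\ref{polytopesaregoodenough}) for completeness of the search. The only cosmetic difference is that you enumerate pairs directly while the paper dovetails over rationals each with its own search for a polytope, and you are slightly more careful than the paper in explicitly stating the preprocessing reduction to accessible and co-accessible coordinates, which is needed for Lemma~\ref{converse} to apply.
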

\begin{proof}
 Given a rational number $\alpha\in\mathbb{Q}$ one can enumerate all the finite sets $X\subseteq\mathbb{Q}^n$ until finding such a set that respects the conditions of Lemma \ref{polytopefirstdirection} with $\mathbf{B}$,$\frac{\mathbf{V_0}}{\alpha}$ and $\mathbf{F}$.
 If $\rho(\mathbf{B},\mathbf{V_0},\mathbf{F})<\alpha$, Lemma \ref{converse} implies that we will find such a set in which case we will be able to deduce that $\rho(\mathbf{B},\mathbf{V_0},\mathbf{F})\le\alpha$ by Lemma \ref{polytopefirstdirection}.
 
 Now,we can do that for all the rationals ``in parallel'' (enumerate the rational and, between each new rational, run one step of computation for every rational already enumerated). Whenever the search succeeds for a rational $\alpha$, we have a new upper bound for $\rho(\mathbf{B},\mathbf{V_0},\mathbf{F})\le\alpha$ and we can stop running the search for every larger rational. 
 This sequence of upper bound is clearly increasing and since the search is going to succeed at some point for any valid upper bound the sequence converges toward the best upper bound, that is $\rho(\mathbf{B},\mathbf{V_0},\mathbf{F})$.
\end{proof}
This is obviously not possible to use this algorithm even for the simplest cases. However, in the next section we are able to use this approach anyway to deduce a bunch of results.

\section{Applications}\label{results}
As already stated, the algorithm of Theorem \ref{mainTh} cannot be use to obtain interesting bounds.
Instead of trying all convex bodies, it is much more efficient to explicitly compute  $\operatorname{conv}_\le\left(\bigcup_{k=1}^n\mathbf{B}^k\left(\frac{\mathbf{V_0}}{\alpha}\right)\right)$\footnote{Remark, that if one find a set $X$ such that $\operatorname{conv}(X)$ is as desired then so is $\operatorname{conv}_{\le}(X)$, but using $\operatorname{conv}_{\le}$ leads to more efficient computations (at least when one uses linear programming to find $\operatorname{conv}(X)$ or $\operatorname{conv}_{\le}(X)$).} for some ``small''\footnote{By small, we  mean ``as large as computationally feasible'' which tends to be small.} $n$ and a chosen $\alpha\in\mathbb{R}_{\ge0}$.
In some cases, it reaches the limit in finitely many steps. 
In other cases,  by inspecting $\operatorname{conv}_\le\left(\bigcup_{k=1}^n\mathbf{B}^k\left(\frac{\mathbf{V_0}}{\alpha}\right)\right)$, one can guess the limit or a set that would respect the conditions.
In many cases, it is easy to find construction that give lower bounds.
In fact, by inspecting the vertices of $\operatorname{conv}_\le\left(\bigcup_{k=1}^n\mathbf{B}^k\left(\frac{\mathbf{V_0}}{\alpha}\right)\right)$ one can find the corresponding extreme construction.

Instead of explicitly computing $\operatorname{conv}_\le\left(\bigcup_{k=1}^n\mathbf{B}^k\left(\frac{\mathbf{V_0}}{\alpha}\right)\right)$ for all $k$ it is slightly more efficient to use the following algorithm.

\begin{figure}[h!]
\begin{algorithm}[H]\label{algofindpolytope}
 \KwData{$\mathbf{B}, \vect{v}_0$, $\alpha$}
 \KwResult{A set $X$ that respects the conditions of Lemma \ref{polytopefirstdirection} with $\mathbf{B}$ and $\frac{\vect{v}_0}{\alpha}$}
 $X:=\{\frac{\vect{v}_0}{\alpha}\}$\;
  \While{$\exists \vect{x},\vect{y}\in X$  such that $\mathbf{B}(\vect{x},\vect{y})\not\in \operatorname{conv}_\le(X)$}{
    $X':=\{\mathbf{B}(\vect{x},\vect{y}): \mathbf{B}(\vect{x},\vect{y})\in X\}$\;
    $X := \operatorname{Hull}_\le(X\cup X')$\;
 }
\caption{Computation of the set $X$}
\end{algorithm}
\end{figure}

In this algorithm, for all $Y$, $\operatorname{Hull}_\le(Y)$ is the smallest subset $Y'$ of $Y$ that verifies $\operatorname{conv}_\le(Y')=\operatorname{conv}_\le(Y)$. Remark, that $\operatorname{Hull}_\le$ can be easily computed using linear programming. Most of the computation time is spent on $\operatorname{Hull}_\le$ and it would be interesting to reduce it (even our simplex implementation is sub-optimal).

Our implementation of this technique uses Mona \cite{monamanual2001} to obtain the tree automaton from an $MSO_1$ formula. Then a C++ program reads the output and produces the bilinear map. Our program then requires the user to input a value $\alpha$ (either as a rational or as the root of a polynomial) and inductively computes $\operatorname{conv}_\le\left(\bigcup_{k=1}^n\mathbf{B}^k\left(\frac{\mathbf{V_0}}{\alpha}\right)\right)$ until reaching a fix-point. The user can also provide some other vectors that should be added to the set, this is useful whenever the limit of $\operatorname{conv}_\le\left(\bigcup_{k=1}^n\mathbf{B}^k\left(\frac{\mathbf{V_0}}{\alpha}\right)\right)$ is not reached in finite time but can be guessed by the user. We only use exact computations either on the rationals or on finite algebraic extension of $Q$ so there is no issue of precision.
More implementation details are given in Annex \ref{C++desc}.

In this Section, we first study the example of independent dominating sets in details in order to illustrate the technique presented here. This is a nice example since the computations fit in a human brain. Then we provide other results where the aid of the computer is necessary. 
The choice of examples is arbitrary and the main criterion of choice is that these sets were interesting enough to be named. 
It should be noted that we obtained good bounds for every familly of sets that we tryed and that they are all listed in this section (that is, we don't know yet of a familly of set definable in $MSO_1$ for which our approach fails completely). 
Recall that the case of minimal dominating set was already solved by Rote that gave the upper bound of $95^\frac{1}{13}$, our program can verify and agrees with the result of Rote \cite{r-mnmds-19,arxivRote}. 

\subsection{Independent dominating sets}
A set of vertices $S$ of a graph $(V,E)$ is an independent dominating set if it is an independent set and a dominating set. That is, no two vertices of $S$ are adjacent and every vertex outside of $S$ has a neighbor in $S$.
Moreover, a set is an independent dominating sets if and only if is is a maximal independent sets if and only if its complement is a minimal vertex-cover (or a  minimal transversal). 
It was showed in \cite{indepDom}, that the maximal number of independent dominating set of a tree of order $n$ is $2^\frac{n-1}{2}$ when $n$ is odd, and is $2^{\frac{n}{2}-1}+1$ when $n\ge2$ is even. 
We provide an alternative proof of this result in this subsection to illustrate our approach.

Independent dominating sets can be defined by the following formula.
\begin{align*}
\Psi(S):=&\left(\forall x,y: (x\in S\land y\in S)\implies \lnot E(x,y)\right)\\
\land&\left(\forall x: x\notin S \implies \exists y: y\in S \land E(x,y)\right).
\end{align*}

The associated tree automaton is $\mathcal{A}=(Q, \Sigma, Q_f, \delta)$ where $\Sigma=\{J_0,J_1,\basetree_0,\basetree_1\}$, $Q=\{F,D,d\}$, $Q_f=\{D,d\}$, $\delta$ is such that
\begin{align*}
\delta(\basetree_0)=F \quad\quad\quad\quad\quad& \delta(\basetree_1)=D\\
\delta(J_0,F,D)=d \quad\quad\quad\quad\quad& \delta(J_0,F,d)=F\\
\delta(J_0,D,d)=D \quad\quad\quad\quad\quad&  \delta(J_0,D,F)=D \\
\delta(J_0,d,D)=d \quad\quad\quad\quad\quad& \delta(J_0,d,d)=d 
\end{align*}
and $\delta$ is not defined in the remaining cases. Recall that there is no transition for $J_1$ since the set $S$ is consistent.

This tree automaton is automaticaly obtained using our implementation, but one could easily find this automaton without knowing links between $MSO$ and automata. Indeed, one may interpret $D$ as ``the current root is in the independent dominating set'', $d$ as ``the current root is not in the independent set and is already dominated by one of its children'' and $F$ as ``the current root is not in the independent set and is not already dominated by one of its children''. Then clearly two nodes with the state $D$ should not be connected since this wouldn't give an independent set and thus there is no transition $\delta(J_0,D,D)$. If one connects a node in state $F$ with a node in state $d$, then the nodes are still in the same state and the node in state $F$ has to be the new root otherwise it will not gain any new neighbor in the future to dominate it, thus $\delta(J_0,F,d)=F$ and there is no transition $\delta(J_0,d,F)$. If one connects a node in state $F$ with a node in state $D$, then the node in state $F$ is now in state $d$ and we get $\delta(J_0,F,D)=d$ and $\delta(J_0,D,F)=D$. The other transitions can be deduced in a similar way.

The bilinear map $\mathbf{B}:\mathbb{R}_{\ge 0}^{3}\times\mathbb{R}_{\ge 0}^{3}\rightarrow\mathbb{R}_{\ge 0}^{3}$ and  vectors $\mathbf{V_0},\mathbf{F}\in\mathbb{R}_{\ge 0}^{3}$ that can be computed from $\Psi$ that respect Corollary \ref{formulatolinearalg} are given by:
$$\mathbf{V_0}=
\begin{pmatrix}
1\\1\\0
\end{pmatrix},\,
\mathbf{F}=
\begin{pmatrix}
0\\1\\1
\end{pmatrix}
\text{ and for all } x,y\in\mathbb{R}^3,\,
\mathbf{B}(x,y)=
\begin{pmatrix}
x_1y_3\\
x_2y_1+x_2y_3\\
x_1y_2+x_3y_3+x_3y_2
\end{pmatrix}\,. $$

Let $X$ be the set $X=\left\{
\begin{pmatrix}
0\\0\\\frac{1}{\sqrt{2}}
\end{pmatrix},
\begin{pmatrix}
0\\\frac{1}{2}\\\frac{1}{2}
\end{pmatrix},
\begin{pmatrix}
\frac{1}{\sqrt{2}} \\\frac{1}{\sqrt{2}} \\0
\end{pmatrix}
\right\}$. 
One easily checks that $X$ is stable by $\mathbf{B}$ and contains $\frac{\mathbf{V_0}}{\sqrt{2}}$.
We deduce by Lemma \ref{polytopefirstdirection}, that $\rho(\mathbf{B},\frac{\mathbf{V_0}}{\sqrt{2}},\mathbf{F})\le1$ which implies  by Lemma \ref{simpleBilinearity} that $$\rho(\mathbf{B},\mathbf{V_0},\mathbf{F})\le\sqrt{2}\,.$$
Thus if we let $\rho_n$ be the maximal number of independent dominating set of a tree of order $n$, we deduce that
$\lim\limits_{n\rightarrow\infty}(\rho_n )^\frac{1}{n}=\sqrt2$. In fact, Lemma \ref{polytopefirstdirection} even tells us that for all $n$, $\rho_n\le  2^{\frac{n}{2}} \max_{x\in X} \mathbf{F}\cdot x= 2^{\frac{n}{2}}$ .

For any positive integer $n$, let $T_n$ be the graph given by (see Figure \ref{indepdomtree}) 
$$T_n=(\{s,s_1,\ldots,s_{2n}\}, \{(s,s_{2i-1}): i\in{1,\ldots,n}\}\cup\{(s_{2i-1}, s_{2i}): i\in{1,\ldots,n}\}).$$
Let $D$ be a set such that for all $i$, $|S\cap\{s_{2i},s_{2i+1}\}|=1$ and $s\in D$ if and only if for all $i$, $s_{2i-1}\in S$. 
Then $D$ is an independent dominating set of $T_n$. Thus there are at least $2^{n}$ independent dominating sets of $T_n$ which is of order $2n+1$.

\begin{figure}[H]
\centering
\includegraphics{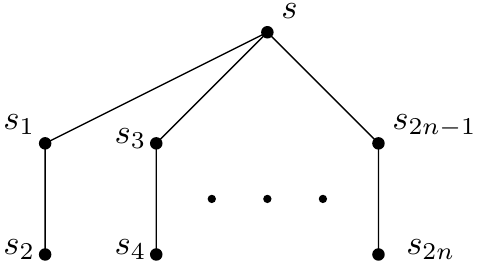}
\caption{Trees with $\Theta(2^{\frac{n}{2}})$ independent dominating sets \label{indepdomtree}}
\end{figure}
\begin{proposition}\label{indepdom}
The number of independent dominating sets in a tree of order $n$ is at most $2^{\frac{n}{2}}\approx 1.4142135^n$.
It is sharp in the sense that the number of independent dominating sets is at least $2^{\frac{n-1}{2}}=\frac{ 2^{\frac{n}{2}}}{\sqrt{2}}$ for infinitely many trees.
\end{proposition}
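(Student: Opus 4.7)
The plan is to combine an application of Lemma \ref{polytopefirstdirection} to the set $X$ exhibited just before the proposition (which yields the upper bound) with an explicit count of independent dominating sets in the tree family $T_n$ (which yields the matching lower bound for odd orders).

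For the upper bound, I would verify by direct computation that the set
$$X=\left\{\begin{pmatrix}0\\0\\1/\sqrt 2\end{pmatrix},\begin{pmatrix}0\\1/2\\1/2\end{pmatrix},\begin{pmatrix}1/\sqrt 2\\1/\sqrt 2\\0\end{pmatrix}\right\}$$
satisfies the two hypotheses of Lemma \ref{polytopefirstdirection} applied to $(\mathbf{B},\mathbf{V_0}/\sqrt 2,\mathbf{F})$. The first hypothesis is immediate, since $\mathbf{V_0}/\sqrt 2=(1/\sqrt 2,1/\sqrt 2,0)$ is the third element of $X$; the second is a nine-case check that evaluates $\mathbf{B}(\vect{u},\vect{v})$ on each ordered pair in $X\times X$ and writes the result as a convex combination of elements of $X$ (possibly using the downward closure $\operatorname{conv}_\le$, which is harmless in the present non-negative setting). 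The lemma then gives $|\mathbf{F}\cdot\vect{u}|\le\max_{x\in X}|\mathbf{F}\cdot x|=\sqrt 2$ for every $\vect{u}\in\bigcup_k\mathbf{B}^k(\mathbf{V_0}/\sqrt 2)$; combining with Lemma \ref{simpleBilinearity} and Corollary \ref{formulatolinearalg} yields $\rho_n\le 2^{n/2}$ for every $n$, where $\rho_n$ denotes the maximum number of independent dominating sets in a tree of order $n$.

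For the lower bound, I would fix an integer $n\ge 1$ and count the independent dominating sets of the tree $T_n$ (a root $s$ with $n$ attached paths $s-s_{2i-1}-s_{2i}$, so $|T_n|=2n+1$) by splitting on whether $s\in S$. If $s\in S$, independence forces $s_{2i-1}\notin S$ for every $i$, and then each leaf $s_{2i}$ (whose only neighbour $s_{2i-1}$ lies outside $S$) must itself lie in $S$ to be dominated; this yields a single valid set. If $s\notin S$, then for each $i$ the pair $\{s_{2i-1},s_{2i}\}$ admits exactly two legal states -- one of the two vertices in $S$, the other out -- since ``both in'' violates independence and ``both out'' leaves $s_{2i}$ undominated. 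This gives $2^n$ configurations; the single configuration in which no $s_{2i-1}$ lies in $S$ would leave $s$ undominated, so we are left with $2^n-1$ valid sets. Summing yields $1+(2^n-1)=2^n=2^{(|T_n|-1)/2}$ independent dominating sets in $T_n$, and the family $\{T_n\}_{n\ge 1}$ supplies infinitely many trees attaining the claimed lower bound.

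The only genuinely labour-intensive part is the nine-case stability verification for $X$ under $\mathbf{B}$: some images fall on a vertex of $X$ (e.g.\ the image of the third element against itself is the second element) or on a clean convex combination of two vertices, while others (e.g.\ the image of the first element against itself is $(0,0,1/2)$, and the image of the second element against itself is $(0,1/4,1/2)$) require writing the result as a scaled convex combination of elements of $X$ with scaling factor at most $1$, which is exactly where working modulo the downward closure is convenient. Once this stability check is in hand, both bounds follow immediately from the cited lemmas and the elementary counting argument on $T_n$.
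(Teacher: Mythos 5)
Your proposal is correct and follows essentially the same path as the paper: the same set $X$, the same invocation of Lemma \ref{polytopefirstdirection} (together with Lemma \ref{simpleBilinearity} and Corollary \ref{formulatolinearalg}) for the upper bound, and the same spider-tree family $T_n$ for the lower bound. You merely spell out the nine-case stability check and the $1+(2^n-1)=2^n$ enumeration more explicitly than the paper does, and you correctly flag that a few images such as $(0,0,1/2)$ and $(0,1/4,1/2)$ land in $\operatorname{conv}_\le(X)$ rather than on $\operatorname{conv}(X)$ directly, which is harmless here since $\mathbf{B}$ and $\mathbf{F}$ have non-negative coefficients (equivalently, adjoin the zero vector to $X$).
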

In fact, the gap between the upper-bound and the lower-bound is only due to the paths on $2$ and $4$ vertices and we can provide an upper bound that is reached for every odd integer.
\begin{proposition}
Let $n\ge5$. The number of independent dominating sets in a tree of order $n$ is at most $ 2^{\frac{n-1}{2}}$. 
\end{proposition}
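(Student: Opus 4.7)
The plan is to refine the polytope argument of Proposition~\ref{indepdom} so as to shave off a factor of $\sqrt 2$ whenever $n\ge 5$. The extremal vertex of $X$ that drove the previous bound $2^{n/2}$ was $b=(0,\tfrac12,\tfrac12)$, which equals $\mathbf B(\mathbf V_0,\mathbf V_0)/2$ and thus encodes precisely the two-vertex path; the slack the statement removes is exactly what $b$ contributes for the small instances $n\in\{2,4\}$ excluded by the hypothesis.

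I would proceed by strong induction on $n\ge 5$, maintaining the invariant that every $v\in\mathbf B^{n}(\mathbf V_0)$ satisfies $v/2^{(n-1)/2}\in\operatorname{conv}_\le(X')$ for a finite rational set $X'\subseteq\mathbb{R}^{3}_{\ge 0}$ chosen so that $\max_{x\in X'}\mathbf F\cdot x\le 1$. The base cases $n\in\{5,6,7,8\}$ are handled by explicit enumeration of $\mathbf B^{n}(\mathbf V_0)$ (a modest finite computation, with one vector per proper binary ordered tree with $n$ leaves), which simultaneously supplies the candidate $X'$ by throwing in all vectors $\mathbf B^{n}(\mathbf V_0)/2^{(n-1)/2}$ for these small $n$. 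For the inductive step $n\ge 9$, any decomposition $v=\mathbf B(u_1,u_2)$ with $u_1\in\mathbf B^{i}(\mathbf V_0)$ and $u_2\in\mathbf B^{n-i}(\mathbf V_0)$ satisfies $\max(i,n-i)\ge\lceil n/2\rceil\ge 5$, so the induction hypothesis places the larger factor inside the tight set $X'$; combining this with the coarser coordinatewise bound $u\in 2^{k/2}\operatorname{conv}_\le(X)$ from Proposition~\ref{indepdom} on the smaller factor and using the non-negative bilinearity of $\mathbf B$, the step reduces to two finite closure properties: $\mathbf B(X,X')\cup\mathbf B(X',X)\subseteq\operatorname{conv}_\le(X')$ in the mixed case, and $\mathbf B(X',X')\subseteq\sqrt 2\,\operatorname{conv}_\le(X')$ in the case where both indices are at least~$5$.

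The principal obstacle is the concrete construction of $X'$: one must simultaneously enforce $\max_{x\in X'}\mathbf F\cdot x\le 1$, the base-case inclusions for $5\le n\le 8$, and the two closure properties above. This is a finite linear-programming task of the same flavour as Algorithm~\ref{algofindpolytope}, but now involving the interaction of the candidate set $X'$ with the pre-existing $X$ of Proposition~\ref{indepdom}. The hypothesis $n\ge 5$ is exactly what makes these constraints compatible: for $n\in\{2,4\}$ the genuine maxima $2$ and $3$ strictly exceed $2^{(n-1)/2}$, so no choice of $X'$ could satisfy the inclusions uniformly in~$n$, which explains both why the small cases must be isolated as base cases and why the refined argument succeeds as soon as they are removed.
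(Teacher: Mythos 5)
Your plan is logically sound but follows a genuinely different route from the paper's. The paper's own proof is short: it folds the size constraint into the formula itself, considering the $MSO_1$ property ``$S$ is an independent dominating set \emph{and} the tree has order at least $5$,'' obtains a (larger) automaton and bilinear system, and then runs Algorithm~\ref{algofindpolytope} exactly as before. The change is purely at the automaton level; Lemma~\ref{polytopefirstdirection} is invoked once, with no new invariants and no case analysis on subtree sizes. Your proposal instead keeps the original $3$-dimensional system from Proposition~\ref{indepdom} and replaces the uniform polytope invariance by a bespoke strong induction, using the coarse $2^{k/2}$ bound from Proposition~\ref{indepdom} on the small factor, the refined $2^{(k-1)/2}$ bound from the induction hypothesis on the large factor, explicit base cases $n\in\{5,6,7,8\}$, and two coupled closure conditions, namely $\mathbf{B}(X,X')\cup\mathbf{B}(X',X)\subseteq\operatorname{conv}_\le(X')$ and $\mathbf{B}(X',X')\subseteq\sqrt 2\,\operatorname{conv}_\le(X')$. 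The arithmetic of the exponents does work out in both the mixed and the large--large case, and your observation that $\lceil n/2\rceil\ge 5$ for $n\ge 9$ correctly isolates the base cases, so the \emph{structure} of the induction is correct.

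The gap you yourself flag is real and not cosmetic: the set $X'$ is never exhibited, and nothing in the paper's theory guarantees its existence here. Lemma~\ref{converse} only guarantees a suitable polytope when the growth rate after scaling is \emph{strictly} below $1$, whereas here the scaled growth rate equals $1$ and the target constant is tight (the bound $2^{(n-1)/2}$ is attained for every odd $n\ge5$), which is exactly the boundary regime where finite invariant polytopes can fail to exist. Your scheme additionally couples the unknown $X'$ to the fixed polytope $X$ from Proposition~\ref{indepdom} via the mixed-closure condition, an extra constraint that the paper's ``bigger automaton'' approach never imposes, and you would have to show it is simultaneously satisfiable with the base-case inclusions. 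So the proposal is a legitimate alternative \emph{strategy}, but as written it is a plan rather than a proof: to close it you would have to actually produce $X'$ and verify the three finite conditions, a computation comparable to (and arguably harder to automate than) simply enriching the $MSO_1$ formula and rerunning Algorithm~\ref{algofindpolytope} as the paper does. What the paper's approach buys is uniformity and reusability: no new lemma, no hand-crafted induction, and the same machinery applies verbatim to any $MSO_1$-expressible side constraint, not just ``order at least $5$.''
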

\begin{proof}
This result can be obtained by considering the $MSO_1$ property ``$S$ is an independent dominating set and the tree is of order at least $5$''. The associated tree automaton is bigger, but Algorithm \ref{algofindpolytope} still converges quickly toward the limit convex set. One easily deduces a set $X$ that respects the conditions of Lemma \ref{polytopefirstdirection} (such a set is provided in our implementation). The bound is then simply obtained from application of Lemma  \ref{polytopefirstdirection} on this set.
\end{proof}
We were able to provide the sharp bound for every odd integer, but the bound we got for the odd integers is not sharp since according to \cite{indepDom} it should be $2^{\frac{n}{2}-1}+1$.
We could in fact use exactly the same technique for the even case. Indeed, the property ``the tree is of even order'' cannot be expressed in $MSO_1$, but is recognizable by tree-automatons. However, the tools that we used did not allow us to easily implement this.

This trick to obtain better asymptotic bounds is worth mentioning, but we did not used it with on the other results of this section. It is however, clear that this would give better bounds for all the results where the multiplicative constant is not optimal. 

\subsection{Total perfect dominating sets}
A subset of vertices of a tree is a total perfect dominating set if
every vertices have exactly one neighbor in this set. The initial vector, final vector and the bilinear map corresponding to these sets are respectively given by
$$\mathbf{V_0}=
\begin{pmatrix}
0\\0\\1\\1
\end{pmatrix},\,
\mathbf{F}=
\begin{pmatrix}
1\\1\\0\\0
\end{pmatrix}
\text{ and for all } u,v\in\mathbb{R}^4,\,
\mathbf{B}(x,y)=
\begin{pmatrix}
 u_1v_1+ u_3v_2\\
 u_2v_3+ u_4v_4\\
 u_3v_1\\
 u_4v_3
\end{pmatrix}\,. $$

We get the following result
\begin{proposition}\label{treetpd}
Let $\alpha = (2^{27}\times 7)^\frac{1}{85}\approx1.275157$ and $C=\frac{\alpha^{80}}{234881024}\approx 1.186429$.
The number of total perfect dominating sets of a tree of order $n$ is upper-bounded by $C\alpha^n$.
This value of $\alpha$ is sharp.
\end{proposition}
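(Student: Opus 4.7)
The plan is to apply the framework of Section \ref{msotolinearalg} to the $4$-dimensional bilinear system $(\mathbf{B},\mathbf{V_0},\mathbf{F})$ displayed just above the statement. By Lemma \ref{polytopefirstdirection} applied to $(\mathbf{B},\mathbf{V_0}/\alpha,\mathbf{F})$ together with Lemma \ref{simpleBilinearity}, it suffices to exhibit a finite set $X \subseteq \mathbb{R}_{\ge 0}^{4}$ with $\mathbf{V_0}/\alpha \in \operatorname{conv}(X)$ and $\mathbf{B}(\vect{u},\vect{v}) \in \operatorname{conv}(X)$ for all $\vect{u},\vect{v} \in X$. Once such an $X$ is found, bilinearity together with the ``more precise'' half of Lemma \ref{polytopefirstdirection} gives, for every $\vect{v}\in\mathbf{B}^{n}(\mathbf{V_0})$,
\[
    |\mathbf{F}\cdot \vect{v}| \;\le\; \alpha^{n} \sup_{\vect{x}\in X} |\mathbf{F}\cdot \vect{x}|,
\]
since the $n$ occurrences of $\mathbf{V_0}$ in the construction of $\vect{v}$ absorb $n$ factors of $1/\alpha$. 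The upper bound $C\alpha^{n}$ then follows, and matching the stated value $C = \alpha^{80}/234881024 = 4/\alpha^{5}$ (using $\alpha^{85}=2^{27}\cdot 7$) reduces to verifying $\sup_{\vect{x}\in X} |\mathbf{F}\cdot \vect{x}| = 4/\alpha^{5}$.

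The first concrete step is to run Algorithm \ref{algofindpolytope} on this $\mathbf{B}$, starting from $\mathbf{V_0}/\alpha$ with $\alpha = (2^{27}\cdot 7)^{1/85}$. The very form of $\alpha$, as an $85$th root of $2^{27}\cdot 7$, is dictated by the requirement that the $\operatorname{conv}_{\le}$--iteration stabilize: any smaller $\alpha$ would make the sequence blow up, any larger $\alpha$ would yield a non-tight estimate. Termination of Algorithm \ref{algofindpolytope} at this value of $\alpha$ therefore produces an explicit polytope $X$ whose vertices live in $\mathbb{Q}(\alpha)$, with an orbit structure whose length divides $85$. Computing $\max_{\vect{x}\in X}\mathbf{F}\cdot \vect{x}$ is then a bounded exact-arithmetic calculation in $\mathbb{Q}(\alpha)$, which I would check equals $4/\alpha^{5}$.

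For the sharpness claim I would read off an extremal tree construction from the vertices of $X$, following the recipe outlined at the start of Section \ref{results}. Each vertex of the limit polytope records the state distribution (via the automaton of Corollary \ref{corphitoA}) of some extremal partial tree, and composing the gadgets indexed by the $\mathbf{B}$-transitions along the periodic orbit of the algorithm produces a family of trees whose number of total perfect dominating sets grows at rate $\alpha^{n}$. Concretely, I expect an $85$-vertex gadget $G$ whose $k$-fold concatenation yields a tree of order $85k + O(1)$ with $(2^{27}\cdot 7)^{k} = \alpha^{85k}$ total perfect dominating sets, matching the upper bound to leading order and thereby showing that $\alpha$ is optimal.

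The main obstacle is the identification and combinatorial verification of this $85$-vertex gadget. On the upper-bound side the work is essentially computational: Algorithm \ref{algofindpolytope} is finite and exact, and its successful termination at $\alpha$ is a self-contained certificate. The genuinely nontrivial step is on the lower-bound side: one must dissect which bilinear transitions $\mathbf{B}(\vect{u},\vect{v})$ sit on the boundary of $X$, translate them back into concrete tree gadgets via Corollary \ref{corphitoA}, and count the total perfect dominating sets of the resulting periodic tree \emph{exactly}, not merely asymptotically. The factorization $2^{27}\cdot 7$ suggests a gadget carrying $27$ independent binary choices and one $7$-valued choice per period, and the crux of the argument is to make that combinatorial picture precise.
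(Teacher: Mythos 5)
Your plan matches the paper's proof exactly: the upper bound is obtained by running Algorithm~\ref{algofindpolytope} at $\alpha=(2^{27}\cdot 7)^{1/85}$ and applying Lemma~\ref{polytopefirstdirection} to the resulting polytope (listed in Annex~\ref{tpdset}, whose maximal $\mathbf{F}$-product is indeed the vertex $(3,1,0,0)^T/(2^{27}\cdot 7\,\alpha^{-80})$, giving $4/\alpha^5 = C$), and the lower bound is read off from the polytope as a periodic $85$-vertex gadget. The one cosmetic difference is that the paper's gadget realizes $2^{27}\cdot 7$ as $4^{13}\times 14$ --- a star of $14$ pendant subtrees, one of which is forced and the remaining $13$ each offering $4$ choices --- rather than the $27$-binary-plus-one-$7$-ary split you conjectured, but this does not affect the validity of your plan.
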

\begin{proof}
The computation of $\bigcup_{k\ge1}\mathbf{B}^k(\frac{\mathbf{V_0}}{\alpha})$ by Algorithm \ref{algofindpolytope} converges in finitely many steps and we reach a finite set $X$ such that $\operatorname{conv}(X)=\operatorname{conv}\left(\bigcup_{k\ge1}\mathbf{B}^k(\frac{\mathbf{V_0}}{\alpha})\right)$. By construction this set respects the conditions of Lemma \ref{polytopefirstdirection} which implies that for all  $\vect{u}\in \bigcup_{k\ge1}\mathbf{B}^k(\mathbf{V_0})$, $|\mathbf{F}\cdot \vect{u}| \le \max_{ x\in X}|\mathbf{F}\cdot x|$. Given the set $X$ one easily verifies $C=\max_{ x\in X}|\mathbf{F}\cdot x|$ which concludes the proof of the upper bound  $C\alpha^n$. The set $X$ is provided in Annex \ref{tpdset}.

Let $T$ be the tree depicted in figure \ref{treetpds}.
\begin{figure}
\centering
\includegraphics[scale=1]{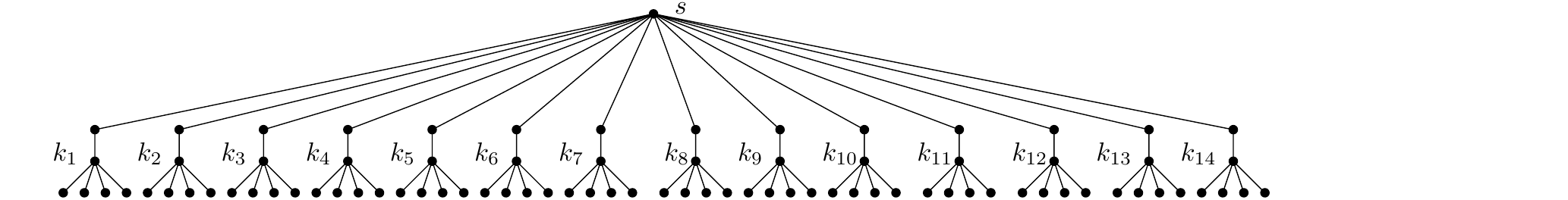}
\caption{The tree $T$.\label{treetpds}}
\end{figure}
\begin{figure}
\centering
\includegraphics[scale=0.75]{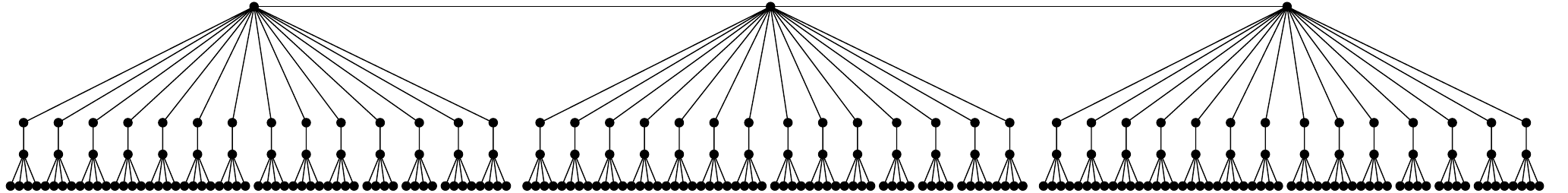}
\caption{The tree $T_3$ that has $(2^{27}\times 7)^3$ total perfect dominating sets.\label{treetpds2}}
\end{figure}
Let $D$ be a perfect total dominating set of $T$.  
Each leaf of the tree needs a neighbor in $D$, so all the $k_i$ are in $D$.
This implies that $s$ is not in $D$, otherwise the nodes between $s$ and the $k_i$ would have two neighbors in $D$.
Moreover, exactly one neighbor of $s$ is in $D$ and $k_i$ is the only other vertex in $D$ in the corresponding subtree, since $k_i$ needs exactly one neighbor in $D$.
For the other subtrees, exactly one of the leaf of each of them is in $D$.
Thus there are at least $(4^{13}\times 14)=2^{27}\times 7$ perfect total dominating sets.

Now let $T_k$ be a chain of $k$ copies of $T$ where two consecutive copies share an edge between their roots.
Then since the roots cannot be in the perfect total dominating sets the number of perfect total dominating sets is exactly $(2^{27}\times 7)^k= \alpha^{85k}$.
Thus $T_k$ has $n=k (6\times14+1)=85k$ vertices and $\alpha^n$ total perfect dominating sets.
\end{proof}

This case illustrates the power of our technique. Given that the growth rate is $(2^{27}\times 7)^\frac{1}{85}$, it seems that there should be no elementary proof. However, the set $X$ is found automatically without human intervention in approximately 10 minutes on an average laptop. The construction of the lower bound corresponds to one of the vertices obtained when trying to find $X$ with a smaller $\alpha$.

\subsection{Perfect codes}
A subset $S$ of vertices of a graph $(V,E)$ is a perfect code if any vertex in $V\setminus S$ has exactly one neighbor in $S$ and any vertex of $S$ has no neighbor in $S$. In other words, a set is a perfect code if it is an independent set and a perfect dominating set.
The initial vector, final vector and the bilinear map corresponding to these sets are respectively given by
$$\mathbf{V_0}=
\begin{pmatrix}
0 \\ 1\\  1 
\end{pmatrix},\,
\mathbf{F}=
\begin{pmatrix}
1\\  1\\    0
\end{pmatrix}\text{ and for all } u,v\in\mathbb{R}^3,\,
\mathbf{B}(x,y)=
\begin{pmatrix}
  u_1v_1+ u_3v_2\\
   u_2v_3\\
   u_3v_1
\end{pmatrix}\,. $$

\begin{proposition}
Let $\alpha = 3^{\frac{1}{7}}\approx1.16993$ and  $C=\frac{2}{3}\alpha^5\approx  1.4612$. 
The number of perfect codes of a tree of order $n$ is upper-bounded by $C\alpha^n$.
This value of $\alpha$ is tight.
\end{proposition}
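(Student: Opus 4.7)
The plan is to follow the same recipe as in Proposition \ref{treetpd}. First I would run Algorithm \ref{algofindpolytope} on the bilinear system $(\mathbf{B}, \mathbf{V_0}/\alpha, \mathbf{F})$ given above with $\alpha = 3^{1/7}$, and expect it to stabilize after finitely many iterations on a finite set $X \subseteq \mathbb{Q}(\alpha)^3$ with $\operatorname{conv}(X) = \operatorname{conv}\!\bigl(\bigcup_{k\ge 1}\mathbf{B}^k(\mathbf{V_0}/\alpha)\bigr)$. By construction this $X$ satisfies the hypotheses of Lemma \ref{polytopefirstdirection} applied to $\mathbf{B}$ and $\mathbf{V_0}/\alpha$, so for every $\vect{u} \in \bigcup_{k\ge 1}\mathbf{B}^k(\mathbf{V_0}/\alpha)$ we get $|\mathbf{F}\cdot \vect{u}| \le \max_{x\in X}|\mathbf{F}\cdot x|$. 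A direct inspection of the $X$ returned by the implementation should yield $\max_{x\in X}|\mathbf{F}\cdot x| = \tfrac{2}{3}\alpha^5 = C$. Combined with Lemma \ref{simpleBilinearity}, this gives the upper bound: for any tree of order $n$, the number of perfect codes is at most $C\alpha^n$.

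For the matching lower bound, I would exhibit a ``brick'' tree $T$ on $7$ vertices that admits exactly $3$ perfect codes and can be chained through a vertex that, in every perfect code of $T$, is forced outside the code (so that gluing two copies by identifying such vertices does not interact across bricks). Concretely, I would look at the vertex of $\operatorname{conv}(X)$ that achieves the maximum of $\mathbf{F}\cdot x$ and read off the corresponding extremal construction (as the author did in Proposition \ref{treetpd}); this is the whole point of inspecting the polytope's vertices. Chaining $k$ copies of such a brick then yields a tree $T_k$ on $n = 7k$ vertices with exactly $3^k = \alpha^n$ perfect codes, showing that $\alpha$ cannot be improved.

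The main obstacle I anticipate is not the upper bound itself, which is a purely mechanical application of the algorithm plus Lemma \ref{polytopefirstdirection}, but two checks that rely on the concrete output: (i) verifying that the algorithm really terminates at $\alpha = 3^{1/7}$ rather than spiralling indefinitely, and (ii) guessing the correct brick $T$ so that the lower-bound construction matches the upper bound exactly. Both points mirror the situation in Proposition \ref{treetpd}, where the author notes that the brick for the construction was read off from the extreme vertices obtained when running the procedure with a slightly smaller $\alpha$; I would follow the same heuristic here. Once the brick is identified, the verification that it has the claimed number of perfect codes and that the gluing vertex is always outside any such code is an elementary case analysis on the small tree.
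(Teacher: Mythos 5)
Your plan follows the paper's proof almost exactly: Algorithm \ref{algofindpolytope} produces a finite invariant set $X$ (the paper lists five explicit vectors over $\mathbb{Q}(\alpha)$), Lemma \ref{polytopefirstdirection} together with Lemma \ref{simpleBilinearity} yields the bound $C\alpha^n$ with $C = \max_{x\in X}|\mathbf{F}\cdot x| = \tfrac{2}{3}\alpha^5$, and the lower bound comes from a brick tree $T$ on $7$ vertices --- a central vertex $s$ with three pendant $P_2$'s --- which has exactly $3$ perfect codes, none containing $s$.

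One point deserves correction: your phrase ``gluing two copies by \emph{identifying} such vertices'' is not what makes the count multiply, and in fact would ruin it. If you literally identify the centers of $k$ copies you obtain a single vertex $s$ with $3k$ pendant $P_2$'s, which has $6k+1$ vertices and only $3k$ perfect codes (exactly one of the $3k$ inner vertices $a_{i,j}$ may lie in $S$, since $s\notin S$ must be dominated exactly once). What your own numbers $n = 7k$ and $3^k$ actually describe --- and what the paper does --- is to keep all $k$ centers distinct and link them by new edges so they form a path $s_1 - s_2 - \cdots - s_k$. Since no $s_i$ can belong to any perfect code and each $s_i$ must then see exactly one $a_{i,j}$ in $S$ among its own pendant paths (the neighboring centers contribute nothing, being outside $S$), the $k$ copies contribute independent factors of $3$, giving $3^k = \alpha^{7k}$ perfect codes on $7k$ vertices. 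With that correction your argument matches the paper's.
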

\begin{proof}
The proof that the upper-bound is correct can be done by finding a set $X$ that respects the conditions of Lemma \ref{polytopefirstdirection}. One easily verifies that this is the case for the following set that can be found by Algorithm \ref{algofindpolytope} 
$$X=\left\{
\begin{pmatrix}
 0 \\  \frac{1}{3}\alpha^6 \\  \frac{1}{3}\alpha^6   
\end{pmatrix},
\begin{pmatrix}
\frac{1}{3}\alpha^4 \\  0 \\  \frac{1}{3}\alpha^4   
\end{pmatrix},
\begin{pmatrix}
\frac{1}{3}\alpha^5  \\ \frac{1}{3}\alpha^5 \\  0  
\end{pmatrix},
\begin{pmatrix} 
\frac{2}{3}\alpha^2  \\ 0 \\  \frac{1}{3}\alpha^2  
\end{pmatrix},
\begin{pmatrix} 
1  \\ 0 \\  \frac{1}{3}   
\end{pmatrix}\right\}$$

In order to verify that this bound is sharp, let $T$ be the tree made of a central vertex $s$ with $3$ pendant $P_2$ (this graph is given in Figure \ref{treepc}).
Let $T_k$ be the tree made of $k$ copies of $T$ connected by a path going through the central vertices.
\begin{figure}[H]
\centering
\includegraphics{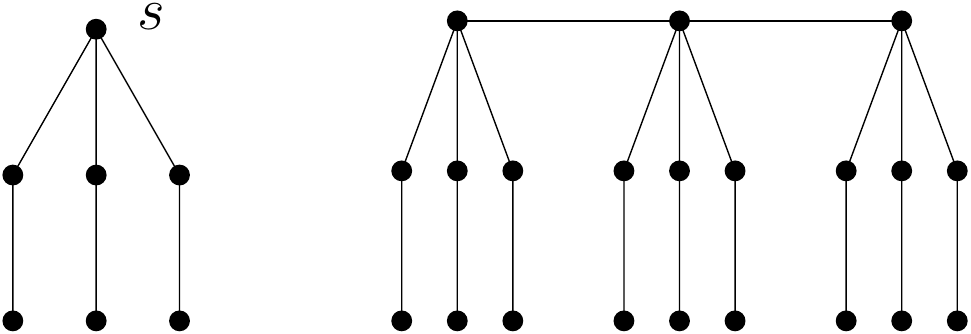}
\caption{On the left $T$ and on the right $T_3$.\label{treepc}}
\end{figure}
The number of perfect codes of $T$ is $3$ and none of the perfect codes contain $s$, thus $T_k$ at least $3^{k}$ perfect codes. Since $T_k$ has $7k$ vertices this concludes the proof of the lower bound.
\end{proof}

\subsection{Minimal perfect dominating set}
A subset $S$ of vertices of a graph $(V,E)$ is a perfect dominating set if every vertex in $V\setminus S$ has exactly one neighbor in $S$.
A perfect dominating set $S$ is minimal if there is no other perfect dominating set strictly included in $S$.

The star on $n$ vertices has $2^{n-1}-1$ perfect dominating sets, so the growth rate of perfect dominating sets over trees is $2$. However, the case of minimal perfect dominating sets is not so trivial.
The initial vector, final vector and the bilinear map corresponding to these sets are respectively given by
$$\mathbf{V_0}=
\begin{pmatrix}
0 \\ 0 \\ 1\\  1\\  0\\  0 
\end{pmatrix},\,
\mathbf{F}=
\begin{pmatrix}
1\\  1\\  1\\  0\\  0\\  0  
\end{pmatrix}$$
$$\text{ and for all } u,v\in\mathbb{R}^6,\,
\mathbf{B}(x,y)=
\begin{pmatrix}
 u_1v_1+ u_1v_4+ u_1v_5+ u_3v_4+ u_5v_1+ u_5v_4+ u_6v_4\\
 u_2v_2+ u_4v_1+ u_4v_3+ u_4v_5+ u_4v_6\\
0\\
 u_4v_2\\
 u_3v_1+ u_5v_5+ u_6v_1\\
 u_3v_5+ u_6v_5\\
\end{pmatrix}\,. $$
We can show the following result.
\begin{proposition}\label{minperfectdomsets}
 Let $\alpha$ be the real root of $x^3-x-1$ between 1 and 2, $\alpha\approx1.32472$ and $C=- 2\alpha^2 + 2\alpha + 2\approx1.14133$. 
 Then the number of minimal perfect dominating set in a tree of order $n$ is bounded by $C\alpha^n$.
 Moreover, this value of $\alpha$ is sharp even for paths.
\end{proposition}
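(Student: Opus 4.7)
The plan follows the same two-step template used for the preceding propositions (independent dominating sets, total perfect dominating sets, perfect codes): prove the upper bound by exhibiting a finite algebraic set $X$ stable under $\mathbf{B}$ containing $\mathbf{V_0}/\alpha$, and prove matching lower bound by a direct path construction. Since $\alpha$ is irrational, $X$ will live in $\mathbb{Q}(\alpha)^6$, and every entry will be a $\mathbb{Q}$-linear combination of $1,\alpha,\alpha^2$, with reductions done using the relation $\alpha^3=\alpha+1$.

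For the upper bound, I would run Algorithm \ref{algofindpolytope} with the bilinear map $\mathbf{B}$, vector $\mathbf{V_0}/\alpha$, and working arithmetic in $\mathbb{Q}(\alpha)$ (so as to keep all computations exact). Since $\alpha$ is precisely the threshold value predicted by the lower bound construction, one expects the iteration to converge to a finite set of extreme points; in the previous cases the algorithm terminated within a handful of iterations once $\alpha$ was fixed correctly. Writing this output set as $X$, I then verify two facts mechanically: that $\mathbf{V_0}/\alpha \in \operatorname{conv}_\le(X)$ and that $\mathbf{B}(u,v)\in\operatorname{conv}_\le(X)$ for every pair $u,v\in X$, each check reducing to a small feasibility LP in the coefficient field. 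Lemma \ref{polytopefirstdirection} then gives $|\mathbf{F}\cdot \vect{v}|\le \max_{x\in X}|\mathbf{F}\cdot x|$ for every $\vect{v}\in\bigcup_k \mathbf{B}^k(\mathbf{V_0}/\alpha)$, and Lemma \ref{simpleBilinearity} rescales this to yield the desired bound; identifying the maximizing vertex of $X$ should give exactly the value $C=-2\alpha^2+2\alpha+2$ announced in the statement.

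For the lower bound on paths, let $a_n$ denote the number of minimal perfect dominating sets of the path $P_n$. I would partition the minimal perfect dominating sets of $P_n$ according to the status of one of the endpoints (in the set, dominated by its unique neighbor, or still to be dominated further along), obtaining a linear recurrence. Because $\alpha$ is the real root of $x^3-x-1$, the natural target is the recurrence $a_n=a_{n-2}+a_{n-3}$ for $n$ large enough, with an appropriate handful of initial conditions. Standard linear-recurrence asymptotics then give $a_n=\Theta(\alpha^n)$, which matches the upper bound up to the multiplicative constant and certifies sharpness of $\alpha$ even restricted to paths.

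The main obstacle will be the first step: producing the set $X$ and pushing the verifications through by hand or, more realistically, citing the implementation (as has been done in the previous propositions) and supplying $X$ explicitly in an annex analogous to Annex \ref{tpdset}. The combinatorial case analysis on paths for the lower bound is routine but requires some care in handling short initial cases so that the recurrence really governs $a_n$ from some point on; once the recurrence is established, the asymptotics are immediate from the characteristic polynomial $x^3-x-1$.
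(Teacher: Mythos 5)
Your upper-bound strategy coincides with the paper's: run Algorithm~\ref{algofindpolytope} over $\mathbb{Q}(\alpha)$ with $\alpha^3=\alpha+1$, obtain a finite stable set $X$, and apply Lemma~\ref{polytopefirstdirection} together with the rescaling Lemma~\ref{simpleBilinearity}. The paper in fact lists the four vectors of $X$ explicitly, and the constant $C=-2\alpha^2+2\alpha+2$ is read off as $\max_{x\in X}\mathbf{F}\cdot x$. Nothing to change there.

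For the lower bound on paths you propose a genuinely different route: a hand-built case analysis on the endpoint of $P_n$ leading to a closed recurrence, which you conjecture to be $a_n = a_{n-2}+a_{n-3}$. The paper instead stays entirely inside the machinery it already set up: attaching a single leaf to a rooted tree corresponds to the \emph{linear} map $\vect{v}\mapsto\mathbf{B}(\mathbf{V_0},\vect{v})$, which is an explicit $6\times 6$ matrix $M$; the count for $P_n$ is $\mathbf{F}\cdot M^{n}\mathbf{V_0}$, and one reads the growth from the Perron eigenvalue of $M$, which turns out to be $\alpha$. The payoff of the paper's route is that no combinatorial case analysis is needed and the correct recurrence (via the characteristic polynomial of $M$) is produced mechanically. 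Your route is more elementary in flavor but carries a real gap: you have not derived the recurrence, and it is not a priori obvious that the scalar sequence $a_n$ satisfies a degree-$3$ recurrence rather than the degree-$6$ one coming from $M$. (It will, provided $\mathbf{V_0}$ and $\mathbf{F}$ have trivial components on the generalized eigenspaces of $M$ outside the $x^3-x-1$ factor, but that has to be checked, or the combinatorial derivation done carefully.) Until that recurrence is actually established with its initial conditions, your lower-bound argument is a plan rather than a proof; the paper's matrix-eigenvalue argument closes exactly this gap.
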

\begin{proof} 
Algorithm \ref{algofindpolytope} applied to $\mathbf{B}, $ converges in finite time and provides the following set that respects the conditions of Lemma \ref{polytopefirstdirection} 
$$\left\{\begin{pmatrix}
0 \\
0  \\
 \alpha^2 - 1 \\
\alpha^2 - 1 \\
0 \\
0   
\end{pmatrix},\
\begin{pmatrix}
0  \\
\alpha^2 - \alpha \\
0 \\
\alpha^2 - \alpha \\
\alpha^2 - \alpha \\
0  
\end{pmatrix},
\begin{pmatrix}
\alpha - 1 \\
\alpha - 1 \\
0 \\
\alpha - 1 \\
0 \\
\alpha - 1  
\end{pmatrix},
\begin{pmatrix}
- \alpha^2 + \alpha + 1  \\
- \alpha^2 + \alpha + 1 \\
0 \\
0  \\
0  \\
0 
\end{pmatrix}\right\}\,.$$
 We deduce our upper-bound.

Let us now show that this value of $\alpha$ is sharp even for paths.
Let $P_0= \basetree$ and for all $n$, $P_{n+1}= J(\basetree, P_n)$, then clearly $P_n$ is the path of order $n$ rooted at one of its end.
Let $\widetilde{P_0}=\mathbf{V_0}$ and for all $n$, $\widetilde{P_{n+1}}= \mathbf{B}(\mathbf{V_0}, \widetilde{P_n})$, then
$\mathbf{F}\cdot\widetilde{P_n}$ is the number of minimal perfect dominating set of the path of of order $n$.
Let $M$ be the matrix given by
$$M=\begin{pmatrix}
    0&0&0&1&0&0\\
    1&0&1&0&1&1\\
    0&0&0&0&0&0\\
    0&1&0&0&0&0\\
    1&0&0&0&0&0\\
    0&0&0&0&1&0
   \end{pmatrix}$$ then for all vector $\vect{v}$, $\mathbf{B}(\mathbf{V_0}, \vect{v})= M\vect{v}$ and $\widetilde{P_n}=M^n\mathbf{V_0}$.
The largest eigenvalue of $M$ is $\alpha$ so  the number of minimal perfect dominating set of the path of of order $n$ grows in $\Theta(\alpha^n)$.   
\end{proof}

\subsection[r-matchings]{$r$-matchings}
For any $r\ge1$ an $r$-matching is a set of edges $M$ such that any two edges of $M$ are at distance at least $r$. Matchings are exactly $1$-matchings and induced matchings are exactly induced matchings.  In other words, $S$ is an induced matching of a graph $(V,E)$ if any vertex from $S$ has exactly one neighbor in $S$. The authors of \cite{inducedmatching} initiated a study of the number maximal number of $r$-matchings in trees. The case of matching ($1$-matching) was already solved and it is known that the number of matching of a tree of order $n$ is maximized by the path of order $n$ \cite{matchings}.
In \cite{inducedmatching}, they showed that the number of $2$-matching  over trees is also maximized over paths. They asked whether this property also holds for $r$-matching with other values of $r$ and showed
that this is not the case for $r\notin\{3, 4, 5, 7, 9\}$ and left the remaining cases as an open question.
Using our technique it is rather simple to solve all the remaining cases. We show that for $r\in\{4,5,7,9\}$ the number of $r$-matchings over trees is not maximized over paths. 
In the case of $3$-matchings, we show that, up to a multiplicative constant, the number of $3$-matchings of the path of order $n$ is as big as the number of $3$-matchings of any tree of order $n$. A careful study of the vertices of the polytope would probably  allow to remove the multiplicative constant, but we leave this as an open question.

A set $S$ induces a matching of a graph if and only if every vertex from $S$ has exactly one neighbor in $S$. Moreover, the fact that two nodes are at least at distance $k$, for any fixed integer $k$, is easily expressible in $MSO_1$. Thus for any $r\ge2$, $r$-matchings of graphs are definable in $MSO_1$.
Matchings over graphs are more naturally expressed in $MSO_2$, but in trees $MSO_1$ has the same expressivity as $MSO_2$. Thus our technique is well suited to attack this question.
The cases of $1$-matchings and $2$-matchings were already treated, but we could easily provide an independent proof of these results in half a page (it goes exactly as the proof of Proposition \ref{minperfectdomsets}).

\begin{proposition}\label{3matchings}
 Let $\alpha\approx1.3802$ be the real root of $x^4-x^3-1$ between 1 and 2. 
 Then there exists a constant $C$ such that the number of $3$-matchings in a tree of order $n$ is at most $C\alpha^n$.
 Moreover, this value of $\alpha$ is sharp even for paths.
\end{proposition}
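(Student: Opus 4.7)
The plan is to follow the template of Proposition \ref{minperfectdomsets}: express $3$-matchings in $MSO_1$, use Corollary \ref{formulatolinearalg} to extract a bilinear system, apply Algorithm \ref{algofindpolytope} for the upper bound, and then handle paths by an explicit linear-recurrence calculation for the matching lower bound.

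First I would write an $MSO_1$ formula $\Psi(S)$ characterizing $3$-matchings as vertex sets: ``every vertex of $S$ has exactly one neighbor in $S$, and for any two distinct edges $\{u_1,u_2\}, \{v_1,v_2\}$ in $S$, every path between them uses at least $3$ edges.'' Since distance-at-most-$k$ between two vertices is expressible in $MSO_1$ for any fixed $k$, this is indeed an $MSO_1$ formula. By Corollary \ref{formulatolinearalg} we obtain an integer $n$, a bilinear map $\mathbf{B}:\mathbb{R}_{\ge 0}^n\times\mathbb{R}_{\ge 0}^n\to\mathbb{R}_{\ge 0}^n$, and vectors $\mathbf{V_0},\mathbf{F}\in\mathbb{R}_{\ge 0}^n$ such that the number of $3$-matchings of a tree with $k$ vertices equals $\max_{\vect{v}\in\mathbf{B}^k(\mathbf{V_0})}\mathbf{F}\cdot\vect{v}$.

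For the upper bound, I would run Algorithm \ref{algofindpolytope} on input $(\mathbf{B},\mathbf{V_0},\alpha)$ with $\alpha$ the prescribed root of $x^4-x^3-1$. As with the other examples of this section, this computation (carried out by the implementation) terminates and outputs a finite set $X\subseteq\mathbb{Q}[\alpha]^n$ such that $\frac{\mathbf{V_0}}{\alpha}\in\operatorname{conv}(X)$ and $\mathbf{B}(\vect{u},\vect{v})\in\operatorname{conv}(X)$ for all $\vect{u},\vect{v}\in X$. Lemma \ref{polytopefirstdirection} applied to $(\mathbf{B},\frac{\mathbf{V_0}}{\alpha},\mathbf{F})$, combined with Lemma \ref{simpleBilinearity}, then yields the announced bound $C\alpha^n$ with $C=\max_{x\in X}|\mathbf{F}\cdot x|$.

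For the tightness via paths, I would mimic the final paragraph of Proposition \ref{minperfectdomsets}. Setting $P_0=\basetree$ and $P_{n+1}=J(\basetree,P_n)$, and $\widetilde{P_n}\in\mathbb{R}_{\ge 0}^n$ the associated vector, one has $\widetilde{P_{n+1}}=\mathbf{B}(\mathbf{V_0},\widetilde{P_n})=M\widetilde{P_n}$ where $M$ is the matrix of the linear map $\vect{v}\mapsto\mathbf{B}(\mathbf{V_0},\vect{v})$. So $\mathbf{F}\cdot\widetilde{P_n}$ is the number of $3$-matchings of the path of order $n$. To identify its growth rate, I would prefer the purely combinatorial route: if $f(n)$ denotes this number, then conditioning on whether the last edge belongs to the matching gives $f(n)=f(n-1)+f(n-4)$ (a chosen last edge forces the three preceding edges to be unused), whose characteristic polynomial is exactly $x^4-x^3-1$. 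Standard linear-recurrence theory then gives $f(n)=\Theta(\alpha^n)$, matching the dominant eigenvalue of $M$.

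The main obstacle is purely the finite-time convergence of Algorithm \ref{algofindpolytope} at the exact value $\alpha$; this is a finite but nontrivial polytope computation, delegated to the implementation as in the rest of the section. Once convergence is observed, the rest reduces to bookkeeping and a routine verification that the extracted set $X$ satisfies the two conditions of Lemma \ref{polytopefirstdirection}, together with the elementary recurrence above.
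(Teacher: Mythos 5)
Your proposal is correct and essentially coincides with the paper's proof: the upper bound comes from extracting the bilinear system for $3$-matchings, running Algorithm~\ref{algofindpolytope} at the prescribed algebraic $\alpha$ to get a polytope $X$, and then invoking Lemma~\ref{polytopefirstdirection} together with Lemma~\ref{simpleBilinearity}. The only divergence is in the lower bound over paths: the paper simply cites \cite{inducedmatching} for the fact that $\alpha$ is the growth rate of $3$-matchings over paths, whereas you rederive it from the recurrence $f(n)=f(n-1)+f(n-4)$ with characteristic polynomial $x^4-x^3-1$; your route is a bit more self-contained but otherwise equivalent.
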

\begin{proof}
It was showed in \cite{inducedmatching} that $\alpha$ is the growth rate of the number of $3$-matchings over paths, so we only have to show that the bound holds for all trees.

For $3$-matchings, the initial vector, final vector and the bilinear map are respectively given by
$$\mathbf{V_0}=
\begin{pmatrix}
1\\0\\0\\1
\end{pmatrix},\,
\mathbf{F}=
\begin{pmatrix}
1\\  1\\  1\\  0
\end{pmatrix}\text{ and for all } u,v\in\mathbb{R}^4,\,
\mathbf{B}(x,y)=
\begin{pmatrix}
u_1v_1+ u_1v_2\\
u_1v_3+ u_2v_1+ u_2v_2\\
u_3v_1+ u_4v_4\\
u_4v_1\\
\end{pmatrix}\,. $$
 Algorithm \ref{algofindpolytope} converges in finite time and provides a set that respects the conditions of Lemma \ref{polytopefirstdirection} and we get our upper-bound. We provide this set in Annex \ref{m3sets}.
\end{proof}

\begin{proposition}\label{4matchings}
 Let $\alpha=13^\frac{1}{9}\approx 1.329754$. 
 Then there exists a constant $C$ such that the number of $4$-matchings in a tree of order $n$ is at most $C\alpha^n$.
 Moreover, this value of $\alpha$ is sharp.
\end{proposition}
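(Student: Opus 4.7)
The proof will follow the same template as Propositions \ref{treetpd}, \ref{3matchings}, and \ref{minperfectdomsets}: translate the property ``$S$ is a $4$-matching'' into an $MSO_1$ formula, extract the associated bilinear system $(\mathbf{B},\mathbf{V_0},\mathbf{F})$ via Corollary \ref{formulatolinearalg}, and then tackle the upper and lower bounds separately. The state space of the automaton must track, for each subtree, both which edges of the subtree have been chosen in the matching and how close the nearest chosen edge is to the root (up to distance $4$, since any further away is irrelevant to future choices). This gives a bilinear map on roughly a dozen coordinates, slightly larger than the one for $3$-matchings but of the same nature.

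For the upper bound, the plan is to run Algorithm \ref{algofindpolytope} on $(\mathbf{B},\mathbf{V_0}/\alpha,\mathbf{F})$ with $\alpha = 13^{1/9}$. If the iteration terminates, it produces a finite set $X \subseteq \mathbb{Q}(\alpha)^n$ such that $\mathbf{V_0}/\alpha \in \operatorname{conv}(X)$ and $\mathbf{B}(\vect{u},\vect{v}) \in \operatorname{conv}_{\le}(X)$ for all $\vect{u},\vect{v}\in X$. Lemma \ref{polytopefirstdirection} then gives that for every $\vect{u} \in \bigcup_{k\ge 1} \mathbf{B}^k(\mathbf{V_0}/\alpha)$ we have $|\mathbf{F}\cdot\vect{u}|\le \max_{x\in X}|\mathbf{F}\cdot x|$, and combining with Lemma \ref{simpleBilinearity} yields the upper bound $C\alpha^n$ with $C=\max_{x\in X}|\mathbf{F}\cdot x|$. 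The witnessing set $X$ would be listed in an annex as in the proof of Proposition \ref{3matchings}.

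For the lower bound, the fact that $\alpha^9 = 13$ strongly suggests the existence of a ``gadget'' tree $T$ on $9$ vertices admitting exactly $13$ different $4$-matchings and such that several copies of $T$ can be glued together through edges that separate the copies by distance at least $4$, without creating any new $4$-matching constraint across gadgets. The natural way to find such a gadget is to look at the extremal vertices of the polytope computed in the upper-bound step: the vertex of $\operatorname{conv}_{\le}\bigl(\bigcup_{k=1}^{n}\mathbf{B}^k(\mathbf{V_0}/\alpha)\bigr)$ that saturates $\mathbf{F}\cdot x$ corresponds to a small tree realizing the growth rate, exactly as in Propositions \ref{treetpd} and the perfect code proposition. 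Concatenating $k$ copies of this gadget then gives a tree of order $9k+O(1)$ with at least $13^k = \alpha^{9k}$ distinct $4$-matchings, matching $\alpha$ asymptotically.

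The main obstacle is a purely computational one: the $4$-matching automaton has more states than the $3$-matching one, so the convex-hull computations inside Algorithm \ref{algofindpolytope} become the bottleneck, and there is no a priori guarantee that the algorithm terminates for the candidate value $\alpha=13^{1/9}$. Once the set $X$ is produced by the implementation, verifying that it satisfies the two conditions of Lemma \ref{polytopefirstdirection} reduces to finitely many linear-programming feasibility checks over $\mathbb{Q}(\alpha)$, and identifying the extremal gadget tree from the vertices of $X$ is then a routine inspection.
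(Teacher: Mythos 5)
Your proposal matches the paper's proof in all essentials: translate the property to an $MSO_1$ formula, extract the bilinear system, certify the upper bound $13^{1/9}$ with a polytope via Lemma \ref{polytopefirstdirection} and Algorithm \ref{algofindpolytope}, and read an extremal gadget off the vertices of the polytope for the lower bound. Two concrete details differ from what you predicted. First, the bilinear system lives in $\mathbb{R}^5$ rather than ``roughly a dozen'' coordinates; the automaton from Mona collapses the distance-tracking more efficiently than the naive state count suggests. Second, and more importantly, your worry about termination is exactly what occurs: Algorithm \ref{algofindpolytope} started from $\mathbf{V_0}/\alpha$ alone does \emph{not} appear to converge here, and the paper resolves this by seeding the iteration with an additional hand-chosen vector $\mathbf{V'}=(0,\, 1/13\,\alpha^8 + 1/6,\, 0,\, 0,\, 0)^T$, after which the iteration terminates; this ``seeding'' fix is not part of your plan and is the one genuinely nonobvious step. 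On the lower bound, the gadget is the path $P_9$, which has $15$ total $4$-matchings but exactly $13$ that avoid its central vertex; the $k$ copies are joined to a single new hub vertex adjacent to each center (so $T_k$ has $9k+1$ vertices), and the ``avoid the central vertex'' restriction is precisely what guarantees the distance-$\ge 4$ separation you correctly flagged as needing to be checked.
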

\begin{proof}
 For $4$-matchings, the initial vector, final vector and the bilinear map are respectively given by
$$\mathbf{V_0}=
\begin{pmatrix}
1\\0\\0\\0\\1
\end{pmatrix},\,
\mathbf{F}=
\begin{pmatrix}
1\\  1\\  1\\1\\  0
\end{pmatrix}\text{ and for all } u,v\in\mathbb{R}^5,\,
\mathbf{B}(x,y)=
\begin{pmatrix}
u_1v_1+ u_1v_2\\
u_1v_3+ u_2v_1+ u_2v_2+ u_2v_3\\
u_1v_4+ u_3v_1+ u_3v_2\\
u_4v_1+ u_5v_5\\
u_5v_1
\end{pmatrix}\,. $$
The set given in Annex \ref{m4sets}, respects the conditions of Lemma \ref{polytopefirstdirection} and we get our upper-bound.\footnote{In this case Algorithm \ref{algofindpolytope} does not seem to converge in finite time. However, if we start with $\mathbf{V_0}$ and $\mathbf{V'}=(0, 1/13\alpha^8 + 1/6, 0, 0, 0)^T$ instead of only $\mathbf{V_0}$ the Algorithm converges in finite time. The set provided in Annex \ref{m4sets}, is in fact $\bigcup_{k\ge1}\mathbf{B}^k\left(\left\{\frac{\mathbf{V_0}}{\alpha}, \mathbf{V'}\right\}\right)$. Remark, that in $\mathbf{V'}$ the $1/6$ can be replaced by anything between $1/6$ and $1/100$ (that could be true for any positive real smaller than $1/6$).  This underlines the fact that the convex set that respects the conditions of Lemma \ref{polytopefirstdirection} is not necessarily unique.}

For our lower bound, let $T_k$ be the tree constructed from $k$ copies of $P_9$ and a vertex $v$ such that there is an edge from the central vertex of each $P_9$ to $v$. We give an illustration of $T_5$ in Fig \ref{tree4matchings}.
\begin{figure}[H]
\centering
\includegraphics[scale=0.5]{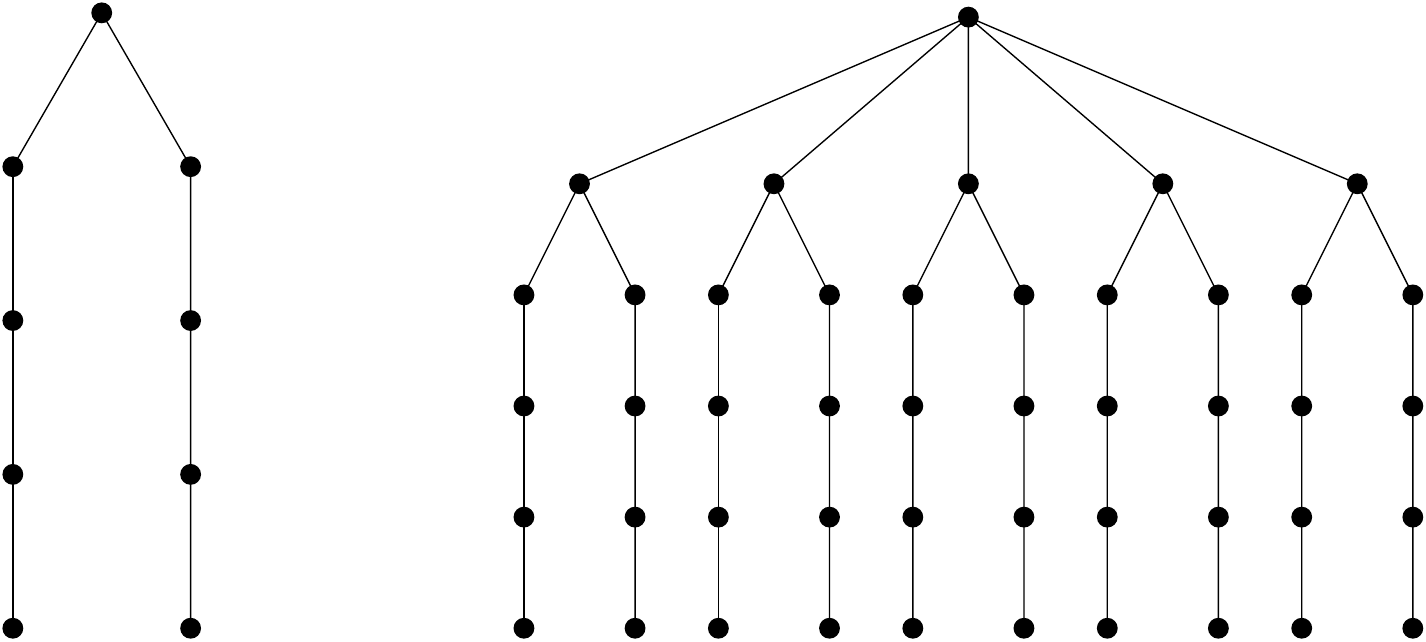}
\caption{On the left $P_9$ and on the right the tree $T_5$. \label{tree4matchings}}
\end{figure}

In each copy of the $P_9$ there are $13$ $4$-matchings that do not use the central vertex. If we chose one of these $4$-matchings for each copy then the union is a $4$ matching of the tree. Thus there are at least $13^k= 13^\frac{n-1}{9}$ $4$-matchings in $T_k$. 
Thus the maximal number of $4$-matchings of trees of order $n$ grows in $\Theta(\alpha^n)$.
\end{proof}
\begin{proposition}\label{5matchings}
 Let $\alpha= \frac{22}{17}\approx 1.29411$ and $\beta\approx 1.293211$ be the only positive real root of 
 \begin{align*}\scriptstyle
   x^{6\times45}
   -104625x^{5\times45}
   -14946778125000x^{3\times45}
   -28242953648100000000x^{45}
   -7230196133913600000000
 \end{align*}
 There exists a constant $C$ such that the number of $5$-matchings in a tree of order $n$ is at most $C\alpha^n$. There exists a family of trees that have $\theta(\beta^n)$  $5$ matchings.
\end{proposition}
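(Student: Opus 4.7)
My plan follows the template established for Propositions~\ref{3matchings} and \ref{4matchings}. First I would derive the bilinear system $(\mathbf{B},\mathbf{V_0},\mathbf{F})$ associated with the $MSO_1$ formula defining $5$-matchings, using the construction of Section~\ref{MSOandtreeautomaton}. The tree automaton needs to track, for the current subtree, the distance from its root to the closest already-chosen edge below (with values $0,1,2,3,4$ or ``far enough''), together with the information of whether the root participates in a chosen edge going into its parent. This yields a bilinear map of modest dimension whose $J_0$-transitions forbid the distances of two children from summing to less than $5$ when two edges are involved.

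For the upper bound, I would simply run Algorithm~\ref{algofindpolytope} with $\alpha = 22/17$. The goal is to obtain a finite rational set $X\subseteq\mathbb{Q}^n$ satisfying $\frac{\mathbf{V_0}}{\alpha}\in\operatorname{conv}(X)$ and $\mathbf{B}(X\times X)\subseteq\operatorname{conv}_{\le}(X)$. By Lemma~\ref{polytopefirstdirection} this gives $|\mathbf{F}\cdot \vect{u}|\le\max_{x\in X}|\mathbf{F}\cdot x|$ for every $\vect{u}\in\bigcup_k \mathbf{B}^k(\mathbf{V_0}/\alpha)$, and hence by Lemma~\ref{simpleBilinearity} the number of $5$-matchings of any tree of order $n$ is at most $C\alpha^n$ with $C = \alpha^{?}\max_{x\in X}|\mathbf{F}\cdot x|$. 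As for the other propositions, the explicit set $X$ would be relegated to an annex.

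For the lower bound, the form of the defining polynomial---a degree-$6$ polynomial in $y = x^{45}$---strongly suggests a construction built from blocks of $45$ vertices, in direct analogy with the chains of length $85$ used in Proposition~\ref{treetpd} and of length $9$ used in Proposition~\ref{4matchings}. Concretely, I would inspect the vertices of $\operatorname{conv}_{\le}\bigl(\bigcup_{k\le N}\mathbf{B}^k(\mathbf{V_0}/\beta')\bigr)$ for a value $\beta'$ slightly smaller than $\beta$ (where Algorithm~\ref{algofindpolytope} cannot converge) and read off the recurring $45$-vertex gadget from the extremal ray attaining the maximum of $\mathbf{F}\cdot x$. Chaining $k$ copies of this gadget into a family $T_k$ yields a transfer matrix whose dominant eigenvalue, after substituting $y = x^{45}$, satisfies the given degree-$6$ polynomial; the largest positive root is $\beta^{45}$, so $T_k$ has $\Theta(\beta^n)$ $5$-matchings.

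The main obstacle, and the reason this proposition is weaker than the preceding ones, is that the upper and lower bounds do not coincide: $\beta < \alpha = 22/17$. This means the polytope certifying $22/17$ is found by Algorithm~\ref{algofindpolytope} in finite time, but for every target strictly below $22/17$ the algorithm fails to terminate, while no construction approaching $\alpha$ has been identified. Closing the gap would require either extracting a genuinely periodic extreme ray from the (infinite) limiting convex body that exceeds $\beta$, or building an invariant polytope (necessarily of much larger combinatorial complexity) witnessing a smaller $\alpha$; both appear to demand a substantially finer analysis of the fixed points of $\mathbf{B}$ than is needed in the sharp cases.
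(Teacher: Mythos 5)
Your plan matches the paper's proof closely: derive the $6$-dimensional bilinear system for $5$-matchings, run Algorithm~\ref{algofindpolytope} at $\alpha=22/17$ to obtain the certifying polytope via Lemma~\ref{polytopefirstdirection}, and for the lower bound chain copies of a $45$-vertex gadget (a root with four pendant $P_{11}$'s in the paper) and read $\beta^{45}$ off the dominant eigenvalue of the resulting $6\times 6$ transfer matrix. The only small inaccuracy is the analogy to Proposition~\ref{4matchings}, whose lower-bound construction is a star of $P_9$'s rather than a chain, but this does not affect your argument; otherwise the reasoning and the explanation of why the gap $\beta<\alpha$ remains open are consistent with the paper.
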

\begin{proof}
 For $5$-matchings, the initial vector, final vector and the bilinear map are respectively given by
$$\mathbf{V_0}=
\begin{pmatrix}
1\\0\\0\\0\\0\\1
\end{pmatrix},\,
\mathbf{F}=
\begin{pmatrix}
1\\  1\\  1\\ 1\\ 1\\  0
\end{pmatrix}$$
$$\text{ and for all } u,v\in\mathbb{R}^6,\,
\mathbf{B}(x,y)=
\begin{pmatrix}
u_1v_1+ u_1v_2\\
u_1v_3+ u_2v_1+ u_2v_2+ u_2v_3\\
u_1v_4+ u_2v_4+ u_3v_1+ u_3v_2+ u_3v_3\\
u_1v_5+ u_4v_1+ u_4v_2\\
u_5v_1+ u_6v_6\\
u_6v_1
\end{pmatrix}\,. $$
The computation of $\bigcup_{k\ge1}\mathbf{B}^k(\frac{\mathbf{V_0}}{\alpha})$ by Algorithm \ref{algofindpolytope} converges in finitely many steps and we reach a finite set $X$
that respects the conditions of Lemma \ref{polytopefirstdirection}.\footnote{This set $X$ was computed in less than a minute on a laptop and contains $59$ vectors. However, the numerators and denominators of the rationals coordinates reach such high values that writing the set takes around $54000$ letters and it would take a few hundred pages to include this set in an annex. It seems more efficient to provide the C++ code that computes and verifies this set.}
This concludes the proof of the upper-bound.

For our lower bound, let $T$ be the rooted tree given in Figure \ref{tree5matchings} 
and let $T_k$ be the tree made of $k$ copies of $T$ connected by a path going through the roots of each copy. 
\begin{figure}[H]
\hspace{10px}
\includegraphics[scale=0.4]{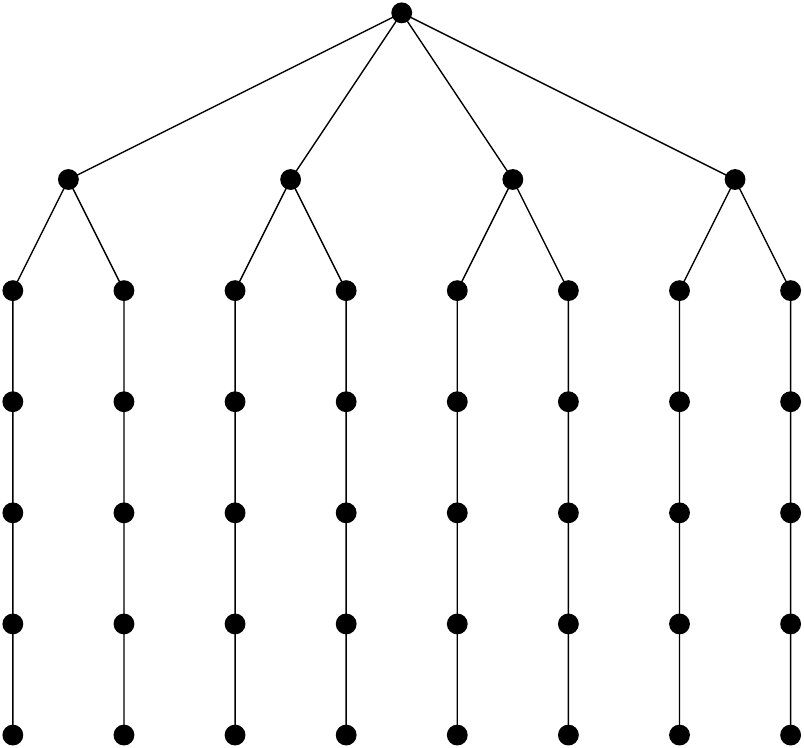}
\hfill
\includegraphics[scale=0.5]{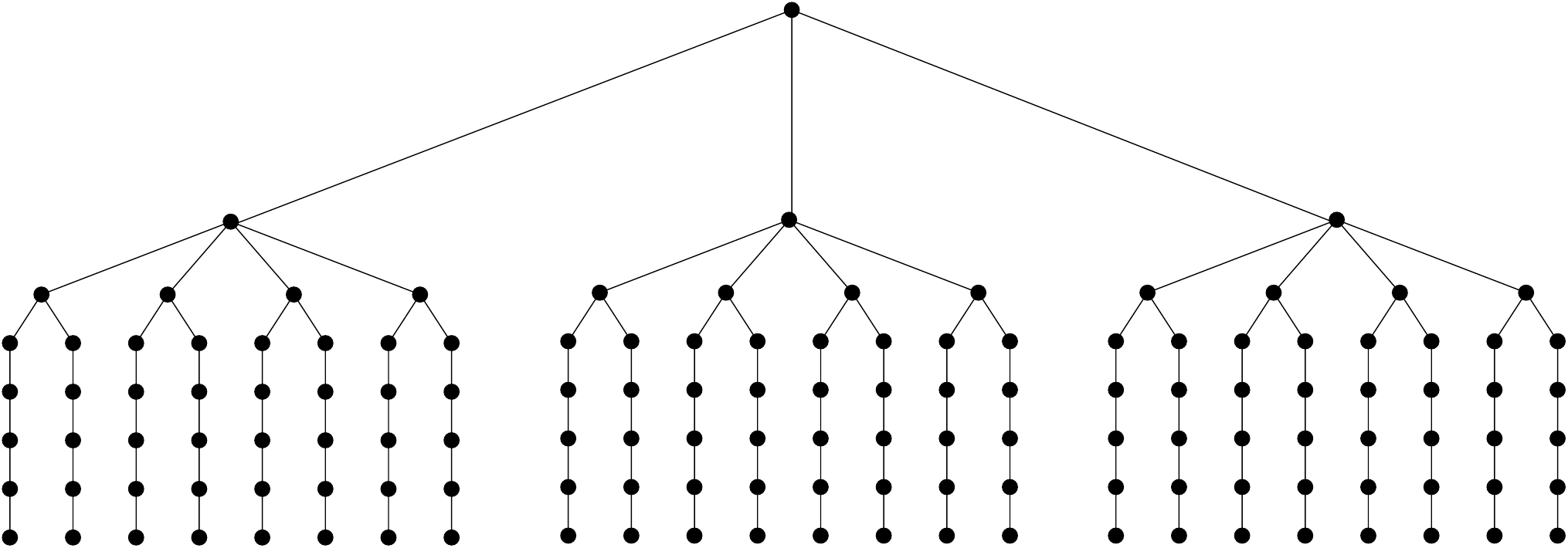}
\hspace{10px}
\caption{The tree $T$ on the left and $T_3$ on the right. \label{tree5matchings}}
\end{figure}
Let $\mathbf{V_{P_{11}}}$ be the vector corresponding to the path over $11$ vertices rooted in its central vertex, that is
$$\mathbf{V_{P_{11}}}=\mathbf{B}(\mathbf{B}(\mathbf{V_0},\mathbf{B}(\mathbf{V_0},\mathbf{B}(\mathbf{V_0},\mathbf{B}(\mathbf{V_0},\mathbf{B}(\mathbf{V_0},\mathbf{V_0})))))
,\mathbf{B}(\mathbf{V_0},\mathbf{B}(\mathbf{V_0},\mathbf{B}(\mathbf{V_0},\mathbf{B}(\mathbf{V_0},\mathbf{V_0})))))\,.$$
Then let $\mathbf{V}_T$ be the vector corresponding to $T$, that is
$$\mathbf{V}_T=\mathbf{B}(\mathbf{B}(\mathbf{B}(\mathbf{B}(\mathbf{V_0},\mathbf{V_{P_{11}}}),\mathbf{V_{P_{11}}}),\mathbf{V_{P_{11}}}),\mathbf{V_{P_{11}}})\,.$$
Let $M$ be the matrix such that for all $x\in\mathbb{R}^6$, 
$Mx= \mathbf{B}(\mathbf{V}_T,x)$ then $FM^{k-1}\mathbf{V_{P_{11}}}$ is the number of $5$-matchings of $T_k$.
Explicit computation of $M$ gives
$$ M= \begin{pmatrix}
6561&   6561&   0&   0&   0&   0\\ 
44064&   44064&   50625&   0&   0&   0\\    
54000&   54000&   54000&   50625&   0&   0\\    
5832&   5832&   0&   0&   6561&   0\\    
256&   0&   0&   0&   0&   256\\    
256&   0&   0&   0&   0&   0
\end{pmatrix}\,.$$
and we easily compute that $\beta^{45}$ is the largest eigenvalue of $M$. This implies that the number of $5$-matchings of $T_k$ is in $\theta(\beta^{45k})$. Since $T$ contains $45$ vertices this implies that the number of $5$-matchings of $T_k$ is in $\theta(\beta^{n})$ where $n$ is the number of vertices of $T_k$.
\end{proof}
This is the first case where we are not able to provide the exact growth rate. However, we are still able to provide a really good bound and it implies that the maximal number of $5$-matching is not reached on a path (since the number of $5$ matchings of a path is $\approx1.2852$ \cite{inducedmatching}). We believe that neither the upper-bound nor the lower-bound is sharp. 

In fact, the construction of Proposition \ref{4matchings} can be adapted to a construction with more $7$-matching (resp. $9$ matchings) than the path.
\begin{proposition}\label{79matchings}
Let $r\ge3$ be an even integer.Then the number of $r$-matching in trees of order $n$ is at least $\Omega(15^{\frac{n}{r+6}})$.
\end{proposition}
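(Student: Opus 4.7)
The construction follows the blueprint of Proposition~\ref{4matchings}. For even $r\ge 4$, let $T_k$ denote the tree formed by a central vertex $v$ together with $k$ disjoint copies $P^{(1)},\dots,P^{(k)}$ of $P_{r+6}$, each joined to $v$ by an edge from one of its two central vertices (say $p^{(i)}_{(r+6)/2}$ for copy $i$). Then $|T_k|=1+k(r+6)$, so the proposition reduces to exhibiting at least $15^k$ distinct $r$-matchings of~$T_k$.

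Within each copy I would single out the seven \emph{safe} edges $\mathcal{S}=\{e_1,e_2,e_3,e_{r+2},e_{r+3},e_{r+4},e_{r+5}\}$, namely those at distance at least $(r-2)/2$ from the attachment vertex; a short computation shows these are exactly the edges $e_i$ with $i\le 3$ or $i\ge r+2$. The key cross-copy inequality is that any two safe edges $e,e'$ lying in distinct copies satisfy $d(e,e')=d_1+d_2+2\ge (r-2)/2+(r-2)/2+2=r$, where $d_1,d_2$ denote the distances of $e,e'$ to their respective piece centers (the unique path between them in $T_k$ passes through $v$, which contributes the ``$+2$''). Hence any tuple consisting of an $r$-matching of each copy using only safe edges automatically assembles into a genuine $r$-matching of $T_k$.

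The final step is to exhibit at least $15$ $r$-matchings of $P_{r+6}$ that use only safe edges. The empty matching together with the seven singletons already contributes $8$. For pair matchings $(e_i,e_j)$ with $i\in\{1,2,3\}$ and $j\in\{r+2,r+3,r+4,r+5\}$, the condition that their endpoints lie at distance $\ge r$ inside $P_{r+6}$ reduces to $j-i\ge r+1$; direct enumeration yields $4+3+2=9$ admissible pairs, for a total of $17\ge 15$ matchings per copy. Multiplying over the $k$ copies produces at least $15^k$ distinct $r$-matchings of $T_k$, and substituting $k=(n-1)/(r+6)$ gives the announced bound $\Omega\!\left(15^{n/(r+6)}\right)$. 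The whole argument is a routine generalization of Proposition~\ref{4matchings}; no step presents a real obstacle beyond checking the two distance inequalities (within-piece $j-i-1\ge r$ and cross-copy $d_1+d_2+2\ge r$), both of which are elementary once safe edges are correctly identified.
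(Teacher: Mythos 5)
Your proof is correct and takes the same route as the paper's: attach $k$ copies of $P_{r+6}$ at a central vertex to a root $v$, count the $r$-matchings in each copy that use only edges sufficiently far from the attachment so that the cross-copy distance through $v$ is at least $r$, and multiply over the copies. The paper uses the slightly cruder per-copy criterion of avoiding the $r-2$ central vertices of the path (yielding exactly $4\times 4 - 1 = 15$ matchings per copy), whereas your distance threshold $\ge (r-2)/2$ admits the extra edge $e_{r+2}$ and yields $17$; both comfortably give the announced $\Omega\bigl(15^{n/(r+6)}\bigr)$.
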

\begin{proof}
Let $T_{r,k}$ be the tree built from $k$ copies of $P_{r+6}$ all connected to the root $v$.

The number of $r$-matchings of $P_{r+6}$ that do not include one of the $r-2$ central nodes is $15$. Indeed, there are 4 free vertices  on each side and thus, 4 possible way to select at most one edge on each side (including the one where no edge is selected), the only forbidden configuration is the one where the most centrals non forbidden edges are selected on each side (since this two edges are at distance $r-1$). This give us $4^2-1=15$ such $r$-matchings.

Now if we select one of these $r$-matchings for every copy of $P_{r+6}$, we get an $r$-matching of $T_{r,k}$ since two edges selected in two different $P_{r+6}$, are at distance at least $2\frac{r-1}{2}+2= r+1$. Thus $T_{r,k}$ has at least $15^k$ $r$-matchings.
Since $T_{r,k}$ has $(r+6)k+1$ nodes this concludes our proof.
\end{proof}
The lower bound from this Proposition is strong enough to verify that the number of $7$-matching (resp. $9$-matchings) over trees are not maximized by the path (the number $7$-matching (resp. $9$-matchings) of the path can be found in \cite{inducedmatching}).

Remark that these constructions do not maximize the number of $r$-matchings. We could obtain slightly better lower bounds by simply connecting the $P_{r+6}$ by a path going through the central vertices, but constructions similar to the one from Proposition \ref{5matchings} provide even better lower bounds. However, the computation become really complicated and this is not the goal of this article to discuss in detail the case of $r$-matchings. 
We could also use our method to obtain upper bounds on the number of $7$-matchings  and $9$- matchings in trees.

\subsection{Maximal matchings}
A matching is maximal if it is not strictly contained in another matching.
The authors of \cite{GORSKA20071367} showed that the maximal number of maximal matchings of a
tree of order $n$ grows in  $\Omega(1.39097^n)$ and  $O(1.395337^n)$ (the precise values of the constants involved can be found in \cite{GORSKA20071367}). More recently, Heuberger and Wagner showed that the maximal number of maximum matchings (matchings of maximal size) in trees grows in $\Theta\left(\left(\frac{11+\sqrt85}{2}\right)^\frac{n}{7}\right)$\cite{HEUBERGER20112512}. Since every maximum matching is maximal this improves the lower bound on the maximal number of maximal matchings from  $\Omega(1.39097^n)$ to $\Omega(1.39166^n)$.
S. Wagner conjectured (personal communication) that the bound for maximum matchnings also holds for maximal matchings and we are able to verify this conjecture.

\begin{proposition}\label{simplematchings}
 Let $\alpha =\left(\left(\frac{11+\sqrt85}{2}\right)^\frac{n}{7}\right)\approx  1.391664$ and $C=-1/3\alpha^{10} + 11/3\alpha^3$. The number of maximal matchings of a tree of order $n$ is less than $\alpha^n$. 
 Moreover, this value of $\alpha$ is sharp.
\end{proposition}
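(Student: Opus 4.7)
The plan is to follow the strategy used throughout Section \ref{results}: encode ``$S$ is a maximal matching'' as an $MSO_1$ formula, extract the associated bilinear system $(\mathbf{B},\mathbf{V_0},\mathbf{F})$ from Corollary \ref{formulatolinearalg}, and run Algorithm \ref{algofindpolytope} with the target value $\alpha=\left(\frac{11+\sqrt{85}}{2}\right)^{1/7}$ until it returns a finite set $X$ satisfying the hypotheses of Lemma \ref{polytopefirstdirection}.

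First I would describe matchings exactly as the $r$-matchings subsection does: a matching $M$ of a tree is in canonical bijection with its covered-vertex set, and $S$ is such a set if and only if every vertex of $S$ has exactly one neighbor in $S$. The extra condition of maximality is that every edge meets $S$, i.e.\ $\forall x,y:E(x,y)\Rightarrow(x\in S\vee y\in S)$. Both parts are plainly $MSO_1$-expressible, so Corollary \ref{formulatolinearalg} produces an explicit bilinear system whose growth rate equals that of the number of maximal matchings over trees.

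Next I would feed this system into Algorithm \ref{algofindpolytope} with the chosen $\alpha$. If the computation terminates on a finite set $X$, then by Lemma \ref{polytopefirstdirection} every vector $\vect{v}\in\mathbf{B}^{k}(\mathbf{V_0}/\alpha)$ lies in $\operatorname{conv}(X)$, hence
$$|\mathbf{F}\cdot\vect{v}|\le\max_{x\in X}\mathbf{F}\cdot x.$$
Since $\mathbf{B}^{k}(\mathbf{V_0}/\alpha)=\alpha^{-k}\mathbf{B}^{k}(\mathbf{V_0})$ by bilinearity, this gives the upper bound $C\alpha^{k}$ for the number of maximal matchings of any tree of order $k$, with $C=\max_{x\in X}\mathbf{F}\cdot x$. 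A direct evaluation on $X$ should then confirm that this maximum equals the stated $-\tfrac{1}{3}\alpha^{10}+\tfrac{11}{3}\alpha^{3}$.

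Sharpness is essentially free from the paragraph preceding the statement: Heuberger and Wagner \cite{HEUBERGER20112512} exhibited trees whose number of maximum matchings grows as $\Theta\bigl((\tfrac{11+\sqrt{85}}{2})^{n/7}\bigr)=\Theta(\alpha^{n})$, and every maximum matching is in particular a maximal matching, so the same construction witnesses that $\alpha$ cannot be improved. The main obstacle I would anticipate, exactly as in Proposition \ref{4matchings}, is the convergence of Algorithm \ref{algofindpolytope} when started from $\mathbf{V_0}/\alpha$ alone: if the iteration fails to close up in finite time, one seeds the search with a handful of extra vectors, typically read off from the vertices of the partial polytopes that appear during the exploration (and naturally from the Heuberger--Wagner extremal tree itself), until $\operatorname{conv}(X)$ becomes $\mathbf{B}$-stable.
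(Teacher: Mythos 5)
Your encoding step is wrong, and it is precisely the step where the paper has to do something non-obvious. You claim that $S$ is the covered-vertex set of a matching if and only if every vertex of $S$ has exactly one neighbor in $S$. That condition characterizes \emph{induced} matchings (the paper's $2$-matchings), not matchings. For example, in the path $a\,\text{--}\,b\,\text{--}\,c\,\text{--}\,d$ the edge set $\{ab,cd\}$ is a (maximal) matching with covered-vertex set $\{a,b,c,d\}$, yet $b$ has two neighbors in that set. This is exactly why the $r$-matchings subsection restricts the ``exactly one neighbor'' characterization to $r\ge 2$ and says explicitly that ``matchings over graphs are more naturally expressed in $MSO_2$.'' If you run your pipeline with the formula you wrote, the bilinear system you extract is not the one for maximal matchings.

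The paper's actual encoding uses the fact that over (rooted) trees $MSO_1$ and $MSO_2$ are equally expressive: one roots the tree and represents each edge by its lower endpoint (the child). A subset $S$ of non-root vertices then stands for the edge set $\{(v,\mathrm{parent}(v)):v\in S\}$, and ``this set of edges is a matching'' becomes an $MSO_1$ condition on $S$ in the ordered binary tree $T$ feeding the automaton. Maximality is then expressed on top of that. This is the step you need to supply; once you have the correct $4$-dimensional bilinear system (the paper writes it out), the rest of your argument — rescaling by $\alpha$, stabilising a polytope via Algorithm \ref{algofindpolytope}, reading off $C=\max_{x\in X}\mathbf{F}\cdot x$, seeding with an extra vector when the iteration does not close, and quoting Heuberger--Wagner for sharpness — matches the paper's proof.
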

\begin{proof}
In general matchings are not expressible in $MSO_1$, however in trees $MSO_1$ and $MSO_2$ are as expressive since once can always root the tree and represent an edge by the corresponding child (an edge is always between a node an a child of this node). Thus, our approach can be used for this case as well.
For matchings, the initial vector, final vector and the bilinear map are respectively given by
$$\mathbf{V_0}=
\begin{pmatrix}
0\\  1\\  0\\  1  
\end{pmatrix},\,
\mathbf{F}=
\begin{pmatrix}
1\\  0\\  0\\ 0
\end{pmatrix}$$
$$\text{ and for all } u,v\in\mathbb{R}^4,\,
\mathbf{B}(x,y)=
\begin{pmatrix}
  u_{1}v_{1}+ u_{1}v_{2}+ u_{2}v_{4}+ u_{3}v_{4}\\
  u_{2}v_{1}\\
  u_{2}v_{2}+ u_{3}v_{1}+ u_{3}v_{2}\\
  u_{4}v_{1}+ u_{4}v_{2}
\end{pmatrix}\,. $$
The set given in Annex \ref{maxmatch} respects the conditions of Lemma \ref{polytopefirstdirection}. 
This concludes the proof of the upper-bound.

The fact that the value of $\alpha$ is sharp was established by the previously mentionned result of 
Heuberger and Wagner \cite{HEUBERGER20112512}.
\end{proof}
The set provided in Annex \ref{maxmatch} corresponds to $\bigcup_{k\ge1}\mathbf{B}^k(\frac{\mathbf{V_0}}{\alpha})$ but does not seem to be reached by Algorithm \ref{algofindpolytope} in finitely many steps. 
However, the set $\bigcup_{k\ge1}\mathbf{B}^k(\{\frac{\mathbf{V_0}}{\alpha},v\})$ where $v$ is a well chosen vector
can be computed in finitely many step by our algorithm and is the set provided in  Annex \ref{maxmatch}.

The vertice $v$ is given by
$v=
\,^T(5/153\alpha^{11} - 19/153\alpha^4,  0,   - 4/765\alpha^{11} + 107/765\alpha^4,   - 4/765\alpha^{11} + 107/765\alpha^4 )$ and is a well chosen eigenvector of the matrix
$$\begin{pmatrix}
8&   3&   0&   5\\  
0&   0&   0&   0\\  
3&   0&   0&   3\\   
3&   0&   0&   3
  \end{pmatrix}\,.$$
This matrix is the linear operation that correspond to applying the operation ovezr rooted trees depicted in Figure
\ref{figsimplematchings}. The precise scaling of this eigenvector was chosen by studying the Jordan decomposition of the matrix as well as the vectors that correspond to extremal vertices obtained when trying to compute $\bigcup_{k\ge1}\mathbf{B}^k(\frac{\mathbf{V_0}}{\alpha})$. Remark that the eigenvalue associated to this eigenvector is $\alpha$ which tells us that iterating the construction from Figure \ref{figsimplematchings} gives a sequence of trees that assymptoticaly have the maximum number of maximal matchings.
\begin{figure}
\centering
\includegraphics[scale = 3]{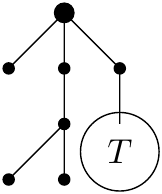}
\caption{An illustration of the construction of that assymptoticaly maximizes the number of maximal matchings \label{figsimplematchings}}
\end{figure}

\subsection{Maximal induced matchings}
An induced matching is maximal if it is not strictly contained in another induced matching.
In the case of maximal induced matching the computations become more complicated than in the case of induced matchings and instead of giving the exact growth rate we are only able to provide really good approximations.

The initial vector, final vector and the bilinear map corresponding to these sets are respectively given by
$$\mathbf{V_0}=
\begin{pmatrix}
0\\  0\\  1\\  1\\  0  
\end{pmatrix},\,
\mathbf{F}=
\begin{pmatrix}
1\\1\\1\\0\\0
\end{pmatrix}
\text{ and for all } u,v\in\mathbb{R}^5,\,
\mathbf{B}(x,y)=
\begin{pmatrix}
u_1v_2+ u_1v_3+ u_1v_5+ u_4v_4\\
u_2v_1+ u_2v_2+ u_2v_3+ u_3v_1+ u_5v_1\\
u_3v_2\\
u_4v_2+ u_4v_3+ u_4v_5\\
u_3v_3+ u_5v_2+ u_5v_3\\
\end{pmatrix}\,. $$

\begin{proposition}\label{Max_induced_matching_prop}
 Let $\alpha =\frac{4254960628685}{3195429966304}\approx1.331576$ and $\beta\approx1.331576$ be the real root of $108 - 135 x^8 + 132 x^{16} - 33 x^{24} + 12 x^{32} - x^{40}$ between 1 and 2.
 Then the number of maximal induced matchings in a tree of order $n$ is less than $\alpha^n$.
 Moreover, there exists a constant $C\in \mathbb{R}_{>0}$ and a sequence of trees with more than $C\beta^n$ maximal induced matchings.
\end{proposition}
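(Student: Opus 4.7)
The plan is to follow the template used for Propositions \ref{treetpd}, \ref{4matchings}, and \ref{simplematchings}: split the proof into an upper bound established via Lemma \ref{polytopefirstdirection} and a lower bound from an explicit family of trees.

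For the upper bound, I would feed the bilinear system $(\mathbf{B}, \mathbf{V_0}, \mathbf{F})$ for maximal induced matchings into Algorithm \ref{algofindpolytope} with scaling $\frac{\mathbf{V_0}}{\alpha}$ and search for a finite set $X \subseteq \mathbb{Q}^5$ satisfying $\frac{\mathbf{V_0}}{\alpha} \in \operatorname{conv}(X)$ and $\mathbf{B}(\mathbf{u},\mathbf{v}) \in \operatorname{conv}_{\le}(X)$ for every $\mathbf{u},\mathbf{v} \in X$. Once such an $X$ is found with $\max_{x \in X} |\mathbf{F} \cdot x| \le 1$, Lemmas \ref{polytopefirstdirection} and \ref{simpleBilinearity} immediately give the desired bound $\alpha^n$ on the number of maximal induced matchings in any tree of order $n$.

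For the lower bound, the shape of the defining polynomial $108 - 135x^8 + 132x^{16} - 33x^{24} + 12x^{32} - x^{40}$, whose nonzero coefficients sit on multiples of $8$, strongly suggests the extremal construction is a chain of an $8$-vertex rooted building block $T$. Following the pattern of Propositions \ref{treetpd} and \ref{simplematchings}, I would choose such a $T$, form $T_k$ by linking $k$ copies of $T$ through their roots, and compute the $5 \times 5$ matrix $M$ such that $\mathbf{B}(\mathbf{V}_T, \mathbf{x}) = M \mathbf{x}$. The number of maximal induced matchings of $T_k$ is then $\mathbf{F} \cdot M^{k-1} \mathbf{V}_T$ (up to boundary adjustments), and the substitution $y = x^8$ should turn the characteristic polynomial of $M$ into the quintic $108 - 135y + 132y^2 - 33y^3 + 12y^4 - y^5$, so that $\beta^8$ is its largest eigenvalue. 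Since $|T_k| = 8k + O(1)$, a Perron-type argument then yields the promised $C\beta^n$ lower bound.

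The hard part is clearly the upper bound. As happens already in Proposition \ref{simplematchings}, Algorithm \ref{algofindpolytope} is unlikely to converge in finitely many steps when seeded only with $\frac{\mathbf{V_0}}{\alpha}$: because the gap $|\alpha - \beta| < 10^{-25}$ is extraordinarily small, the polytope is a very thin sliver whose ``natural'' scaling factor is the algebraic $\beta$ rather than the rational $\alpha$. The resolution is the same trick as there: one seeds the algorithm with an additional, carefully rescaled eigenvector of $M$ associated to $\beta^8$, whose precise normalization is extracted by inspecting the extremal vertices that appear during partial runs together with the Jordan decomposition of $M$. With this seed the algorithm should terminate on a finite rational set $X$ certifying the bound, and the specific rational $\alpha = \tfrac{4254960628685}{3195429966304}$ is presumably the best rational approximation to the true optimum for which such a seeded run still closes up.
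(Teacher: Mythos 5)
Your lower-bound sketch is essentially the paper's argument: the paper builds a $5\times5$ matrix $M$ as $Mx = \mathbf{B}(\mathbf{V}_T, \mathbf{B}(\mathbf{V_0},\mathbf{B}(\mathbf{V_0},x)))$ for a fixed $7$-vertex gadget $\mathbf{V}_T$ (so $8$ new occurrences of $\mathbf{V_0}$ per application, hence $M^n \mathbf{V_0}\in\mathbf{B}^{8n+1}(\mathbf{V_0})$), computes its characteristic polynomial $-x^5 + 12x^4 - 33x^3 + 132x^2 - 135x + 108$, and notes that $\beta^8$ is the Perron eigenvalue. Your reading of the exponents modulo $8$ as signaling an $8$-vertex-per-step recursion and the $y=x^8$ substitution is exactly right; the gadget isn't a plain ``chain of copies of $T$ joined by their roots'' but a fixed nested $\mathbf{B}$-expression, which is a cosmetic difference.

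Your upper-bound reasoning, however, has a genuine gap, and it points the wrong way. You predict that Algorithm~\ref{algofindpolytope} will fail to terminate when seeded only with $\frac{\mathbf{V_0}}{\alpha}$ and that the fix is the eigenvector-seeding trick from Proposition~\ref{simplematchings}. The paper reports the opposite: with this $\alpha$ (and apparently with any rational strictly above $\beta$) the bare algorithm converges on its own, to an $80$-vertex polytope, in about $7$ minutes. The reason is a conceptual one you've inverted: seeding with a rescaled $\beta$-eigenvector is the tool you reach for when you want the \emph{sharp} bound $\beta$ itself, i.e.\ when the target scaling sits exactly on the growth rate and the unseeded iteration produces an unbounded sliver. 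Here $\alpha$ is a rational chosen strictly larger than $\beta$, so there is slack and the iteration closes up unaided; there is no reason to expect it to stall. Moreover, the paper explicitly says it \emph{tried} the eigenvector idea in order to close the $\alpha$--$\beta$ gap and could not determine the correct normalization --- which is precisely why Proposition~\ref{Max_induced_matching_prop} states only an approximate bound rather than a sharp one. Your claim that ``$\alpha$ is presumably the best rational for which the seeded run closes up'' is also inconsistent with the paper, which reports that tighter rationals (e.g.\ one within $2\times10^{-18}$) work too; $\alpha$ is just the one they happened to run.
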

\begin{proof}
The computation of $\bigcup_{k\ge1}\mathbf{B}^k(\frac{\mathbf{V_0}}{\alpha})$ by Algorithm \ref{algofindpolytope} converges in finitely many steps and we reach a finite set $X$
that respects the conditions of Lemma \ref{polytopefirstdirection}.\footnote{This set $X$ was computed in under 7 minutes on a laptop and contains $80$ vectors. However, the numerators and denominators of the rationals coordinates reach such high values that writing the set takes around $220000$ letters and it would take a few hundred pages to include this set in an annex. It seems more efficient to provide the C++ code that computes and verifies that Lemma \ref{polytopefirstdirection} applies to this set.}
This concludes the proof of the upper-bound.

Now, let $M$ be the matrix such that for all $x\in\mathbb{R}^5$,
$$Mx= \mathbf{B}(\mathbf{B}(\mathbf{V_0},\mathbf{B}(\mathbf{V_0},\mathbf{B}(\mathbf{V_0},\mathbf{B}(\mathbf{B}(\mathbf{V_0},\mathbf{V_0}),\mathbf{V_0})))),\mathbf{B}(\mathbf{V_0},\mathbf{B}(\mathbf{V_0},x)))\,.$$
One easily verifies, that for all $x\in\mathbb{R}^5$,
$$\mathbf{B}(\mathbf{V_0},x)=
 \begin{pmatrix}
0&   0&   0&   1&   0\\
1&   0&   0&   0&   0\\
0&   1&   0&   0&   0\\ 
0&   1&   1&   0&   1\\   
0&   0&   1&   0&   0  
 \end{pmatrix} x\,.$$
 Similarly,
$$\mathbf{B}(\mathbf{V_0},\mathbf{B}(\mathbf{V_0},\mathbf{B}(\mathbf{V_0},\mathbf{B}(\mathbf{B}(\mathbf{V_0},\mathbf{V_0}),\mathbf{V_0})))=(2,  1,  1,  3, 2)^{T}$$ and thus for every $x\in\mathbb{R}^5$,
 $$\mathbf{B}(\mathbf{B}(\mathbf{V_0},\mathbf{B}(\mathbf{V_0},\mathbf{B}(\mathbf{V_0},\mathbf{B}(\mathbf{B}(\mathbf{V_0},\mathbf{V_0}),\mathbf{V_0})))),x)=
 \begin{pmatrix}
0&   2&   2&   3&   2\\
4&   1&   1&   0&   0\\
0&   1&   0&   0&   0\\ 
0&   3&   3&   0&   3\\   
0&   2&   3&   0&   0  
 \end{pmatrix} x\,.$$
 We deduce that $$M= \begin{pmatrix}
0&   2&   2&   3&   2\\
4&   1&   1&   0&   0\\
0&   1&   0&   0&   0\\ 
0&   3&   3&   0&   3\\   
0&   2&   3&   0&   0  
 \end{pmatrix}
  \begin{pmatrix}
0&   0&   0&   1&   0\\
1&   0&   0&   0&   0\\
0&   1&   0&   0&   0\\ 
0&   1&   1&   0&   1\\   
0&   0&   1&   0&   0  
 \end{pmatrix}^2=
  \begin{pmatrix}
5&   5&   3&   2&   0\\ 
1&   4&   4&   1&   4\\
0&   0&   0&   1&   0\\  
3&   3&   0&   3&   0\\
3&   0&   0&   2&   0
  \end{pmatrix}\,. $$
The characteristic polynomial of $M$ is $-x^5 + 12 x^4 - 33 x^3 + 132 x^2 - 135 x + 108$. By definition, $\beta^8$ is  a root of this polynomial and one easily verifies that this is one of the roots of largest absolute value. Since $M$ is primitive, we deduce that $|FM^n\mathbf{V_0}|$ grows in $\theta(\beta^{8n})$ as $n$ goes to infinity. Since, for all $n$, $M^nV_0\in\mathbf{B}^{8n+1}(\mathbf{V_0})$, we deduce that the growth rate of $\mathbf{B}$ is at least $\beta$.
\end{proof}
Let us first discuss the bounds given in this proposition. Remark, that the upper and lower bounds are both exact bounds and not floating-point approximations. Moreover, 
$\alpha-\beta<7\times 10^{-26}$, thus even if the upper and lower bounds do not match we can still advertise these bounds as really good. In fact, it seems that one easily gets arbitrarily good bounds here since in this case the algorithm seems to converge with any rational upper-bound and the lower-bound seems to be optimal. Going from a precision of $2\times 10^{-18}$ (with $\frac{933335285}{700924826}$) to a precision of $7\times 10^{-26}$ only increased the computation time from 5 minutes to 7 minutes. That might be the case that there is actually a finite polytope $X$ that allow us to close the gap between the upper and the lower-bound but we were not able to find it (if it exists we expect that one of the eigenvectors of $M$ corresponding to $\beta$ should be a vertex of the polytope, but it is not even clear which scaling should be applied).

We should also provide more intuition about the construction that provided the lower-bound. 
The operation corresponding to our construction that takes a tree $T$ and produces  a new tree is depicted in Figure \ref{newtreeindmax}. We give an instance of this construction applied iteratively 6 times to the path on 2 vertices in Figure \ref{exampletreeindmax}.
The construction was found by inspection of the vertices of $ \operatorname{conv}\left(\bigcup_{n\le k\ge1}\mathbf{B}^k(\frac{\mathbf{V_0}}{\beta'})\right)$ with $\beta'$ a real number close to but smaller than $\beta$ and $n$ a relatively large integer. Trees similar to the one from Figure \ref{exampletreeindmax} started to appear clearly when we reached vertices of the convex hull corresponding to trees with $200$ vertices (it took $\approx$ 10 minutes on a laptop).

\begin{figure}
\centering
\begin{minipage}[c]{0.45\textwidth}
\centering

\includegraphics[scale = 3.3]{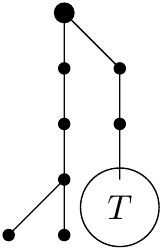}
\caption{An illustration of the construction of the lower bound of Proposition\ref{Max_induced_matching_prop} \label{newtreeindmax}}
    
\end{minipage}
\hspace{0.05\textwidth}
\begin{minipage}[c]{0.45\textwidth}
\centering
\includegraphics[scale = 0.15]{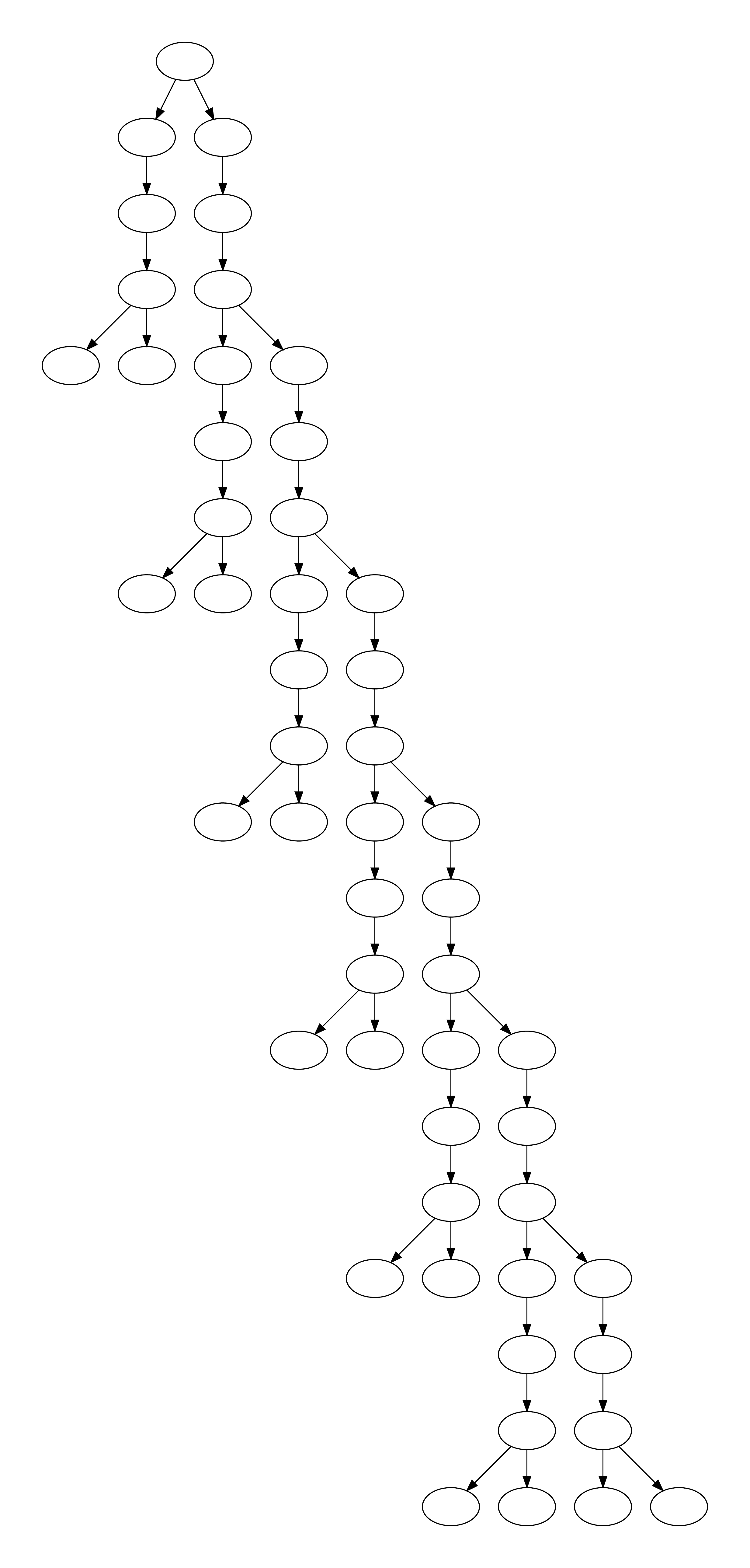}
\caption{An instance of the construction of the lower bound of Proposition\ref{Max_induced_matching_prop} \label{exampletreeindmax}}
\end{minipage}
\end{figure}

\subsection{Maximal irredundant sets}
The closed neighborhood of a vertex $v$, denoted by $N[v]$ is the set of neighbors of $v$ and $v$.
For any graph $(V,E)$, any set $S\subseteq S$ and vertex $v\in S$, we say that a vertex $u$ is a private neighbor of $v$ if $N[u]\cap S= \{v\}$. 
In other words, $u$ is a private neighbor of $v$ if $v$ is the only element from $S$ in the closed neighborhood of $u$ is $v$.
We say that a set $S$ is an irredundant set if every vertex from $S$ has a private neighbor.
An irredundant set is maximal if it is not strictly contained in another irredundant set.
Maximal irredundant sets received a lot of attention through the literature in particular because of their similarities with minimal dominating sets.
Indeed, every set is a minimal dominating set if and only if it is dominating and irredundant, and this implies that every minimal dominating set is a maximal irredundant set.

One easily verifies that the number of irredundant sets of a star on $n$ vertices is $2^{n-1}$. Bounding the number of maximal irredundant sets is much harder than the other studied sets and we are only able to provide good approximations.

\begin{proposition}\label{Max_irr_sets}
 Let $\alpha =\frac{14}{9}\approx1.555556$ and $\beta=48^{\frac{1}{9}}\approx 1.53746$.
 Then the number of maximal irredundant sets in a tree of order $n$ is less than $\alpha^n$.
 Moreover, there exists a constant $C\in \mathbb{R}_{>0}$ and a sequence of trees with more than $C\beta^n$ maximal irredundant sets.
\end{proposition}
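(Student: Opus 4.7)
The plan is to follow the same template used in the previous propositions of this section. First, I would write a closed $MSO_1$ formula $\Psi(S)$ stating that $S$ is a maximal irredundant set. Irredundance is expressed by
\[
\forall v\in S,\ \exists u,\ \bigl(u=v\lor E(u,v)\bigr)\land\forall w\in S,\ \bigl(w=v\lor (w\neq u\land\lnot E(u,w))\bigr),
\]
and maximality is expressed by saying that for every $v\notin S$, the set $S\cup\{v\}$ fails to be irredundant; this in turn unfolds into a disjunction stating either that $v$ itself lacks a private neighbor in $S\cup\{v\}$, or that some $w\in S$ loses its unique private neighbor because of $v$. Once $\Psi$ is written, Corollary~\ref{corphitoA} produces a DTFA $\mathcal{A}_\Psi$ and the construction of Section~\ref{msotolinearalg} gives the bilinear system $(\mathbf{B},\mathbf{V_0},\mathbf{F})$.

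For the upper bound, I would feed this bilinear system to Algorithm~\ref{algofindpolytope} with the rational parameter $\alpha=14/9$. If the algorithm terminates, it returns a finite set $X\subseteq\mathbb{Q}^n$ with $\mathbf{V_0}/\alpha\in\operatorname{conv}(X)$ and $\mathbf{B}(u,v)\in\operatorname{conv}(X)$ for all $u,v\in X$. By Lemma~\ref{polytopefirstdirection} applied to $(\mathbf{B},\mathbf{V_0}/\alpha,\mathbf{F})$ and Lemma~\ref{simpleBilinearity}, this forces $\rho(\mathbf{B},\mathbf{V_0},\mathbf{F})\le 14/9$, which is exactly the advertised upper bound on the growth rate of the number of maximal irredundant sets. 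As with the maximal matching and maximal induced matching cases, the set $X$ will likely be large, so I expect to report it through the C++ code rather than in an annex.

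For the lower bound, I would look for an explicit construction attaining growth rate $\beta=48^{1/9}$. Inspecting the vertices of $\operatorname{conv}_{\le}\bigl(\bigcup_{k\le K}\mathbf{B}^k(\mathbf{V_0}/\beta')\bigr)$ for a rational $\beta'$ slightly smaller than $\beta$ and $K$ as large as computationally feasible should reveal a rooted gadget $T$ on $9$ vertices that admits $48$ maximal irredundant sets whose restrictions extend consistently through the root. Chaining $k$ copies of $T$ together along a path through the roots (exactly as in Propositions~\ref{treetpd} and~\ref{5matchings}) then produces a family of trees on $\Theta(k)$ vertices with at least $48^{k}=\beta^{9k}$ maximal irredundant sets, yielding the claimed $C\beta^n$.

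The main obstacle is not mechanical but conceptual: the gap between $\alpha$ and $\beta$ cannot be closed by the present method, because Algorithm~\ref{algofindpolytope} seems not to terminate for any rational strictly between $\beta$ and $14/9$, and no eigenvector of the transfer matrix of a candidate gadget seems to line up with a fixed polytope. Consequently I would only claim an upper bound at $14/9$ and a lower bound at $48^{1/9}$, and frame the proposition as a best-effort result analogous to Proposition~\ref{5matchings}, while flagging that tightening either bound is left open and was explicitly asked about by Rote.
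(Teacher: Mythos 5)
Your overall template matches the paper's proof: construct the $MSO_1$ formula, extract the bilinear system (the paper records it in Annex~\ref{annex_irr_sets}, with states in $\mathbb{R}^{20}$), run Algorithm~\ref{algofindpolytope} at $\alpha = 14/9$ to find a set $X$ satisfying Lemma~\ref{polytopefirstdirection} (the paper reports $393$ vectors, about $4$ hours of computation), and then produce an explicit construction for the lower bound. That part of your plan is faithful.

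Where you diverge from the paper is the topology of the lower-bound construction. You propose chaining $k$ copies of a $9$-vertex gadget along a path through their roots, ``exactly as in Propositions~\ref{treetpd} and~\ref{5matchings}.'' The paper instead uses a \emph{star with an anchor}: it takes $k$ copies of the gadget $T$ (a central vertex with four pendant $P_2$'s, so $9$ vertices), attaches every central vertex to a single auxiliary vertex $s_1$, and then attaches a further pendant vertex $s_2$ to $s_1$. The anchor pair $s_1, s_2$ is not cosmetic. It forces $s_1 \in S$ and $s_2 \notin S$ so that $s_2$ serves as $s_1$'s private neighbor; this \emph{decouples} the gadgets, since $s_1$ no longer needs a private neighbor supplied by any copy of $T$, and since $s_1 \in S$ it cannot be a private neighbor for any central vertex. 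Under these conditions, each copy of $T$ independently contributes exactly $2^4 + 4\cdot 2^3 = 48$ admissible configurations: either all four $P_2$'s are occupied and the central vertex stays out of $S$, or exactly one $P_2$ is unoccupied so that the central vertex can be in $S$ with a private neighbor in that $P_2$. In your chain construction each central vertex would acquire two neighboring central vertices, which changes the private-neighbor analysis and it is not clear that the per-gadget count of $48$ survives. You would need to either verify the count for the chain directly or switch to the star-with-anchor construction; the paper takes the latter route, yielding $48^k$ maximal irredundant sets on $9k+2$ vertices and hence the claimed $\Omega(\beta^n)$ with $\beta = 48^{1/9}$.

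A smaller point: the paper does not frame the gap between $14/9$ and $48^{1/9}$ as a response to a question of Rote; Rote's question (about sharp bounds for maximal irredundant sets) motivates the whole subsection, but the paper's prose attributes the gap simply to the high dimension ($\mathbb{R}^{20}$) of the bilinear system, with no claim about non-termination of the algorithm for intermediate rationals or about eigenvectors failing to align. Your speculative explanation may well be accurate, but it is not established in the paper and you should not present it as a fact.
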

\begin{proof}
We provide the bilinear system associated with maximal irredundant sets over trees in Annex \ref{annex_irr_sets}.
The computation of $\bigcup_{k\ge1}\mathbf{B}^k(\frac{\mathbf{V_0}}{\alpha})$ by Algorithm \ref{algofindpolytope} converges in finitely many steps and we reach a finite set $X$
that respects the conditions of Lemma \ref{polytopefirstdirection}.\footnote{This set $X$ was computed in approximatively 4 hours on a laptop and contains $393$ vectors. It is once again too large to inlude in an Annex ($277000$ letters), but we provide the C++ code that computes and verifies this set.}
This concludes the proof of the upper-bound.

Let $T$ be the tree over $9$ vertices built from a central vertex with $4$ pendant paths over two vertices (illustrated in Figure \ref{constrmirs}). Then let $T_k$ be the tree built from $k$ copies of $T$ and two vertices $s_1$ and $s_2$ such that the central vertex of each copy of $T$ shares an edge with $S_1$ and $s_1$ and $s_2$ share an edge (illustrated in Figure \ref{constrmirs}).
\begin{figure}[H]
\includegraphics[scale = 1.2]{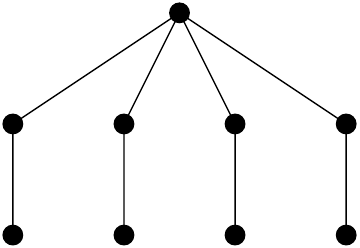}
\hfill
\includegraphics[scale = 0.8]{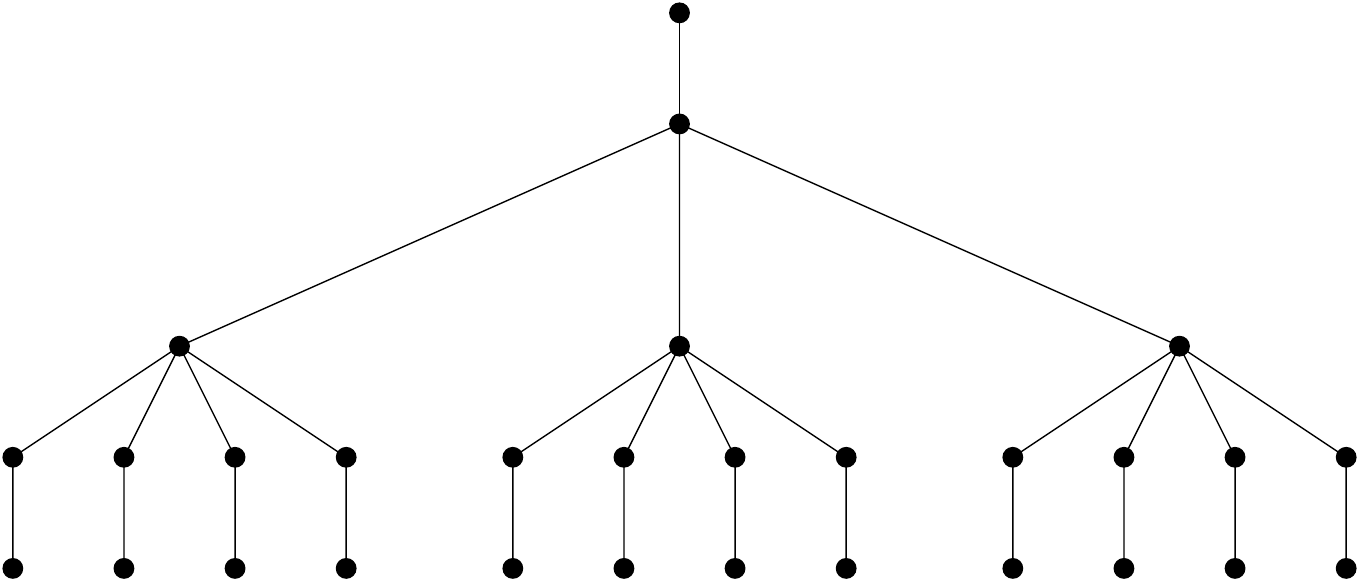}
\caption{The tree $T$ on the left and $T_3$ on the right.\label{constrmirs}}
\end{figure}
Let $S$ be a maximal irredundant set of $T_k$ such that $s_1\in S$.
First remark, that $s_2\not\in S$ and $S_2$ is a private neighbor of $S_1$ so $s_1$ does not need another private neighbor. 

Let us count the number of ways of choosing $S$ in a given copy of $T$. Recall, that $T$ has a central vertex and four $P_2$. We say that one of the $P_2$ is \emph{occupied} if one of its vertex is in $S$ (remark, that at most one of the two vertices is in $S$). If a $P_2$ is occupied then the vertex in $S$ has a private neighbor, but none of the two vertices can be a private neighbor of the central vertex. If two $P_2$ are not occupied, then $S$ is not maximal. Thus either all the $P_2$ are occupied and the central vertex is not in $S$ or exactly one of the $P_2$ is not occupied. This gives $2^4+4\times2^3=48$ ways of choosing $S$ in each copy of $T$. 

We deduce that there are at least $48^k$ maximal irredundant sets in $T_k$.
Since $T_k$ has $9k+2$ vertices this conclude the proof of our lower-bound.
\end{proof}
Although $0.02$ is a small gap between the lower and the upper bounds, this is the worst bound in this article. This is not surprising, since it is from far the bilinear system of higher dimension. We are indeed, working in $\mathbb{R}^{20}$, while the previous highest dimension system was in $\mathbb{R}^6$. This makes each step of computation much slower, but also increases the number of steps since one might expect the best convex polytope to have more vertices (this statement is to be understood as a vague intuition and not a hard fact, since this is probably the case that some systems in low dimension have no convex polytopes with finitely many vertices). This illustrates well the computational limits of our approach, in particular if one wishes to obtain sharp bounds. However, they are probably many ways to improve the algorithmic complexity of this approach.

\section{Generalization and conclusion}
The  main idea presented here was to use logic to deduce automatically from the definition of a kind of set the bilinear map associated to a tree automaton recognizing accepted sets and to use basic linear algebra to deduce the growth rate of this family of sets.
We were able to obtain good bounds for all the families of sets that we studied which is rather promising (they all appear here except for minimal dominating sets and induced matching that were already solved). 
Before generalizing this approach it would be interesting to find other interesting applications on trees.

There are many obvious generalizations of our approach.
First, using standard techniques, we can replace trees, by graphs of tree-width or clique-width (or any other similar parameters) at most $p$ for some fixed $p$. In this cases, the alphabets of the terms corresponding to a graph have more than one letter of arity $2$ and one has to project the tree-automaton on each of this letter to obtain one bilinear map for each such letter. The growth rate of a set of bilinear maps can be approximated from above using exactly the same technique.
One can also replace $MSO_1$ by $MSO_2$ even for graphs of bounded tree-width. In $MSO_2$ quantification over set of edges are also allowed and the only proof that need to be slightly modified is the proof of Theorem \ref{MSO_1toWS2S} (in fact, over trees $MSO_1$ and $MSO_2$ are equivalent since one can always root the tree and represent an edge by the node whose unique path to the root takes this edge first). This would for instance allow to bound the number of perfect matchings in trees. In fact, second order monadic logic can even be replaces by counting second order monadic logic (one is allowed to test if the cardinally of a set is $\equiv a\mod b$ where $a$ and $b$ are any fixed integers). We could also handle tuples of sets instead of sets and we could, for instance, bound the number of proper $3$-coloring of a tree.
It is not clear that any  of these cases have any interesting applications and that the computation can still be done by a computer.

We can also restrict this approach to smaller classes. One can replace trees by paths or graphs of path-width at most $p$ for some fixed $p$. Then  tree-automata are replaced by automata and the bilinear maps by matrices. In this case, the computation of the growth rate is reduced to the computation of the joint spectral radius of a set of matrices. The joint spectral radius has been widely studied and in particular there are known algorithms that provides arbitrarily good approximations. In the case of paths there is a single matrix and we can always obtain an explicit formula for the tight bound.

\subsection{More on the computation of the growth rate of a bilinear system}
The idea of finding a convex polytope invariant by our bilinear operator was already used by Rote wfor the computation of the growth rate of the maximal number of minimal dominating sets of trees \cite{r-mnmds-19,arxivRote}. This idea is in fact really close to \emph{the polytope norm method} used for the computation of the joint spectral radius (see for instance \cite{CICONE2010796} for a definition and an application of this method).

In his paper Rote already raised question regarding the computability of the growth rate of a bilinear system. In this paper we were able to show that this quantity is approximable from above. However, we suspect that this quantity is in fact computable. Experimentally it is rather easy to detect that $\mathbf{B}^k\left(
\mathbf{V_0}\right)$ diverges. Being able to do so would allow to decide whether the growth rate is greater than $1$ which would lead to an approximation algorithm.

One of the question raised by  Rote is ``Is it sufficient to consider bodies that are polytopes?''. 
If one is interested in approximation of the growth rate Corollary \ref{polytopesaregoodenough} implies that it is even sufficient to take polytopes with rationals coordinates. 
On the other-hand if the growth rate is exactly one, the set $\mathbf{B}^k\left(
\mathbf{V_0}\right)$ is not necessarily bounded. As for instance the number of induced paths of a tree is quadratic in the size of the tree (choose the two ends of the path) and induced paths are definable in $MSO_1$.
Thus if one desires exact computations of the growth rate then there might not even be any suitable convex body.

\section*{Acknowledgement}
The author would like to thank Stephan Wagner who suggested us a conjecture on the maximal number of maximal matchings after reading a previous version. His conjecture helped us find the sharp upper-bound of Lemma \ref{simplematchings} instead of a previous bound that was $10^{-6}$ away from the sharp bound.

\bibliographystyle{plain}
\bibliography{biblio}

\appendix
\section{Annex: C++ code}\label{C++desc}
We give a short description of the code used to generate the bilinear maps and find the sets $X$ by applying Algorithm \ref{algofindpolytope}.
\footnote{The files can be downloaded from the ancillary files at the arXiv repository.}
The main scripts reads the user inputs, calls Mona in order to obtain a tree automaton and calls our C++ program to compute the growth of the bilinear system. 
In order to use this one needs bash to call the main script, g++ and gmp to compile the code (gmp is a library that among other things implements rational numbers) and Mona. The only requirement that is not installed in a simple command line on a Linux is Mona, but it is still rather simple to install.\footnote{Mona can be downloaded from here \url{https://www.brics.dk/mona/download.html}}

There are fives useful files for our script.
The script himself is in  \textbf{Growth\_in\_trees.sh}. The script starts by checking that g++ and Mona are properly installed and compiling the C++ files. Then it asks the user to either chose from a list or provide an MSO formula. It calls Mona on the chosen formula and reads the output. It cleans the output of Mona from everything else than the description of the tree automaton and calls the C++ program. The list of pre-configured inputs is stored 
in the file \textbf{mso\_formulas} in a text format that can easily be modified. 
The program (generated from the file \textbf{read\_Mona\_Output\_Tree.cpp}) then asks the user to provide a possible value for the growth rate and applies Algorithm \ref{algofindpolytope} to the bilinear system. This program runs until it success, thus it might be necessary to kill.

Two other files are dedicated to the computation of the set $X$:
\begin{description}
 \item[\textbf{algebraic.hpp:}] This file implements the class \emph{Number} that allows us to do exact computation on algebraic numbers. 
 This relies on the bijection between the smallest algebraic extension of $\mathbb{Q}$ that contains $\alpha$ and $\mathbb{Q}[X]/P(X)$ where 
 $\alpha$ is an algebraic number of minimal polynomial $P(X)$. Interval of rationals (that can be computed with arbitrary precision) are used to solve inequalities. 
 The only non-trivial operation is the division, but this can be done by using the extended Euclidean algorithm to compute Bézout coefficients and obtain the inverse of a polynomial in $\mathbb{Q}[X]/P(X)$.
 \item[\textbf{X\_from\_operators.hpp:}] This file implements Algorithm \ref{algofindpolytope}. We use a simple implementation of the simplex algorithm to find the set $\operatorname{conv}_\le(S)$.
\end{description}
For more details on the implementation, one should look at the code. 

Let us now shortly describe the syntax of MSO formulas expected by our program.
The only free variable of the formula should be the second order variable $S$ that defines the desired family of sets.
The syntax of atomic formulas is as follows $$\phi:= edge(x,y)|x = y | x \sim= y|T = R|T \sim= R|x\ in\ T | x\ notin\ T$$ where $x$ and $y$ are first order variables and $T$ and $R$ are second order variables.
Formulas are build from atomic formulas using disjunctions ($\phi_1|\phi_2$), conjunctions ($\phi_1\&\phi_2$), implications ($\phi_1=>\phi_2$) and negations ($\sim \phi_2$). Each formula should end with a semicolon (;) and giving multiple formulas is the same as giving the conjunctions of these formulas. Quantification is slightly more complicated since one has to take care that the manipulated variables are valid that is:
\begin{itemize}
 \item first order existential quantification of $y$ is written \verb+ex1 y: validVertex(y) & +$(\phi)$ 

 \item first order universal quantification of $y$ is written \verb+all1 y: validVertex(y) => +$(\phi)$ 

 \item second order universal quantification of $y$ is written \verb+all2 y: validSet(y) & +$(\phi)$ 

 \item second order universal quantification of $y$ is written \verb+all2 y: validSet(y) => +$(\phi)$ 
\end{itemize}
For instance, the formula corresponding to independent dominating sets is given by
\begin{verbatim}
all1 x: validVertex(x) => ( all2 y:validVertex(y) =>
  ((x in S & y in S) => ~edge(x,y)));
  
all1 x: validVertex(x) => 
  (x in S | ex1 y: validVertex(y) & edge(x,y) & y in S);
\end{verbatim}
while the equivalent $MSO$ formula was 
$$\left(\forall x,y: (x\in S\land y\in S)\implies \lnot E(x,y)\right)
\land\left(\forall x: x\notin S \implies \exists y: y\in S \land E(x,y)\right)$$

\section{The convex sets}\label{annexconvexsets}
We provide in this Annex the vertices of some the convex sets that we used in the different applications.
\subsection{Total perfect dominating sets}\label{tpdset}
Let  $\alpha = (2^{27}\times 7)^\frac{1}{85}\approx1.275157$, then the set of vertices used in Proposition \ref{treetpd} is

\begin{align*}
&\left\{\begin{pmatrix}0 \\ 0 \\ 1/939524096\alpha^{84} \\ 1/939524096\alpha^{84}\end{pmatrix},
\begin{pmatrix}0 \\ 1/939524096\alpha^{83} \\ 0 \\ 1/939524096\alpha^{83}\end{pmatrix},
\begin{pmatrix}0 \\ 1/469762048\alpha^{82} \\ 0 \\ 1/939524096\alpha^{82}\end{pmatrix},
\begin{pmatrix}0 \\ 3/939524096\alpha^{81} \\ 0 \\ 1/939524096\alpha^{81}\end{pmatrix},\right.\\&
\begin{pmatrix}0 \\ 1/234881024\alpha^{80} \\ 0 \\ 1/939524096\alpha^{80}\end{pmatrix},
\begin{pmatrix}1/939524096\alpha^{78} \\ 0 \\ 1/234881024\alpha^{78} \\ 0\end{pmatrix},
\begin{pmatrix}1/117440512\alpha^{72} \\ 0 \\ 1/58720256\alpha^{72} \\ 0\end{pmatrix},
\begin{pmatrix}3/58720256\alpha^{66} \\ 0 \\ 1/14680064\alpha^{66} \\ 0\end{pmatrix},\\&
\begin{pmatrix}1/3670016\alpha^{60} \\ 0 \\ 1/3670016\alpha^{60} \\ 0\end{pmatrix},
\begin{pmatrix}5/3670016\alpha^{54} \\ 0 \\ 1/917504\alpha^{54} \\ 0\end{pmatrix},
\begin{pmatrix}3/458752\alpha^{48} \\ 0 \\ 1/229376\alpha^{48} \\ 0\end{pmatrix},
\begin{pmatrix}1/32768\alpha^{42} \\ 0 \\ 1/57344\alpha^{42} \\ 0\end{pmatrix},
\begin{pmatrix}1/7168\alpha^{36} \\ 0 \\ 1/14336\alpha^{36} \\ 0\end{pmatrix},\\&
\begin{pmatrix}3/939524096\alpha^{80} \\ 1/939524096\alpha^{80} \\ 0 \\ 0\end{pmatrix},
\begin{pmatrix}9/14336\alpha^{30} \\ 0 \\ 1/3584\alpha^{30} \\ 0\end{pmatrix},
\begin{pmatrix}1/234881024\alpha^{79} \\ 1/939524096\alpha^{79} \\ 0 \\ 0\end{pmatrix},
\begin{pmatrix}5/1792\alpha^{24} \\ 0 \\ 1/896\alpha^{24} \\ 0\end{pmatrix},
\begin{pmatrix}11/896\alpha^{18} \\ 0 \\ 1/224\alpha^{18} \\ 0\end{pmatrix},\\&\left.
\begin{pmatrix}3/56\alpha^{12} \\ 0 \\ 1/56\alpha^{12} \\ 0\end{pmatrix},
\begin{pmatrix}13/56\alpha^{6} \\ 0 \\ 1/14\alpha^{6} \\ 0\end{pmatrix},
\begin{pmatrix}1 \\ 0 \\ 2/7 \\ 0\end{pmatrix}\right\}
\end{align*}

\subsection{3-matchings}\label{m3sets}
 Let $\alpha\approx1.3802$ be the real root of $x^4-x^3-1$ between 1 and 2, then the set of vertices used in Proposition \ref{3matchings} is
\begin{align*}
 &\left\{\begin{pmatrix}
\alpha^3 - 2\alpha^2 + \alpha \\
3\alpha^3 - 6\alpha^2 + 3\alpha \\
  \alpha^3 - 2\alpha^2 + \alpha \\
  \alpha^3 - 2\alpha^2 + \alpha 
\end{pmatrix},\begin{pmatrix}
 - \alpha^3 + \alpha^2 + 1\\
   0 \\
   - 3\alpha^3 + 3\alpha^2 + 3  \\
  - \alpha^3 + \alpha^2 + 1 
\end{pmatrix},\begin{pmatrix}   
 - \alpha^3 + \alpha^2 + 1  \\
  - 2\alpha^3 + 2\alpha^2 + 2  \\
  - \alpha^3 + \alpha^2 + 1  \\
  - \alpha^3 + \alpha^2 + 1 
\end{pmatrix},\begin{pmatrix}   
\alpha - 1 \\
  0 \\
  2\alpha - 2  \\
 \alpha - 1  
\end{pmatrix}\right.,\\
&\left.\begin{pmatrix}  
\alpha - 1 \\
  \alpha - 1  \\
 \alpha - 1 \\
  \alpha - 1  
\end{pmatrix},\begin{pmatrix}  
4\alpha^3 - 4\alpha^2 + 4\alpha - 8 \\
  4\alpha^3 - 4\alpha^2 + 4\alpha - 8 \\
  2\alpha^3 - 2\alpha^2 + 2\alpha - 4 \\
  \alpha^3 - \alpha^2 + \alpha - 2  
\end{pmatrix},\begin{pmatrix}  
\alpha^2 - \alpha \\
  0  \\
 \alpha^2 - \alpha \\
  \alpha^2 - \alpha    
\end{pmatrix},\begin{pmatrix}
 - 2\alpha^3 + 2\alpha^2 + 2  \\
  - \alpha^3 + \alpha^2 + 1  \\
  - \alpha^3 + \alpha^2 + 1  \\
  - \alpha^3 + \alpha^2 + 1    
\end{pmatrix},\begin{pmatrix}
\alpha^3 - \alpha^2  \\
 0  \\
 0  \\
 \alpha^3 - \alpha^2    
\end{pmatrix}\right\}
\end{align*}
\subsection{4-matchings}\label{m4sets}
Let $\alpha=13^\frac{1}{9}\approx 1.329754$, then the set of vertices used in Proposition \ref{4matchings} is
\begin{align*}
 &\left\{\begin{pmatrix}0\\ 1/13\alpha^8 + 1/6\\ 0\\ 0\\ 0\end{pmatrix},
\begin{pmatrix}1/13\alpha^3\\ 0\\ 4/13\alpha^3\\ 1/13\alpha^3\\ 1/13\alpha^3\end{pmatrix},
\begin{pmatrix}1/13\alpha^4\\ 0\\ 0\\ 4/13\alpha^4\\ 1/13\alpha^4\end{pmatrix},
\begin{pmatrix}1/13\alpha^4 \\0 \\3/13\alpha^4\\ 1/13\alpha^4\\ 1/13\alpha^4\end{pmatrix},
\begin{pmatrix}1/13\alpha^4\\ 2/13\alpha^4\\ 1/13\alpha^4\\ 1/13\alpha^4\\ 1/13\alpha^4\end{pmatrix},\right.\\&
\begin{pmatrix}4/13\\ 5/13 \\4/13\\ 2/13 \\1/13\end{pmatrix},
\begin{pmatrix}1/13\alpha^5 \\0 \\0 \\3/13\alpha^5 \\1/13\alpha^5\end{pmatrix},
\begin{pmatrix}1/13\alpha^5 \\0 \\2/13\alpha^5 \\1/13\alpha^5\\ 1/13\alpha^5\end{pmatrix},
\begin{pmatrix}1/13\alpha^5 \\1/13\alpha^5\\ 1/13\alpha^5\\ 1/13\alpha^5 1/13\alpha^5\end{pmatrix},
\begin{pmatrix}1/13\alpha^6 \\0 \\0 \\2/13\alpha^6\\ 1/13\alpha^6\end{pmatrix},\\&\left.
\begin{pmatrix}1/13\alpha^6 \\0 \\1/13\alpha^6 1/13\alpha^6\\ 1/13\alpha^6\end{pmatrix},
\begin{pmatrix}2/13\alpha^4 \\1/13\alpha^4\\ 1/13\alpha^4\\ 1/13\alpha^4 1/13\alpha^4\end{pmatrix},
\begin{pmatrix}1/13\alpha^7 \\0 \\0 \\1/13\alpha^7\\ 1/13\alpha^7\end{pmatrix},
\begin{pmatrix}1/13\alpha^8 \\0 \\0 \\0 \\1/13\alpha^8 \end{pmatrix}\right\}
\end{align*}

\subsection{Maximal matchings}\label{maxmatch}
Let  $\alpha =\left(\left(\frac{11+\sqrt85}{2}\right)^\frac{n}{7}\right)\approx  1.391664$, then the set of vertices used in Proposition \ref{simplematchings} is
\begin{align*}
&\left\{\begin{pmatrix}0 \\   - 1/9\alpha^{13} + 11/9\alpha^6 \\  0 \\   - 1/9\alpha^{13} + 11/9\alpha^6 \end{pmatrix},
\begin{pmatrix} - 4/765\alpha^{10} + 107/765\alpha^3 \\  5/153\alpha^{10} - 19/153\alpha^3 \\  0 \\  5/153\alpha^{10} - 19/153\alpha^3 \end{pmatrix},
\begin{pmatrix} - 1/9\alpha^{10} + 11/9\alpha^3 \\   - 2/9\alpha^{10} + 22/9\alpha^3 \\  0 \\   - 2/9\alpha^{10} + 22/9\alpha^3 \end{pmatrix}\right.,\\&
\begin{pmatrix} - 2/9\alpha^9 + 22/9\alpha^2 \\   - 1/9\alpha^9 + 11/9\alpha^2 \\   - 2/9\alpha^9 + 22/9\alpha^2 \\   - 1/3\alpha^9 + 11/3\alpha^2 \end{pmatrix},
\begin{pmatrix}5/153\alpha^9 - 19/153\alpha^2 \\   - 4/765\alpha^9 + 107/765\alpha^2 \\  5/153\alpha^9 - 19/153\alpha^2 \\  7/255\alpha^9 + 4/255\alpha^2 \end{pmatrix},
\begin{pmatrix} - 1/9\alpha^{12} + 11/9\alpha^5 \\  0 \\   - 1/9\alpha^{12} + 11/9\alpha^5 \\   - 1/9\alpha^{12} + 11/9\alpha^5 \end{pmatrix},\\&
\begin{pmatrix} - 44/27\alpha^{12} + 448/27\alpha^5 \\   - 11/81\alpha^{12} + 112/81\alpha^5 \\   - 88/81\alpha^{12} + 896/81\alpha^5 \\   - 11/9\alpha^{12} + 112/9\alpha^5 \end{pmatrix},
\begin{pmatrix}29/765\alpha^{12} - 202/765\alpha^5 \\   - 107/6885\alpha^{12} + 1141/6885\alpha^5 \\  71/6885\alpha^{12} - 178/6885\alpha^5 \\   - 4/765\alpha^{12} + 107/765\alpha^5 \end{pmatrix},
\begin{pmatrix}32/2295\alpha^{12} - 46/2295\alpha^5 \\   - 8/6885\alpha^{12} + 133/6885\alpha^5 \\  71/6885\alpha^{12} - 178/6885\alpha^5 \\  7/765\alpha^{12} - 1/153\alpha^5 \end{pmatrix},\\&
\begin{pmatrix} - 5/9\alpha^8 + 55/9\alpha \\  0 \\   - 1/3\alpha^8 + 11/3\alpha \\   - 1/3\alpha^8 + 11/3\alpha \end{pmatrix},
\begin{pmatrix}46/765\alpha^8 - 83/765\alpha \\  0 \\  7/255\alpha^8 + 4/255\alpha \\  7/255\alpha^8 + 4/255\alpha \end{pmatrix},
\begin{pmatrix} - 2/9\alpha^{11} + 22/9\alpha^4 \\  0 \\   - 1/9\alpha^{11} + 11/9\alpha^4 \\   - 1/9\alpha^{11} + 11/9\alpha^4 \end{pmatrix},\\&
\begin{pmatrix}253/4335\alpha^8 - 227/4335\alpha \\  283/585225\alpha^8 + 1036/585225\alpha \\  3178/117045\alpha^8 - 616/23409\alpha \\  599/21675\alpha^8 - 532/21675\alpha \end{pmatrix},
\begin{pmatrix}5/153\alpha^{11} - 19/153\alpha^4 \\  0 \\   - 4/765\alpha^{11} + 107/765\alpha^4 \\   - 4/765\alpha^{11} + 107/765\alpha^4 \end{pmatrix},
\begin{pmatrix}53/2295\alpha^{11} - 61/2295\alpha^4 \\  0 \\  7/765\alpha^{11} - 1/153\alpha^4 \\  7/765\alpha^{11} - 1/153\alpha^4 \end{pmatrix},\\&\left.
\begin{pmatrix} - 8/9\alpha^7 + 88/9 \\  0 \\   - 1/3\alpha^7 + 11/3 \\   - 1/3\alpha^7 + 11/3 \end{pmatrix},
\begin{pmatrix}67/765\alpha^7 - 71/765 \\  0 \\  7/255\alpha^7 + 4/255 \\  7/255\alpha^7 + 4/255 \end{pmatrix},
\begin{pmatrix} - 1/3\alpha^{10} + 11/3\alpha^3 \\  0 \\   - 1/9\alpha^{10} + 11/9\alpha^3 \\   - 1/9\alpha^{10} + 11/9\alpha^3 \end{pmatrix}\right\}
\end{align*}

\section{The bilinear system of maximal irredundant sets}\label{annex_irr_sets}
The initial vector, final vector and the bilinear map corresponding to maximal irredundant sets that are used in Proposition \ref{Max_irr_sets} are respectively given by
$$
\mathbf{V_0}=\begin{pmatrix}
0 \\0 \\0 \\0 \\0 \\1 \\0 \\0 \\0 \\0 \\0 \\0 \\0 \\0 \\1 \\0 \\0 \\0 \\0 \\0 
\end{pmatrix}, \mathbf{F}=
\begin{pmatrix}
  1 \\1 \\1 \\1 \\1 \\1 \\1 \\1 \\0 \\0 \\0 \\0 \\0 \\0 \\0 \\0 \\0 \\0 \\0 \\0
\end{pmatrix}\text{ and for all } u,v\in\mathbb{R}^{20},
$$

$$
\resizebox{\hsize}{!}{$
\mathbf{B}(x,y)=\begin{pmatrix}
 u_{1}v_{1}+ u_{1}v_{2}+ u_{1}v_{3}+ u_{1}v_{4}+ u_{1}v_{5}+ u_{1}v_{6}+ u_{1}v_{7}+ u_{1}v_{8}+ u_{2}v_{5}+ u_{2}v_{6}+ u_{2}v_{8}+ u_{7}v_{3}+ u_{7}v_{6}+ u_{9}v_{6}+ u_{9}v_{8}+ u_{10}v_{3}+ u_{10}v_{4}+ u_{10}v_{6}+ u_{10}v_{8}+ u_{11}v_{5}+ u_{11}v_{6}+ u_{11}v_{8}+ u_{12}v_{6}+ u_{12}v_{8}+ u_{15}v_{6}+ u_{16}v_{6}+ u_{17}v_{6}+ u_{18}v_{3}+ u_{18}v_{6}+ u_{19}v_{6}\\
  u_{2}v_{1}+ u_{2}v_{2}+ u_{2}v_{3}+ u_{2}v_{4}+ u_{9}v_{3}+ u_{9}v_{4}+ u_{15}v_{3}+ u_{17}v_{3}\\
  u_{3}v_{1}+ u_{3}v_{2}+ u_{3}v_{3}+ u_{3}v_{4}+ u_{3}v_{7}+ u_{3}v_{9}+ u_{3}v_{10}+ u_{3}v_{13}+ u_{3}v_{15}+ u_{3}v_{17}+ u_{3}v_{18}+ u_{6}v_{7}+ u_{6}v_{15}+ u_{6}v_{17}+ u_{6}v_{18}+ u_{14}v_{7}+ u_{14}v_{15}+ u_{14}v_{17}+ u_{14}v_{18}\\
  u_{4}v_{1}+ u_{4}v_{2}+ u_{4}v_{3}+ u_{4}v_{4}+ u_{4}v_{9}+ u_{4}v_{10}+ u_{4}v_{13}+ u_{6}v_{16}+ u_{6}v_{19}+ u_{13}v_{3}+ u_{13}v_{4}+ u_{13}v_{13}+ u_{14}v_{16}+ u_{14}v_{19}+ u_{14}v_{20}\\
  u_{5}v_{1}+ u_{5}v_{2}+ u_{5}v_{5}+ u_{5}v_{7}+ u_{5}v_{11}+ u_{5}v_{15}+ u_{5}v_{16}+ u_{7}v_{14}+ u_{15}v_{14}+ u_{16}v_{14}+ u_{17}v_{14}+ u_{18}v_{14}+ u_{19}v_{14}+ u_{20}v_{14}\\
  u_{6}v_{1}+ u_{6}v_{2}+ u_{6}v_{9}+ u_{6}v_{10}\\
  u_{7}v_{1}+ u_{7}v_{5}+ u_{7}v_{7}+ u_{15}v_{5}+ u_{16}v_{5}\\
  u_{6}v_{11}+ u_{6}v_{12}+ u_{8}v_{1}+ u_{8}v_{2}+ u_{8}v_{9}+ u_{8}v_{10}+ u_{8}v_{11}+ u_{8}v_{12}\\
  u_{9}v_{1}+ u_{9}v_{2}+ u_{15}v_{4}+ u_{17}v_{4}\\
  u_{7}v_{4}+ u_{7}v_{8}+ u_{9}v_{5}+ u_{10}v_{1}+ u_{10}v_{2}+ u_{10}v_{5}+ u_{10}v_{7}+ u_{12}v_{5}+ u_{15}v_{8}+ u_{16}v_{8}+ u_{17}v_{8}+ u_{18}v_{4}+ u_{18}v_{8}+ u_{19}v_{8}\\
  u_{2}v_{7}+ u_{11}v_{1}+ u_{11}v_{2}+ u_{11}v_{3}+ u_{11}v_{4}+ u_{11}v_{7}+ u_{12}v_{3}+ u_{12}v_{4}+ u_{16}v_{3}+ u_{19}v_{3}\\
  u_{9}v_{7}+ u_{12}v_{1}+ u_{12}v_{2}+ u_{12}v_{7}+ u_{16}v_{4}+ u_{19}v_{4}\\
  u_{6}v_{20}+ u_{13}v_{1}+ u_{13}v_{2}+ u_{13}v_{9}+ u_{13}v_{10}\\
  u_{6}v_{3}+ u_{6}v_{4}+ u_{6}v_{13}+ u_{14}v_{1}+ u_{14}v_{2}+ u_{14}v_{3}+ u_{14}v_{4}+ u_{14}v_{9}+ u_{14}v_{10}+ u_{14}v_{13}\\
  u_{15}v_{1}\\
  u_{15}v_{7}+ u_{16}v_{1}+ u_{16}v_{7}\\
  u_{15}v_{2}+ u_{17}v_{1}+ u_{17}v_{2}\\
  u_{7}v_{2}+ u_{17}v_{5}+ u_{18}v_{1}+ u_{18}v_{2}+ u_{18}v_{5}+ u_{18}v_{7}+ u_{19}v_{5}\\
  u_{16}v_{2}+ u_{17}v_{7}+ u_{19}v_{1}+ u_{19}v_{2}+ u_{19}v_{7}\\
  u_{7}v_{11}+ u_{7}v_{15}+ u_{7}v_{16}+ u_{15}v_{11}+ u_{15}v_{15}+ u_{15}v_{16}+ u_{16}v_{11}+ u_{16}v_{15}+ u_{16}v_{16}+ u_{17}v_{11}+ u_{17}v_{15}+ u_{17}v_{16}+ u_{18}v_{11}+ u_{18}v_{15}+ u_{18}v_{16}+ u_{19}v_{11}+ u_{19}v_{15}+ u_{19}v_{16}+ u_{20}v_{1}+ u_{20}v_{2}+ u_{20}v_{5}+ u_{20}v_{7}+ u_{20}v_{11}+ u_{20}v_{15}+ u_{20}v_{16}\\
\end{pmatrix}
 $ }
$$

\end{document}